\newtheorem*{rep@theorem}{\rep@title}
\newcommand{\newreptheorem}[2]{%
\newenvironment{rep#1}[1]{%
 \def\rep@title{#2 \ref{##1}}%
 \begin{rep@theorem}}%
 {\end{rep@theorem}}}
\theoremstyle{remark}
\newtheorem*{remark}{Remark}
\begin{document}

\title[What Happens-After the First Race?]{What Happens-After the First Race?}         
\subtitle{Enhancing the Predictive Power of Happens-Before Based Dynamic Race Detection}                     


\author{Umang Mathur}
\orcid{0000-0002-7610-0660}             
\affiliation{
  \department{Department of Computer Science}              
  \institution{University of Illinois, Urbana Champaign}            
  \country{USA}                    
}
\email{umathur3@illinois.edu}          

\author{Dileep Kini}
\affiliation{
  \institution{Akuna Capital LLC}            
  \country{USA}                    
}
\email{dileeprkini@gmail.com}          

\author{Mahesh Viswanathan}
\affiliation{
  \department{Department of Computer Science}              
  \institution{University of Illinois, Urbana Champaign}            
  \country{USA}                    
}
\email{vmahesh@illinois.edu} 

\begin{abstract}

Dynamic race detection is the problem of determining if an observed
program execution reveals the presence of a data race in a
program. The classical approach to solving this problem is to detect
if there is a pair of conflicting memory accesses that are unordered
by Lamport's happens-before (HB) relation. 
HB based race detection is known to not report false positives, i.e.,
it is sound.
However, the soundness guarantee of HB only promises 
that the first pair of unordered, conflicting events is a 
\emph{schedulable} data race. That is, there
can be pairs of HB-unordered conflicting data accesses that are
not schedulable races because there is no reordering of
the events of the execution, where the events in race can be executed
immediately after each other. We introduce a new partial order, called
schedulable happens-before (SHB) that exactly characterizes the pairs of
schedulable data races --- every pair of conflicting data accesses
that are identified by SHB can be scheduled, and every HB-race that can be
scheduled is identified by SHB. 
Thus, the SHB partial order is truly sound.
We present
a linear time, vector clock algorithm to detect schedulable races
using SHB. 
Our experiments demonstrate the value of our algorithm for dynamic race
detection --- SHB incurs only little performance overhead and can scale to
executions from real-world software applications without
compromising soundness.
\end{abstract}

\begin{CCSXML}
<ccs2012>
<concept>
<concept_id>10011007.10011006.10011008</concept_id>
<concept_desc>Software and its engineering~General programming languages</concept_desc>
<concept_significance>500</concept_significance>
</concept>
<concept>
<concept_id>10003456.10003457.10003521.10003525</concept_id>
<concept_desc>Social and professional topics~History of programming languages</concept_desc>
<concept_significance>300</concept_significance>
</concept>
</ccs2012>
\end{CCSXML}

\ccsdesc[500]{Software and its engineering~General programming languages}
\ccsdesc[300]{Social and professional topics~History of programming languages}

\keywords{Concurrency, Race Detection, Dynamic Program Analysis, Soundness, Happens-Before}  

\maketitle

\section{Introduction}
\label{sec:intro}

The presence of data races in concurrent software is the most common
indication of a programming error. Data races in programs can result
in nondeterministic behavior that can have unintended
consequences. Further, manual debugging of such errors 
is prohibitively difficult owing to nondeterminism.
Therefore, automated detection and elimination of data races is an
important problem that has received widespread attention from the
research community. 
Dynamic race detection techniques examine a single execution of a
concurrent program to discover a data race in the program.
In this paper we focus on dynamic race detection.

Dynamic race detection may either be sound or unsound. Unsound
techniques, like lockset based methods~\cite{savage1997eraser}, have
low overhead but they report potential races that are spurious. Sound
techniques~\cite{lamport1978time,Mattern1988,Said2011,rv2014,cp2012,wcp2017},
on the other hand, never report the presence of a data race, if none
exist. The most popular, sound technique is based on computing the
\emph{happens-before} (HB) partial order~\cite{lamport1978time} on the
events of the trace, and declares a data race when there is a pair of
conflicting events (reads/writes to a common memory location performed
by different threads, at least one of which is a write operation) that
are unordered by the partial order. There are two reasons for the
popularity of the HB technique. First, because it is sound, it does
not report false positives. Low false positive rates are critical for
the wide-spread use of debugging techniques~\cite{threadsanitizer,developersRace14}. Second, even
though HB-based algorithms may miss races detected by other sound
techniques~\cite{Said2011,rv2014,cp2012,wcp2017}, they have the lowest
overhead among sound techniques. Many
improvements~\cite{Pozniansky:2003:EOD:966049.781529,fasttrack,elmas2007goldilocks}
to the original vector clock algorithm~\cite{Mattern1988} have helped
reduce the overhead even further.


\begin{figure}[t]
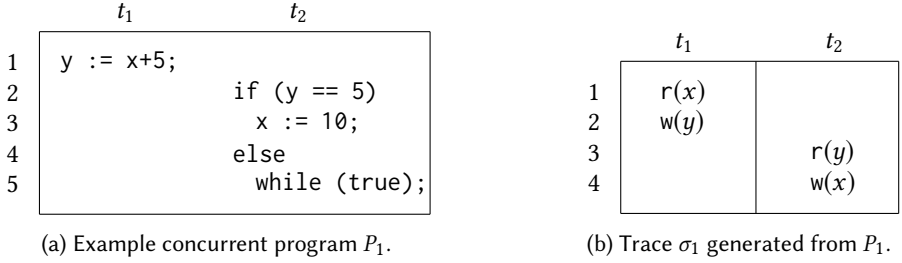

\centering
\begin{subfigure}{.5\textwidth}
  \centering
  \figprogram{2}{
    \figstmt{1}{0}{y := x+5;}
    \figstmt{2}{0}{if (y == 5)}
    \figstmt{2}{1}{x := 10;}
    \figstmt{2}{0}{else}
    \figstmt{2}{1}{while (true);}
  }
  \caption{Example concurrent program $P_1$.}
  \label{fig:program1}
\end{subfigure}%
\begin{subfigure}{.5\textwidth}
  \centering
  \vspace{0.15in}
  \execution{2}{
    \figev{1}{\rd(x)}
    \figev{1}{\wt(y)}
    \figev{2}{\rd(y)}
    \figev{2}{\wt(x)}
  }
  \caption{Trace $\tr_1$ generated from $P_1$.}
  \label{fig:trace1}
\end{subfigure}
\caption{Concurrent program $P_1$ and its sample execution
  $\tr_1$. Initially $x = y = 0$.}
\label{fig:example1}
\end{figure}

However, HB-based dynamic analysis tools suffer from some
drawbacks. Recall that a program has a data race, if there is some
execution of the program where a pair of conflicting data accesses are performed
consecutively. Even though HB is a sound technique, its soundness
guarantee is only limited to the \emph{first} pair of unordered
conflicting events; a formal definition of ``first'' unordered pair is
given later in the paper. Thus, a trace may have many HB-unordered
pairs of conflicting events (popularly called \emph{HB-races}) that do
not correspond to data races. To see this, consider the example
program and trace shown in Fig.~\ref{fig:example1}. The trace
corresponds to first executing the statement of thread $t_1$, before
executing the statements of thread $t_2$.  The statement $\texttt{y :=
x + 5}$ requires first reading the value of $x$ (which is $0$) and
then writing to $y$. Recall that HB orders (i) two events performed by
the same thread, and (ii) synchronization events performed by
different threads, in the order in which they appear in the
trace. Using $e_i$ to denote the $i$th event of the trace, in this
trace since there are no synchronization events, both $(e_1,e_4)$ and
$(e_2,e_3)$ are in HB race. Observe that while $e_2$ and $e_3$ can
appear consecutively in a trace (as in Fig.~\ref{fig:trace1}), there
is no trace of the program where $e_1$ and $e_4$ appear
consecutively. Thus, even though the events $e_1$ and $e_4$ are
unordered by HB,
they do not constitute a data race.


\begin{figure}[h]
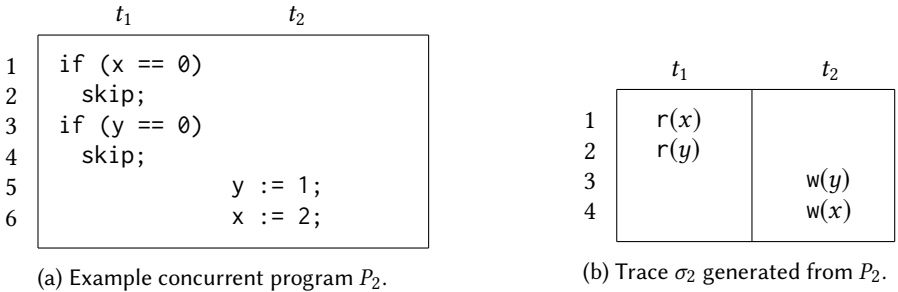

\centering
\begin{subfigure}{.5\textwidth}
  \centering
  \figprogram{2}{
    \figstmt{1}{0}{if (x == 0)}
    \figstmt{1}{1}{skip;}
    \figstmt{1}{0}{if (y == 0)}
    \figstmt{1}{1}{skip;}
    \figstmt{2}{0}{y := 1;}
    \figstmt{2}{0}{x := 2;}
  }
  \caption{Example concurrent program $P_2$.}
  \label{fig:program2}
\end{subfigure}%
\begin{subfigure}{.5\textwidth}
  \centering
  \vspace{0.25in}
  \execution{2}{
    \figev{1}{\rd(x)}
    \figev{1}{\rd(y)}
    \figev{2}{\wt(y)}
    \figev{2}{\wt(x)}
  }
  \caption{Trace $\tr_2$ generated from $P_2$.}
  \label{fig:trace2}
\end{subfigure}
\caption{Concurrent program $P_2$ and its sample execution
  $\tr_2$. Initially $x = y = 0$.}
\label{fig:example2}
\end{figure}

As a consequence, developers typically fix the first race discovered,
re-run the program and the dynamic race detection algorithm, and
repeat the process until no races are discovered. This approach to bug
fixing suffers from many disadvantages. First, running race detection
algorithms can be expensive~\cite{developersRace14}, and so running
them many times is a significant overhead. Second, even though only
the first HB race is guaranteed to be a real race, it doesn't mean
that it is the \emph{only} HB race that is real. Consider the example
shown in Fig.~\ref{fig:example2}. In the trace $\tr_2$ (shown in
Fig.~\ref{fig:trace2}), both pairs $(e_1,e_4)$ and $(e_2,e_3)$ are in
HB-race. $\tr_2$ demonstrates that $(e_2,e_3)$ is a valid data race
(because they are scheduled consecutively). But $(e_1,e_4)$ is also a
valid data race. This can be seen by first executing $\texttt{y :=
1;}$ in thread $t_2$, followed by $\texttt{if (x == 0) skip;}$ in
thread $t_1$, and then finally $\texttt{x := 2;}$ in $t_2$. The
approach of fixing the first race, and then re-executing and
performing race detection, not only unnecessarily ignores the race
$(e_1,e_4)$, but it might miss it completely because $(e_1,e_4)$ might
not show up as a HB race in the next execution due to the inherent
nondeterminism when executing multi-threaded programs.  As a result,
most practical race detection tools including
ThreadSanitizer~\cite{threadsanitizer}, Helgrind~\cite{helgrind}
and \fasttrack~\cite{fasttrack} report more than one race, even if
those races are likely to be false, to give software developers the
opportunity to fix more than just the first race.
In Appendix~\ref{app:false_races}, we illustrate
this observation on four practical dynamic race detection tools
based on the happens-before partial order. Each of these tools
resort to  na\"ively reporting races beyond the first race and produce
false positives as a result.

The central question we would like to explore in this paper is, can we
detect multiple races in a given trace, soundly? One approach would be
to mimic the software developer's strategy in using HB-race detectors
--- every time a race is discovered, force an order between the two
events constituting the race and then analyze the subsequent
events. This ensures that the HB soundness theorem then applies to
the \emph{next} race discovered, and so on. Such an algorithm can be
proved to only discover valid data races. For example, in trace
$\tr_1$ (Fig.~\ref{fig:example1}), after discovering the race
$(e_2,e_3)$ assume that the events $e_2$ and $e_3$ are ordered when
analyzing events after $e_3$ in the trace. By this algorithm, when we
process event $e_4$, we will conclude that $(e_1,e_4)$ are not in race
because $e_1$ comes before $e_2$, $e_2$ has been force ordered before
$e_3$, and $e_3$ is before $e_4$, and so $e_1$ is ordered before
$e_4$. However, force ordering will miss valid data races present in
the trace. Consider the trace $\tr_2$ from
Fig.~\ref{fig:example2}. Here the force ordering algorithm will only
discover the race $(e_2,e_3)$ and will miss $(e_1,e_4)$ which is a
valid data race. Another approach~\cite{rv2014}, is to search for a
reordering of the events in the trace that respects the data
dependencies amongst the read and write events, and the effect of
synchronization events like lock acquires and releases. Here one
encodes the event dependencies as logical constraints, where the
correct reordering of events corresponds to a satisfying truth
assignment. The downside of this approach is that the SAT formula
encoding event dependencies can be huge even for a trace with a few
thousand events. Typically, to avoid the prohibitive cost of
determining the satisfiability of such a large formula, the trace is
broken up into small ``windows'', and the formula only encodes the
dependencies of events within a window. In addition, solver timeouts
are added to give up the search for another reordering. As a
consequence this approach can miss many data races in practice (see
our experimental evaluation in Section~\ref{sec:exp}).

In this paper, we present a new partial order on events in an
execution that we call \emph{schedulable happens-before} (SHB) to
address these challenges. Unlike recent attempts~\cite{cp2012,wcp2017}
to \emph{weaken} HB to discover more races, SHB is
a \emph{strengthening} of HB --- some HB unordered events, will be
ordered by SHB. However, the first HB race (which is guaranteed to be
a real data race by the soundness theorem for HB) will also be SHB
unordered. Further, every race detected using SHB is a valid,
schedulable race. In addition, we prove that, not only does SHB
discover every race found by the na\"{i}ve force ordering algorithm
and more (for example, SHB will discover both races in
Fig.~\ref{fig:example2}), it will detect \emph{all HB-schedulable}
races. The fact that SHB detects precisely the set of HB-schedulable
races, we hope, will make it popular among software developers because
of its enhanced predictive power per trace and the absence of false
positives.

We then present a simple vector clock based algorithm for detecting
all SHB races. Because the algorithm is very close to the usual HB
vector clock algorithm, it has a low overhead. We also show how to
adapt existing improvements to the HB algorithm, like the use
of \emph{epochs}~\cite{fasttrack}, into the SHB algorithm to lower
overhead. We believe that existing HB-based detectors can be easily
modified to leverage the greater power of SHB-based analysis. 
%
We have implemented our SHB algorithm and analyzed its performance on
standard benchmarks. Our experiments demonstrate that (a) many HB
unordered conflicting events may not be valid data races, (b) there
are many valid races missed by the na\"{i}ve force ordering algorithm,
(c) SHB based analysis poses only a little overhead as compared to HB
based vector clock algorithm, and (d) improvements like the use of
epochs, are effective in enhancing the performance of SHB analysis.


The rest of the paper is organized as follows:
	Section~\ref{sec:prelim} introduces notations and definitions relevant for the paper.
	In Section~\ref{sec:shb}, we introduce the partial order SHB and present an exact 
	characterization of schedulable races using this partial order.
	In Section~\ref{sec:algo}, we describe a vector clock algorithm for detecting
	schedulable races based on SHB. We then show how to incorporate epoch-based
	optimizations to this vector clock algorithm.
	Section~\ref{sec:exp} describes our experimental evaluation.
	We discuss relevant related work in Section~\ref{sec:related} and present
	concluding remarks in Section~\ref{sec:conclusion}.

\section{Preliminaries}
\label{sec:prelim}
In this section, we will fix notation and present some definitions
that will be used in this paper.

\vspace*{0.1in}
\noindent
{\bf Traces.}  We consider concurrent programs under the sequential
consistency model.  Here, an execution, or trace, of a program is
viewed as an interleaving of operations performed by different
threads.  We will use $\tr$, $\tr'$ and $\tr''$ to denote traces.  For
a trace $\tr$, we will use $\threads{\tr}$ to denote the set of
threads in $\tr$. A trace is a sequence of events of the form $e
= \ev{t, op}$, where $t \in \threads{\tr}$, and $op$ can be one of
$\rd(x)$, $\wt(x)$ (read or write to memory location $x$),
$\acq(\lk)$, $\rel(\lk)$ (acquire or release of lock $\lk$) and
$\fork(u)$, $\join(u)$ (fork or join to some thread
$u$)~\footnote{Formally, each event in a trace is assumed to have a
unique event id. Thus, two occurences of a thread performing the same
operation will be considered \emph{different} events. Even though we
will implicitly assume the uniqueness of each event in a trace, to
reduce notational overhead, we do not formally introduce event ids.}.
To keep the presentation simple, we assume that locks are not
reentrant. However, all the results can be extended to the case when
locks are assumed to be reentrant. The set of events in trace $\tr$
will be denoted by $\events{\tr}$.  We will also use $\reads{\tr}(x)$
(resp. $\writes{\tr}(x)$) to denote the set of events that read
(resp. write) to memory location $x$.  Further $\reads{\tr}$
(resp. $\writes{\tr}$) denotes the union of the above sets over all
memory locations $x$. For an event $e \in \reads{\tr}(x)$,
the \emph{last write before} $e$ is the (unique) event
$e' \in \writes{\tr}(x)$ such that $e'$ appears before $e$ in the
trace $\tr$, and there is no event $e'' \in \writes{\tr}(x)$ between
$e'$ and $e$ in $\tr$. The last write before event
$e \in \reads{\tr}(x)$ maybe undefined, if there is no $\wt(x)$-event
before $e$. We denote the last write before $e$ by $\lw{\tr}(e)$. An
event $e = \ev{t_1, op}$ is said to be \emph{an event of thread $t$}
if either $t = t_1$ or
$op \in \{\fork(t), \join(t)\}$. The \emph{projection} of a trace
$\tr$ to a thread $t \in \threads{\tr}$ is the maximal subsequence of
$\tr$ that contains only events of thread $t$, and will be denoted by
$\proj{\tr}{t}$; thus an event $e = \ev{t, \fork(t')}$ (or $e
= \ev{t, \join(t')}$) belongs to both $\proj{\tr}{t}$ and
$\proj{\tr}{t'}$.  For an event $e$ of thread $t$, we denote by
$\ltho{\tr}(e)$ to be the last event $e'$ before $e$ in $\tr$ such
that $e$ and $e'$ are events of the same thread. Again,
$\ltho{\tr}(e)$ may be undefined for an event $e$.  The projection of
$\tr$ to a lock $\lk$, denoted by $\proj{\tr}{\lk}$, is the maximal
subsequence of $\tr$ that contains only acquire and release events of
lock $\lk$.  Traces are assumed to be well formed --- for every lock
$\lk$, $\proj{\tr}{\lk}$ is a prefix of some string belonging to the
regular language
$(\cup_{t \in \threads{\tr}} \ev{t, \acq(\lk)} \cdot \ev{t, \rel(\lk)}
)^*$.

\begin{figure}
\centering
\execution{4}{
  \figev{1}{\acq(\lk)}
  \figev{1}{\wt(x)}
  \figev{1}{\rel(\lk)}
  \figev{2}{\acq(\lk)}
  \figev{2}{\wt(x)}
  \figev{2}{\rel(\lk)}
  \figev{3}{\rd(x)}
  \figev{3}{\fork(t_4)}
  \figev{4}{\wt(x)}
  \figev{4}{\wt(x)}
  \figev{3}{\join(t_4)}
  \figev{3}{\rd(x)}
}
\caption{Trace $\tr_3$.}
\label{fig:example3}
\end{figure}

\begin{example}
\label{ex:trace-defn}
Let us illustrate the definitions and notations about traces
introduced in the previous paragraph. Consider the trace $\tr_3$ shown
in Fig.~\ref{fig:example3}. As in the introduction, we will refer to
the $i$th event in the trace by $e_i$. For trace $\tr_3$ we have ---
$\events{\tr_3} = \{e_1,e_2,\ldots e_{12}\}$; $\reads{\tr_3}
= \reads{\tr_3}(x) = \{e_7,e_{12}\}$; $\writes{\tr_3}
= \writes{\tr_3}(x) = \{e_2,e_5,e_9,e_{10}\}$. The last write of the
read events is as follows: $\lw{\tr_3}(e_7) = e_5$ and
$\lw{\tr_3}(e_{12}) = e_{10}$. The projection with respect to lock
$\lk$ is $\proj{\tr_3}{\lk} = e_1e_3e_4e_6$. The definition of
projection to a thread is subtle in the presence of forks and
joins. This can be seen by observing that $\proj{\tr_3}{t_4} =
e_8e_9e_{10}e_{11}$; this is because the fork event $e_8$ and the join
event $e_{11}$ are considered to be events of both threads $t_3$ and
$t_4$ by our definition. Finally, we illustrate $\ltho{\tr_3}(\cdot)$
through a few examples --- $\ltho{\tr_3}(e_2) = e_1$,
$\ltho{\tr_3}(e_7)$ is undefined, $\ltho{\tr_3}(e_9) = e_8$, and
$\ltho{\tr_3}(e_{11}) = e_{10}$. The cases of $e_9$ and $e_{11}$ are
the most interesting, and they follow from the fact that both $e_8$
and $e_{11}$ are also considered to be events of $t_4$.
\end{example}

\vspace*{0.1in}
\noindent
{\bf Orders.}  A given trace $\tr$ induces several total and partial
orders.  The total order
$\trord{\tr} \subseteq \events{\tr} \times \events{\tr}$, will be used
to denote the \emph{trace-order} --- $e \trord{\tr} e'$ iff either $e
= e'$ or $e$ appears before $e'$ in the sequence $\tr$.  Similarly,
the \emph{thread-order} is the smallest partial order
$\tho{\tr} \subseteq \events{\tr} \times \events{\tr}$ such that for
all pairs of events $e \trord{\tr} e'$ performed by the same thread,
we have $e \tho{\tr} e'$.
\begin{definition}[Happens-Before]
\label{def:hb}
Given trace $\tr$, the happens-before order $\hb{\tr}$ is the smallest
partial order on $\events{\tr}$ such that 
\begin{enumerate}[label=(\alph*)]
\item $\tho{\tr} \subseteq \hb{\tr}$, 
\item for every pair of events $e = \ev{t, \rel(\lk)}$, and,
$e' = \ev{t', \acq(\lk)}$ with
$e \trord{\tr} e'$, we have $e \hb{\tr} e'$
%
%
\end{enumerate}
\end{definition}

\begin{example}
\label{ex:hb}
We illustrate the definitions of $\trord{}, \tho{}$, and $\hb{}$ using
trace $\tr_3$ from Fig.~\ref{fig:example3}. Trace order is the
simplest; $e_i \trord{\tr_3} e_j$ iff $i \leq j$. Thread order is also
straightforward in most cases; the interesting cases of
$e_{8} \tho{\tr_3} e_9$ and $e_{10} \tho{\tr_3} e_{11}$ follow from
the fact that $e_8$ and $e_{11}$ are events of both threads $t_3$ and
$t_4$. Finally, let us consider $\hb{\tr_3}$. It is worth observing
that $e_7 \hb{\tr_3} e_9 \hb{\tr_3} e_{10} \hb{tr_3} e_{12}$ simply
because these events are thread ordered due to the fact that $e_8$ and
$e_{11}$ are events of both thread $t_3$ and $t_4$. In addition,
$e_2 \hb{\tr_3} e_5$ because $e_3 \hb{\tr_3} e_4$ by rule (b),
$e_2 \tho{\tr_3} e_3$ and $e_4 \tho{\tr_3} e_5$, and $\hb{\tr_3}$ is
transitive.
\end{example}

\vspace*{0.1in}
\noindent
{\bf Trace Reorderings.} Any trace of a concurrent program represents
one possible interleaving of concurrent events. The notion
of \emph{correct reordering}~\cite{cp2012,wcp2017} of trace $\tr$ identifies
all these other possible interleavings of $\tr$. In other words, if
$\tr'$ is a correct reordering of $\tr$ then any program that produces
$\tr$ may also produce $\tr'$. The definition of correct reordering is
given purely in terms of the trace $\tr$ and is agnostic of the
program that produced it. We give the formal definition below.
%
\begin{definition}[Correct reordering]
\label{def:correct-reorder}
A trace $\tr'$ is said to be a correct reordering of
a trace $\tr$ if
\begin{enumerate}[label=(\alph*)]
\item $\forall t \in \threads{\tr'}, \proj{\tr'}{t}$ is a prefix of 
  $\proj{\tr}{t}$, and
\item for a read event $e = \ev{t, \rd(x)} \in \events{\tr'}$
such that $e$ is not the last event in $\proj{\tr'}{t}$,
$\lw{\tr'}(e)$ exists iff $\lw{\tr}(e)$ exists.  Further, if it
exists, then $\lw{\tr'}(e) = \lw{\tr}(e)$.
\end{enumerate}
\end{definition}
The intuition behind the above definition is the following.  A correct
reordering must preserve lock semantics (ensured by the fact that
$\tr'$ is a trace) and the order of events inside a given thread
(condition (a)).
Condition (b) captures \emph{local determinism}~\cite{rv2014}. 
That is, only the previous events in a given thread determine the next
event of the thread.
Since the underlying program that generated $\tr$ can have
branch events that depend upon the data in shared memory locations,
all reads in $\tr'$, except for the last events in each thread, 
must see the same value as in $\tr$; since our
traces don't record the value written, this can be ensured by
conservatively requiring every read to see the same write event.
If the last event of any thread in $\tr'$ is a read,
we allow that this event may not see the same value (and thus
the same last write event) as in $\tr$.
For example, consider the program and trace given in
Fig.~\ref{fig:example1}. The read event $\rd(y)$ in the conditional in
thread $t_2$ cannot be swapped with the preceding event $\wt(y)$ in
thread $t_1$, because that would result in a different branch being
taken in $t_2$, and the assignment $\texttt{x := 10}$ in $t_2$ will
never be executed. However, this is required only if the read event is
not the last event of the thread in the reordering. If it is the last
event, it does not matter what value is read, because it does not
affect future behavior. 

We note that the definition of correct reordering we have is more general
than in~\cite{wcp2017,cp2012} because of the relaxed assumption about
the last-write events corresponding to read events which are
not followed by any other events in their corresponding threads.
In other words, every correct reordering $\tr'$ of a trace $\tr$
according to the definition in~\cite{cp2012,wcp2017} is also a correct reordering of $\tr$
as per Definition~\ref{def:correct-reorder}, but the converse is not true.
On the other hand, the related notion of \emph{feasible} set of traces~\cite{rv2014}
allows for an even larger set of alternate reorderings that can be inferred
from an observed trace $\tr$ by only enforcing that 
the last-write event $e'$ corresponding to a read event $e$
must write the same value that $e$ reads in $\tr$. 
In particular, $e'$ may not be the same as $\lw{\tr}(e)$.

In addition to correct reorderings, another useful collection of
alternate interleavings of a trace is as follows. Under the assumption
that $\hb{\tr}$ identifies certain causal dependencies between events
of $\tr$, we consider interleavings of $\sigma$ that are consistent
with $\hb{\tr}$.
\begin{definition}[$\hb{}$-respecting trace]
\label{def:hb-respecting}
For trace $\tr$, we say trace $\tr'$ respects $\hb{\tr}$ if for any
$e,e' \in \events{\tr}$ such that $e \hb{\tr} e'$ and
$e' \in \events{\tr'}$, we have $e \in \events{\tr'}$ and
$e \trord{\tr'} e'$. 
\end{definition}
Thus, a $\hb{\tr}$-respecting trace is one whose events are downward
closed with respect to $\hb{\tr}$ and in which $\hb{\tr}$-ordered events are not
flipped. We will be using the above notion only when the trace $\tr'$
is a reordering of $\tr$, and hence $\events{\tr'} \subseteq \events{\tr}$.

\begin{example}
\label{ex:reorderings}
We give examples of correct reorderings of $\tr_3$ shown in
Fig.~\ref{fig:example3}. The traces $\rho_1 = e_1e_2e_7$, $\rho_2 =
e_4e_5e_6$, and $\rho_3 = e_1e_2e_3e_4e_5e_7$ are all examples of
correct reorderings of $\tr_3$. Among these, the trace $\rho_2$ is not
$\hb{}$-respecting because it is not $\hb{}$-downward closed ---
events $e_1,e_2,e_3$ are all HB-before $e_4$ and none of them are in
$\rho_2$.
\end{example}

\vspace*{0.1in}
\noindent
{\bf Race.} It is useful to recall the formal definition of a data
race, and to state the soundness guarantees of happens-before. Two
data access events $e = \ev{t_1, \mathsf{a}_1(x)}$ and $e'
= \ev{t_2,\mathsf{a}_2(x)}$ are said to be \emph{conflicting} if
$t_1 \neq t_2$, and at least one among $\mathsf{a}_1$ and
$\mathsf{a}_2$ is a write event. A trace $\tr$ is said to have a race
if it is of the form $\tr = \tr' e e' \tr''$ such that $e$ and $e'$
are conflicting; here $(e,e')$ is either called a race pair or a
race. A concurrent program is said to have a race if it has an
execution that has a race.

The partial order $\hb{\tr}$ is often employed for the purpose of
detecting races by analyzing program executions. In this context, it
is useful to define what we call an HB-race. A pair of conflicting events
$(e,e')$ is said to be an \emph{HB-race} if $e \trord{\tr} e'$ and $e$
and $e'$ are incomparable with respect to $\hb{\tr}$ (i.e., neither
$e \hb{\tr} e'$ nor $e' \hb{\tr} e$). We say an HB-race $(e,e')$ is
the \emph{first HB-race} if for any other HB-race $(f,f') \neq
(e,e')$ in $\tr$, either $e' \stricttrord{\tr} f'$, or $e' = f'$ and
$f \stricttrord{\tr} e$. 
For example, the pair $(e_2, e_3)$ in trace $\tr_1$ from Fig.~\ref{fig:example1} 
is the first HB-race of $\tr_1$.
The soundness guarantee of HB says that if a
trace $\tr$ has an HB-race, then the first HB-race is a valid data
race.
\begin{theorem}[Soundness of HB]
\label{thm:hb_sound}
Let $\tr$ be a trace with an HB-race, and let $(e, e')$ be the first
HB-race.  Then, there is a correct reordering $\tr'$ of $\tr$, such
that $\tr' = \tr'' e e'$.
\end{theorem}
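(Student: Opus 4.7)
The plan is to construct the desired reordering $\tr'$ as a topological sort of the $\hb{\tr}$-downward closure of $\{e,e'\}$, with $e$ placed second-to-last and $e'$ last, and then verify each clause of Definition~\ref{def:correct-reorder}. Concretely, I would define
\[
D = \{f \in \events{\tr} : f \hb{\tr} e \text{ or } f \hb{\tr} e'\},
\]
which contains both $e$ and $e'$. Since $\hb{\tr}$ is transitive, $D$ is downward closed under $\hb{\tr}$, and since $\tho{\tr} \subseteq \hb{\tr}$, the restriction of $D$ to any thread is a prefix of $\proj{\tr}{t}$.

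Next I would argue that $e$ is a maximal element of $D \setminus \{e'\}$ under $\hb{\tr}$ and $e'$ is maximal in $D$: any $f \in D$ with $e \hb{\tr} f$ would force either $e = f$ or (via $f \hb{\tr} e'$) the impossible relation $e \hb{\tr} e'$. Hence $D$ admits a linear extension $\tr'$ of $\hb{\tr}|_D$ in which $e$ occupies the penultimate position and $e'$ the last. This $\tr'$ is $\hb{\tr}$-respecting by construction, so condition (a) of Definition~\ref{def:correct-reorder} follows from the paragraph above, and well-formedness with respect to locks follows because any two acquires $\ev{t_1,\acq(\lk)}$ and $\ev{t_2,\acq(\lk)}$ in $D$ are separated in $\tr$ by a release that is thread-ordered after the first acquire and, via the release-acquire edge, $\hb{\tr}$-before the second; that release therefore lies in $D$ and appears between them in $\tr'$.

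The heart of the argument, and the main obstacle, is verifying the last-write clause (b), and this is where the ``first HB-race'' hypothesis does all the work. Fix a read $f = \ev{t,\rd(x)} \in \events{\tr'}$ that is not the last event of $\proj{\tr'}{t}$. Because $e,e'$ are the last two events of $\tr'$ in different threads, $f \notin \{e,e'\}$, so $f \stricttrord{\tr} e'$. I would first show $\lw{\tr}(f)$, when defined, lies in $D$: if $g = \lw{\tr}(f)$ were not $\hb{\tr}$-before $f$, then $(g,f)$ would be an HB-race whose second event strictly precedes $e'$ in $\tr$, contradicting the choice of $(e,e')$; hence $g \hb{\tr} f$, which gives $g \in D$ and places $g$ before $f$ in $\tr'$. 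The remaining task is to rule out any other write $h \in D \cap \writes{\tr}(x)$ sitting between $g$ and $f$ in $\tr'$. A case analysis on the position of $h$ relative to $g$ and $f$ in $\tr$ does this: the case $g \stricttrord{\tr} h \stricttrord{\tr} f$ contradicts $g = \lw{\tr}(f)$; the other cases either force $h$ to be $\hb{\tr}$-ordered with $g$ or $f$ (so that $\tr'$ places $h$ outside $[g,f]$), or else produce an HB-race whose second event strictly precedes $e'$, again contradicting the ``first HB-race'' assumption. The case $\lw{\tr}(f)$ undefined is analogous. Finally, condition (b) imposes no obligation on reads at $\{e,e'\}$ because each is the last event of its thread in $\tr'$, which is precisely why the construction works.
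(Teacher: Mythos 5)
Your proof is essentially correct, but it takes a genuinely different route from the paper's. The paper never proves Theorem~\ref{thm:hb_sound} directly: it derives it as a corollary of Theorem~\ref{thm:SHBSoundness} (the SHB characterization of schedulable races), arguing that if the first HB-race $(e,e')$ were not schedulable then $e \shb{\tr} \ltho{\tr}(e')$ would have to pass through a last-write edge $e_3 = \lw{\tr}(e_4)$ with $\neg(e_3 \hb{\tr} e_4)$, yielding an HB-race strictly earlier than $(e,e')$ --- a contradiction. The witnessing reordering in the paper is then the $\shb{\tr}$-downward closure of $\{e,\ltho{\tr}(e')\}$ with $ee'$ appended, and condition (b) of Definition~\ref{def:correct-reorder} is nearly automatic there because SHB already contains every last-write edge (Lemma~\ref{lem:shb_cr_hb}). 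You instead work with the $\hb{\tr}$-downward closure alone and use the ``first HB-race'' hypothesis to show that, for every read in scope, the last-write edge happens to coincide with an HB edge --- this is the classical, self-contained soundness proof for HB, and it buys independence from the SHB machinery; what it gives up is the stronger conclusion the paper is after, namely a characterization of \emph{all} schedulable races rather than just the first one.

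One small gap in your case analysis for condition (b): when a write $h \in D \cap \writes{\tr}(x)$ satisfies $f \stricttrord{\tr} h$, lies in a different thread, and is HB-unordered with $f$, your stated dichotomy (``HB-ordered with $g$ or $f$, or else an HB-race whose second event strictly precedes $e'$'') does not cover the subcase $h = e'$: the race $(f,e')$ has second component equal to $e'$, and by the definition of first HB-race it need not contradict the minimality of $(e,e')$ when $f \stricttrord{\tr} e$. The conclusion survives anyway, because your construction explicitly places $e'$ (and $e$) at the very end of $\tr'$, hence after $f$; but this subcase should be discharged by appeal to the placement of $e,e'$ rather than to the dichotomy. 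With that one sentence added, the argument is complete.
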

Instead of sketching the proof of Theorem~\ref{thm:hb_sound}, we will
see that it follows from the main result of this paper, namely,
Theorem~\ref{thm:SHBSoundness}.

\begin{example}
\label{ex:hb-race}
We conclude this section by giving examples of HB-races. Consider
again $\tr_3$ from Fig.~\ref{fig:example3}. Among the different pairs
of conflicting events in $\tr_3$, the HB-races are $(e_2,e_7)$,
$(e_5,e_7)$, $(e_2,e_9)$, $(e_5,e_9)$, $(e_2,e_{10})$, $(e_5,e_{10})$,
$(e_2,e_{12})$, and $(e_5,e_{12})$.
\end{example}

\begin{remark}
Our model of executions and reorderings assume sequential consistency,
which is a standard model used by most race detection
tools. Executions in a more general memory model, such as 
Total Store Order (TSO),  would also have events that indicate when
 a local write was committed to the global
memory~\cite{HuangTSO16}. In that scenario,
the definition of correct reorderings would be similar, except that
``last write'' would be replaced by ``last observed write'', which
would either be the last committed write or the last write by the same
thread, whichever is later in the trace. The number of correct
reorderings to be considered would increase --- instead of just
considering executions where every write is immediately committed, as
we do here, we would also need to consider reorderings where the write
commits are delayed. However, since our results here are about proving
the existence of a reordered trace where a race is observed, they
carry over to the more general setting. We might miss race pairs that
could be shown to be in race in a weaker memory model, where more
reoderings are permitted, but the races we identify would still be
valid.
\end{remark}

\section{Characterizing Schedulable Races}
\label{sec:shb}

The example in Fig.~\ref{fig:example1} shows that not every HB-race
corresponds to an actual data race in the program. The goal of this
section is to characterize those HB-races which correspond to actual
data races. We do this by introducing a new partial order, called
schedulable happens-before, and using it to identify the actual data
races amongst the HB-races of a trace. We begin by characterizing the
HB-races that correspond to actual data races.
\begin{definition}[$\hb{\tr}$-schedulable race]
\label{def:hb-sched-race}
Let $\tr$ be a trace and let $e \trord{\tr} e'$ be conflicting events in $\tr$.
We say that $(e,e')$ is a $\hb{\tr}$-schedulable race if there is a correct
reordering $\tr'$ of $\tr$ that respects $\hb{\tr}$ and $\tr' = \tr''
ee'$ or $\tr' = \tr'' e'e$ for some trace $\tr''$. 
%
\end{definition}

Note that any $\hb{\tr}$-schedulable race is a valid data race in
$\tr$. Our aim is to characterize $\hb{\tr}$-schedulable races by means
of a new partial order. 
The new partial order, given below, is a strengthening of $\hb{}$.
\begin{definition}[Schedulable Happens-Before]
\label{def:shb}
Let $\tr$ be a trace. Schedulable happens-before, denoted by
$\shb{\tr}$, is the smallest partial order on $\events{\tr}$ such that
\begin{enumerate}[label=({\alph*})]
\item $\hb{\tr} \subseteq \shb{\tr}$
\item $\forall e,e' \in \events{\tr}, e' \in \reads{\tr} \land 
  e = \lw{\sigma}(e') \implies e \shb{\tr} e'$
\end{enumerate}
\end{definition}
The partial order $\shb{\tr}$ can be used to characterize
$\hb{\tr}$-schedulable races. We state this result, before giving
examples illustrating the definition of $\shb{\tr}$.
\begin{theorem}
\label{thm:SHBSoundness}
Let $\tr$ be a trace and $e_1 \trord{\tr} e_2$ be conflicting events in $\tr$.
$(e_1, e_2)$ is an $\hb{\tr}$-schedulable race iff 
either $\ltho{\tr}(e_2)$ is undefined, or $e_1 \not\leq^\tr_{\mathsf{SHB}} \ltho{\tr}(e_2)$.
\end{theorem}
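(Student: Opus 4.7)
The plan is to prove the two directions of the biconditional separately. For the forward direction, I would argue the contrapositive: assuming $\ltho{\tr}(e_2)$ is defined and $e_1 \shb{\tr} \ltho{\tr}(e_2)$, I derive that $(e_1, e_2)$ cannot be $\hb{\tr}$-schedulable. The key technical ingredient is a preservation lemma: if $e \shb{\tr} f$ and both events lie in a correct HB-respecting reordering $\rho$, and every read endpoint of a last-write edge in the SHB chain from $e$ to $f$ is not the last event of its thread in $\rho$, then $e \trord{\rho} f$. This lemma is proved by induction on chain length, using HB-respect on HB edges and condition (b) of correct reorderings on last-write edges. Suppose for contradiction $\tr' = \tr'' e_1 e_2$ (or $\tr'' e_2 e_1$) is a correct HB-respecting reordering; I verify the lemma's hypothesis for the chain from $e_1$ to $p := \ltho{\tr}(e_2)$: each intermediate read $r$ in the chain has an \emph{outgoing} HB edge (not another last-write edge, since reads are never last-writes), which forces a thread-successor of $r$ to be HB-before the next chain element and hence in $\rho$ by HB-downward closure; and the terminal $p$ lies in thread $t(e_2)$, whose successor $e_2$ is in $\rho$. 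The lemma thus yields $e_1 \trord{\tr'} p$. Combining with $p \hb{\tr} e_2$ gives $e_1 \trord{\tr'} p \trord{\tr'} e_2$, contradicting the adjacency of $e_1$ and $e_2$ at the end of $\tr'$ in either ordering.

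For the reverse direction I give an explicit construction. Define $E = \{e \in \events{\tr} : e \shb{\tr} e_1\} \cup \{e \in \events{\tr} : \ltho{\tr}(e_2) \text{ is defined and } e \shb{\tr} \ltho{\tr}(e_2)\}$. Using the hypothesis together with antisymmetry of $\shb{\tr}$, I first show $e_1$ is $\shb{\tr}$-maximal in $E$, which is the crucial property enabling the construction. Set $\tr'' := (\tr|_{E \setminus \{e_1\}}) \cdot e_1$ (the trace-order projection to $E \setminus \{e_1\}$ followed by $e_1$) and $\tr' := \tr'' \cdot e_2$. I then verify: (i) well-formedness with respect to locks, using that $E$ is HB-downward-closed, so every release unlocking an in-$E$ acquire by another thread is itself in $E$, while moving the data-access $e_1$ does not disturb lock events; (ii) the thread-prefix property, since $E$ is thread-downward-closed and $\ltho{\tr}(e_2)$, if defined, lies in $E$ by reflexivity; (iii) last-write preservation for non-last reads, using that for any read $e \in E \setminus \{e_1\}$ we have $\lw{\tr}(e) \shb{\tr} e$ hence $\lw{\tr}(e) \in E$, and that $\lw{\tr}(e) = e_1$ would force $e_1 \shb{\tr} e \shb{\tr} \ltho{\tr}(e_2)$, contradicting the hypothesis; (iv) HB-respect, since $\tr''$ orders $E \setminus \{e_1\}$ by trace order (which extends HB) with $e_1$ placed at the end (consistent because $e_1$ is HB-maximal in $E$), and every HB-predecessor of $e_2$ reduces to an HB-predecessor of $\ltho{\tr}(e_2) \in E$ (using that $e_2$ is a data access, so the only way to HB-reach $e_2$ is through its thread predecessor).

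The main obstacle will be the preservation lemma used in the forward direction. The subtlety lies in handling the interplay between HB edges and last-write edges in the SHB chain: condition (b) of a correct reordering only preserves the last-write relation for reads that are not last in their thread, so the whole argument hinges on showing that every read which serves as a right endpoint of a last-write edge along the specific chain from $e_1$ to $\ltho{\tr}(e_2)$ has an HB-downstream successor in $\tr'$ via thread order, and is therefore not last in its thread. Once the lemma is set up, both the forward contradiction and the correctness of the reverse construction follow cleanly.
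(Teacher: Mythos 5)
Your proposal is correct and follows essentially the same route as the paper: the forward direction via a chain-induction preservation lemma showing that $\shb{\tr}$-ordered events remain ordered in any correct $\hb{\tr}$-respecting reordering except at last-of-thread reads (the paper's Lemma~\ref{lem:cr_hb}), and the backward direction via the same $\shb{\tr}$-downward-closed set of predecessors of $e_1$ and $\ltho{\tr}(e_2)$, kept in trace order with $e_1 e_2$ appended. The only differences are presentational (you fold $e_1$ into the set and then extract it, and you state the lemma's ``not last in its thread'' hypothesis over all read endpoints of last-write edges rather than propagating it backward from the endpoint), so no further comparison is needed.
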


\begin{proof}(Sketch)
The full proof is presented in Appendix~\ref{app:shb-proof}; here we
sketch the main ideas.  We observe that if $\tr'$ is a correct
reordering of $\tr$ that also respects $\hb{\tr}$, then $\tr'$ also
respects $\shb{\tr}$ except possibly for the last events of every
thread in $\tr'$. That is, for any $e,e'$ such that $e \shb{\tr} e'$,
$e' \in \events{\tr'}$, and $e'$ is not the last event of some thread
in $\tr'$, we have $e \in \events{\tr'}$ and $e \trord{\tr'}
e'$. Therefore, if $e \shb{\tr} \ltho{\tr}(e_2)$, then any correct
reordering $\tr'$ respecting $\hb{\tr}$ that contains both $e_1$ and
$e_2$ will also have $e = \ltho{\tr}(e_2)$. Further since $e$ is not
the last event of its thread (since $e_2$ is present in $\tr'$) and
$e_1 \shb{\tr} e$, $e$ must occur between $e_1$ and $e_2$ in
$\tr'$. Therefore $(e_1,e_2)$ is not a $\hb{\tr}$-schedulable race.
The other direction can be established as follows. Let $\tr''$ be the
trace consisting of events that are $\shb{\tr}$-before $e_1$ or
$\ltho{\tr}(e_2)$ (if defined), ordered as in $\tr$. Define
$\tr'= \tr''e_1e_2$.  We prove that when $e_1$ and $e_2$ satisfy the
condition in the theorem, $\tr'$ as defined here, is a correct
reordering and also respects $\hb{\tr}$.
\end{proof}

\begin{figure}
\centering
\execution{4}{
  \figev{1}{\acq(\lk)}
  \figev{1}{\wt(x)}
  \figev{2}{\rd(x)}
  \figev{2}{\wt(y)}
  \figev{2}{\wt(x)}
  \figev{1}{\rd(x)}
  \figev{1}{\rel(\lk)}
  \figev{4}{\acq(\lk)}
  \figev{4}{\wt(z)}
  \figev{3}{\rd(z)}
  \figev{3}{\wt(y)}
  \figev{3}{\wt(z)}
  \figev{4}{\rd(z)}
  \figev{4}{\rel(\lk)}
}
\caption{Trace $\tr_4$.}
\label{fig:example4}
\end{figure}

We now illustrate the use of $\shb{}$ through some examples.
\begin{example}
In this example, we will look at different traces, and see how
$\shb{}$ reasons. Like in the introduction, we will use $e_i$ to refer
to the $i$th event of a given trace (which will be clear from
context). Let us begin by considering the example program and trace
$\tr_1$ from Fig.~\ref{fig:example1}. Notice that $\hb{\tr_1}
= \tho{\tr_1}$, and so $(e_1,e_4)$ and $(e_2,e_3)$ are
HB-races. Because $e_2 = \lw{\tr_1}(e_3)$, we have $e_1 \shb{\tr_1}
e_2 \shb{\tr_1} e_3 \shb{\tr_1} e_4$. Using
Theorem~\ref{thm:SHBSoundness}, we can conclude correctly that (a)
$(e_2,e_3)$ is $\hb{\tr_1}$-schedulable as $\ltho{\tr_1}(e_3)$ is
undefined, but (b) $(e_1,e_4)$ is not, as
$e_1 \shb{\tr} \ltho{\tr}(e_4) = e_3$.

Let us now consider trace $\tr_2$ from
Fig.~\ref{fig:example2}. Observe that $\hb{\tr_2} = \shb{\tr_2}
= \tho{\tr_2}$, and so both $(e_1,e_4)$ and $(e_2,e_3)$ are
$\hb{\tr_2}$-schedulable races by Theorem~\ref{thm:SHBSoundness}. Note
that, unlike force ordering, $\shb{\tr_2}$ correctly identifies all
real data races.

Finally, let us consider two trace examples that highlight the kind of
subtle reasoning $\shb{}$ is capable of. Let us begin with $\tr_3$
from Fig.~\ref{fig:example3}. As observed in Example~\ref{ex:hb-race},
the only HB-races in this trace are $(e_2,e_7)$, $(e_5,e_7)$,
$(e_2,e_9)$, $(e_5,e_9)$, $(e_2,e_{10})$, $(e_5,e_{10})$,
$(e_2,e_{12})$, and $(e_5,e_{12})$. Both $(e_2,e_7)$ and $(e_5,e_7)$
are $\hb{\tr_3}$-schedulable as demonstrated by the reorderings
$\rho_1$ and $\rho_3$ from Example~\ref{ex:reorderings}. However, the
remaining are not real data races. Let us consider the pairs
$(e_2,e_9)$ and $(e_5,e_9)$ for
example. Theorem~\ref{thm:SHBSoundness}'s justification for it is as
follows: $e_2 \hb{\tr_3} e_5 = \lw{\tr_3}(e_7) \tho{\tr_3} e_8
= \ltho{\tr_3}(e_9)$. But, let us unravel the reasoning behind why
neither $(e_2,e_9)$ nor $(e_5,e_9)$ are data races. Consider an
arbitary correct reordering $\tr'$ of $\tr_3$ that respects
$\hb{\tr_3}$ and contains $e_9$. Since $e_8$ is also an event of
$t_4$, $e_8 \in \events{\tr'}$. In addition, $e_7 \in \events{\tr'}$
as $e_7 \tho{\tr_3} e_8$. Now, since $e_5 = \lw{\tr_3}(e_7)$, $e_5$ is
before $e_7$ in $\tr'$ and since $e_2 \hb{\tr_3} e_5$, $e_2$ must also
be before $e_7$. Therefore, $e_7$ and $e_8$ will be between $e_2$ and
$e_9$ and between $e_5$ and $e_9$. Similar reasoning can be used to
conclude that the other pairs are not $\hb{\tr_3}$-schedulable as
well. 

Lastly, consider trace $\tr_4$ shown in Fig.~\ref{fig:example4}. In
this case, $\shb{\tr_4} = \trord{\tr_4}$. All conflicting memory
accesses are in HB-race.  While HB correctly identifies the first race
$(e_2, e_3)$ as valid, there are 3 HB-races that are not real data
races --- $(e_2,e_5)$, $(e_9,e_{12})$, and $(e_4,e_{11})$. $(e_2,e_5)$
is not valid because any correct reordering of $\tr_4$ must have $e_2$
before $e_3$ and $e_3$ before $e_5$. This is also captured by SHB
reasoning because $e_2 \shb{\tr_4} e_3 \tho{\tr_4} e_4
= \ltho{\tr_4}(e_5)$. A similar reasoning shows that $(e_9,e_{12})$ is
not valid. The interesting case is that of $(e_4,e_{11})$. Here, in
any correct reordering $\tr'$ of $\tr_4$, the following must be true:
(a) if $e_4 \in \events{\tr'}$ then $e_1 \in \events{\tr'}$; (b) if
$e_{11} \in \events{\tr'}$ then $e_8 \in \events{\tr'}$; (c) if
$\{e_1,e_4,e_7\} \subseteq \events{\tr'}$ then $e_1 \trord{\tr'}
e_4 \trord{\tr'} e_7$; and (d) if
$\{e_8,e_{11},e_{14}\} \subseteq \events{\tr'}$ then $e_8 \trord{\tr'}
e_{11} \trord{\tr'} e_{14}$. Therefore, any correct reordering $\tr'$
of $\tr_4$ containing both $e_4$ and $e_{11}$ contains $e_1$ and $e_8$
(because of (a) and (b)) and must contain at least one of $e_7$ or
$e_{14}$ to ensure that critical sections of $\lk$ don't overlap. Then
in $\tr'$, $e_4$ and $e_{11}$ cannot be consecutive because either
$e_7$ or $e_{14}$ will appear between them (properties (c) and
(d)). This is captured using SHB and Theorem~\ref{thm:SHBSoundness} by
the fact that $e_4 \shb{\tr_3} e_7 \shb{\tr_3} e_{10}
= \ltho{\tr_4}(e_{11})$.
\end{example}

We conclude this section by observing that the soundness guarantees of
HB (Theorem~\ref{thm:hb_sound}) follows from
Theorem~\ref{thm:SHBSoundness}. Consider a trace $\tr$ whose first
HB-race is $(e_1,e_2)$. We claim that $(e_1,e_2)$ is a
$\hb{\tr}$-schedulable race. Suppose (for contradiction) it is
not. Then by Theorem~\ref{thm:SHBSoundness}, $e = \ltho{\tr}(e_2)$ is
defined and $e_1 \shb{\tr} e$.  Now observe that we must have
$\neg(e_1 \hb{\tr} e)$ (or otherwise $e_1 \hb{\tr} e_2$, contradicting
our assumption that $(e_1, e_2)$ is an HB-race).  Then, by the
definition of $\shb{\tr}$ (Definition~\ref{def:shb}), there are two
events $e_3$ and $e_4$ (possibly same as $e_1$ and $e$) such that
$e_1 \shb{\tr} e_3$, $e_3 = \lw{\tr}(e_4)$, $e_4 \shb{\tr} e$, and
$\neg (e_3 \hb{\tr} e_4)$.  Then $(e_3,e_4)$ is an HB-race, and it
contradicts the assumption that $(e_1,e_2)$ is the first HB-race.

The above argument that Theorem~\ref{thm:hb_sound} follows from
Theorem~\ref{thm:SHBSoundness}, establishes that our SHB-based
analysis using Theorem~\ref{thm:SHBSoundness} does not miss the race
detected by a \emph{sound} HB-based race detection algorithm.

\section{Algorithm for Detecting \texorpdfstring{$\hb{}$}{HB}-schedulable Races}
\label{sec:algo}

We will discuss two algorithms for detecting races identified by the
$\shb{}$ partial order. 
The algorithm is based on efficient, vector clock based
computation of the $\shb{}$-partial order. It is similar to the
standard \textsc{Djit}$^+$
algorithm~\cite{Pozniansky:2003:EOD:966049.781529} to detect
HB-races. We will first briefly discuss vector clocks and
associated notations.  Then, we will discuss a one-pass streaming
vector clock algorithm to compute $\shb{}$ for detecting races.  
Finally, we will discuss how epoch optimizations, similar to \fasttrack~\cite{fasttrack} 
can be readily applied in our setting to enhance performance
of the proposed vector clock algorithm.

\subsection{Vector Clocks and Times}
A vector \emph{time} \textit{VT : $\threads{\tr}$ $\to$ Nat} maps each
thread in a trace $\tr$ to a natural number.  Vector times support
comparison operation $\cle$ for point-wise comparison, join operation
($\mx$) for point-wise maximum, and update operation $V[n/t]$ which
assigns the time $n \in $\textit{Nat} to the component $t
\in \threads{\tr}$ in the vector time $V$.  Vector time $\bot$ maps all
threads to 0. Formally, 
%
\begin{align*}
\begin{array}{rclr}
V_1 \cle V_2 & \text{iff} & \forall t: V_1(t) \le V_2(t) & \text{(Point-wise Comparison)}\\
V_1 \mx V_2 & = & \lambda t: \mathit{max}(V_1(t), V_2(t))  & \text{(VC-Join)}\\
V[n/u] & = & \lambda t: \mathtt{if}\; (t = u)\; \mathtt{then}\; n\; \mathtt{else}\; V(t) & \text{(VC-Update)}\\
\bot & = & \lambda t: 0 & \text{(VC-Bottom)}
\end{array}
\end{align*}
Vector \emph{clocks} are place holders for vector timestamps, or
variables whose domain is the space of vector times.
All the above operations, therefore, also apply to vector clocks.
The algorithms described next maintain a state comprising of
several vector  clocks, whose values, at specific instants, 
will be used to assign timestamps to events.
We will use double struck font ($\Cc$, $\Ll$, $\Rr$, etc.,) for vector clocks 
and normal font ($C$, $R$, etc.,) for vector times .


\subsection{Vector Clock Algorithm for Detecting Schedulable Races}

Algorithm~\ref{algo:vc} depicts the vector clock algorithm for
detecting $\hb{}$-schedulable races using the $\shb{}$ partial order.
Similar to the vector clock algorithm for detecting HB races,
Algorithm~\ref{algo:vc} maintains a state comprising of several vector
clocks.  The idea behind Algorithm~\ref{algo:vc} is to use these
vector clocks to \emph{assign a vector timestamp} to each event $e$
(denoted by $C_e$) such that the ordering relation on the assigned
timestamps ($\cle$) enables determining the partial order $\shb{}$ on
events.  This is formalized in Theorem~\ref{thm:isomorphicVC}.  The
algorithm runs in a streaming fashion and processes each event in the
order in which it occurs in the trace.  Depending upon the type of the
observed event, an appropriate handler is invoked.  The formal
parameter $t$ in each of the handlers refers to the thread performing
the event, and the parameters $\lk$, $x$ and $u$ represent the lock
being acquired or released, the memory location being accessed and the
thread being forked or joined, respectively.  The procedure
\texttt{Initialization} assigns the initial values to the vector
clocks in the state.  We next present details of different parts of
the algorithm.

\SetKwProg{myalg}{procedure}{}{}

\begin{algorithm*}
\begin{multicols}{2}
\SetKwFunction{facq}{acquire}
\SetKwFunction{frel}{release}
\SetKwFunction{fork}{fork}
\SetKwFunction{join}{join}
\SetKwFunction{read}{read}
\SetKwFunction{wr}{write}
\SetKwFunction{init}{Initialization}

\myalg{\init}{
    \lForEach{$t$}{
    $\Cc_t$ := $\bot[1/t]$
    } 
    \lForEach{$\ell$}{
    $\Ll_\ell$ := $\bot$ 
    } 
    \For{$x \in \vars{}$}{ 
    $\LW_x$ := $\bot$; \\ 
    $\Rr_x$ := $\bot$; \\
    $\Ww_x$ := $\bot$;  
    }
}

\myalg{\facq{$t$, $\ell$}}{ %
    $\Cc_t$ := $\Cc_t \mx \Ll_\ell$ ;
}

\myalg{\frel{$t$, $\ell$}}{
    $\Ll_\ell$ := $\Cc_t$ ; \\
    $\Cc_t(t)$ := $\Cc_t(t) + 1$ ; {\scriptsize \texttt{(* next event *)}}
}

\myalg{\fork{$t$, $u$}}{
    $\Cc_u$ := $\Cc_t[1/u]$ ; \\
    $\Cc_t(t)$ := $\Cc_t(t) + 1$ ; {\scriptsize \texttt{(* next event *)}}
}

\myalg{\join{$t$, $u$}}{
    $\Cc_t$ := $\Cc_t \mx \Cc_u$ ;
}

\myalg{\read{$t$, $x$}}{
    \If{$\neg (\Ww_x \cle \Cc_t)$}{
      declare `\textbf{race with write}';
    }
    $\Cc_t$ := $\Cc_t \mx \LW_x$; \\
    $\Rr_x(t)$ := $\Cc_t(t)$; \\
}

\myalg{\wr{$t$, $x$}}{
    \If{$\neg (\Rr_x \cle \Cc_t)$}{
      declare `\textbf{race with read}'; \\
    }
    \If{$\neg (\Ww_x \cle \Cc_t)$}{
      declare `\textbf{race with write}'; \\
    } 
    $\LW_x$ := $\Cc_t$ ; \\
    $\Ww_x(t)$ := $\Cc_t(t)$; \\
    $\Cc_t(t)$ := $\Cc_t(t) + 1$ ; {\scriptsize \texttt{(* next event *)}}
}
\end{multicols}
\vspace*{0.25cm}
\caption{\textit{Vector Clock for Checking $\shb{}$-schedulable races}}
\label{algo:vc}
\end{algorithm*}

\subsubsection{Vector clocks in the state}
\label{subsec:vc}
The description of each of the vector clocks that are
maintained in the state of Algorithm~\ref{algo:vc} is
as follows:

\begin{enumerate}
\item \textbf{Clocks} $\Cc_t$: For every thread $t$ in the trace
being analyzed, the algorithm maintains a vector clock $\Cc_t$.
At any point during the algorithm, 
let us denote by $C_{e_t}$ the last event performed by thread
$t$ in the trace so far.
Then, the \emph{timestamp} $C_{e_t}$ 
of the event $e_t$ can be obtained from the value of the clock $\Cc_t$ as follows.
If $e_t$ is a read, acquire or a join event, then $C_{e_t} = \Cc_t$, 
otherwise $C_{e_t} = \Cc_t[(c-1)/t]$, where $c = \Cc_t(t)$.

\item \textbf{Clocks} $\Ll_\lk$: The algorithm maintains
a vector clock $\Ll_\lk$ for every lock $\lk$ in the trace.
At any point during the algorithm, the clock $\Ll_\lk$ stores the
timestamp $C_{e_\lk}$, where $e_\lk$ is the last event of the form
$e_\lk = \ev{\cdot, \rel(\lk)}$, in the trace seen so far. 

\item \textbf{Clocks} $\LW_x$: For every memory location $x$
accessed in the trace, the algorithm maintains a clock $\LW_x$ ($\mathbb{L}$ast $\mathbb{W}$rite to $x$)
to store the timestamp $C_{e_x}$, of the last event $e_x$ 
of the form $\ev{\cdot, \wt(x)}$.

\item \textbf{Clocks} $\Rr_x$ and $\Ww_x$: The clocks
$\Rr_x$ and $\Ww_x$ store the read and write \emph{access histories} of
each memory location $x$.
At any point in the algorithm,
the vector time $R_x$ stored in the 
the $\mathbb{R}$ead access history clock $\Rr_x$ is
such that $\forall t, R_x(t) = C_{e_t^{\rd(x)}}(t)$ where 
$e_t^{\rd(x)}$ is the last event of thread $t$ that reads $x$ in 
the trace seen so far.
Similarly, the vector time $W_x$ stored in the $\mathbb{W}$rite access history
clock $\Ww_x$ is such that $\forall t, W_x(t) = C_{e_t^{\wt(x)}}(t)$ where $e_t^{\wt(x)}$ 
is the last event of thread $t$ that writes to $x$ in 
the trace seen so far.
\end{enumerate}

The clocks $\Cc_t$, $\Ll_\lk$, $\LW_x$ are used to correctly compute the
timestamps of the events, while the access history clocks $\Rr_x$ and $\Ww_x$
are used to detect races.

\subsubsection{Initialization and Clock Updates}

For every thread $t$, the clock $\Cc_t$ 
is initialized to the vector time $\bot[1/t]$.
Each of the clocks $\Ll_\lk$, $\LW_x$, $\Rr_x$ and $\Ww_x$ are initialized to $\bot$. 
This is in accordance with the semantics of these clocks presented 
in Section~\ref{subsec:vc}.

When processing an acquire event $e = \ev{t, \acq(\lk)}$, the algorithm
reads the clock $\Ll_\lk$ and updates the clock $\Cc_t$ 
with $\Cc_t \sqcup \Ll_\lk$ (see Line 9).
This ensures that the timestamp $C_e$ (
which is the value of the clock $\Cc_t$ after executing Line 9) 
is such that $C_{e'} \cle C_e$ for every $\lk$-release event $e' = \ev{t', \rel(\lk)}$
observed in the trace so far.

At a release event $e = \ev{t, \rel(\lk)}$,
the algorithm writes the timestamp
$C_e$ of the current event $e$
to the clock $\Ll_\lk$ (see Line 11).
Notice that $e$ is also
the last release event of lock $\lk$ in the trace seen so far,
and thus, this update correctly maintains the invariant
stated in Section~\ref{subsec:vc}.
This update ensures that any future events that 
acquire the lock $\lk$ can update their timestamps correctly.
The algorithm then increments the local clock $\Cc_t(t)$ (Line 12).
This ensures that if the next event $e'$ in the thread $t$ and the 
next acquire event $f$ of lock $\lk$ satisfy $e' \not\leq_\textsf{SHB} f$,
then the timestamps of these events satisfy $C_{e'} \not\cle C_f$.
This is crucial for the correctness of the algorithm (Theorem~\ref{thm:isomorphicVC}).

The updates performed by the algorithm at a fork (resp. join)
event are similar to the updates performed when observing a release (resp. acquire) event.
The update at Line 14 is equivalent to the update 
$\Cc_u$ := $\Cc_t \sqcup \Cc_u$ and ensures that the
timestamp of each event $e' = \ev{u, \cdot}$ performed by the forked thread $u$
satisfy $C_e \cle C_{e'}$, where $e$ is the current event forking the new thread $u$.
Similarly, the update performed at Line 17 when processing the join
event $e = \ev{t, \join(u)}$ ensures that the timestamp of
each event $e' = \ev{u, \cdot}$ of the joined thread $u$
is such that $C_{e'} \cle C_e$.

At a read event $e = \ev{t, \rd(x)}$, the clock $\Cc_t$ is updated with 
the join $\Cc_t \sqcup \LW_x$ (Line 21).
Recall that $\LW_x$ stores the timestamp of 
the last event that writes to $x$ (or in other words, the event $\lw{}(e)$)
in the trace seen so far.
This ensures that the timestamps $C_e$ and $C_{\lw{}(e)}$
satisfy $C_{\lw{}(e)} \cle C_e$.
In addition, the algorithm also updates the component $\Rr_x(t)$ with the 
local component of the clock
$\Cc_x$ (Line 22) in order to maintain the invariant described in Section~\ref{subsec:vc}.

At a write event $e = \ev{t, \wt(x)}$, the algorithm updates the value of the
last-write clock $\LW_x$ (Line 28) with the timestamp $C_e$ stored in $\Cc_t$.
The component $\Ww_x(t)$ is updated with the value of the local component $\Cc_t(t)$
to ensure the invariant described in Section~\ref{subsec:vc} is maintained correctly.
Finally, similar to the increment after a release event, the local clock is incremented
in Line 30.

\subsubsection{Checking for races}

At a read/write event $e$, the algorithm determines if
there is a conflicting event $e'$ in the trace seen so far such that
$(e', e)$ is an $\hb{}$-schedulable race.
From Theorem~\ref{thm:SHBSoundness} and Theorem~\ref{thm:isomorphicVC}, 
it follows that it is sufficient to check if $C_{e'} \not\cle C_{\ltho{}(e)}$.
However, since the algorithm does not explicitly store the
timestamps of events, we use the access histories $\Rr_x$
and $\Ww_x$ to check for races.
Below we briefly describe these checks.
The formal statement of correctness is presented in Theorem~\ref{thm:correct-races}
and its proof is presented in Appendix~\ref{app:algo}.
We briefly outline the ideas here.

Recall that, for an event $e = \ev{t, \cdot}$ 
if $\ltho{}(e)$ is undefined, the \texttt{Initialization}
procedure ensures that  $C_e = \bot[1/t]$.
In this case, we have $V \not\cle C_e$, for any
vector-timestamp $V$ with non-negative entries
such that $V(t) = 0$, $\bot \cle V$ and $V \neq \bot$. 
Algorithm~\ref{algo:vc} correctly reports a race in 
this case (see Lines 19-20, 24-27).

On the other hand, if $\ltho{}(e)$ is defined,
then the clock $\Cc_t$, at Line 19, 24 or 26, 
is either the timestamp $C_{\ltho{}(e)}$ (if $\ltho{}(e)$
was a read, join or an acquire event) or the timestamp 
$C_{\ltho{}(e)}[(c+1)/t]$, where $c = C_{\ltho{}(e)}(t)$
(if $\ltho{}(e)$
was a write, fork or a release event).
In either case, if the check $\Ww_x \cle \Cc_t$ at Line 19 fails,
then the read event $e$ being processed is correctly declared to
be in race with an earlier conflicting write event.
Similarly, Algorithm~\ref{algo:vc} reports that
a write event $e$ is in race with an earlier read (resp. write)
event based on whether the check on Line 24 (resp. Line 26) fails or not. 

\subsubsection{Correctness and Complexity}

Here, we fix a trace $\tr$.  Recall that, for an event $e$, we say
that $C_e$ is the timestamp assigned by Algorithm~\ref{algo:vc} to
event $e$.  Theorem~\ref{thm:isomorphicVC} asserts that the time
stamps computed by Algorithm~\ref{algo:vc} can be used to determine
the partial order $\shb{\tr}$.

\begin{theorem}
\label{thm:isomorphicVC}
For events $e, e' \in \events{\tr}$ such that $e \trord{\tr} e'$,
$C_e \cle C_{e'}$ iff $e \shb{\tr} e'$
\end{theorem}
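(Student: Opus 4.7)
The plan is to prove Theorem~\ref{thm:isomorphicVC} by first establishing a stronger vector-clock invariant that semantically characterizes each component of each timestamp, and then deriving both directions of the iff from it. For a thread $u$ and an event $f$ of $u$, define the local rank $r_u(f)$ to be the value of $\Cc_u(u)$ at the instant $f$ is processed (equivalently, the $u$-component of $C_f$). The key invariant is: for every event $e$ in the processed prefix and every thread $u$,
\[
C_e(u) \;=\; \max\bigl(\{0\} \cup \{\, r_u(f) : f \text{ is an event of thread } u \text{ with } f \shb{\tr} e \,\}\bigr).
\]
Intuitively, the $u$-component of $C_e$ records the position within thread $u$ of the latest event of $u$ that is $\shb{\tr}$-before $e$.

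I would prove this invariant by induction on the length of the processed prefix of $\tr$; the \texttt{Initialization} procedure establishes the base case. For the inductive step, I would proceed event-by-event through the handlers, checking that each update in Algorithm~\ref{algo:vc} is exactly what is needed to absorb the $\shb{}$ edges introduced by the newly processed event. The four sources of $\shb{}$ edges line up cleanly with the updates: thread order is captured by the local increment at release, fork, and write events; release-to-acquire is captured by $\Cc_t := \Cc_t \mx \Ll_\lk$ at an acquire, using the state invariant that $\Ll_\lk$ holds the timestamp of the last release of $\lk$; fork and join are handled analogously; and the new $\shb{}$ edge from $\lw{\tr}(e)$ to a read $e$ is captured by $\Cc_t := \Cc_t \mx \LW_x$, using the invariant that $\LW_x$ holds the timestamp of the last write to $x$.

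Given the invariant, the forward direction is immediate: if $e \shb{\tr} e'$, then by transitivity every event of any thread $u$ that is $\shb{\tr}$-before $e$ is also $\shb{\tr}$-before $e'$, so $C_e(u) \leq C_{e'}(u)$ for every $u$, whence $C_e \cle C_{e'}$. For the converse I would argue by contrapositive. Assume $e \not\shb{\tr} e'$, let $u$ be the thread of $e$, and set $k = r_u(e)$; by reflexivity of $\shb{\tr}$, the invariant gives $C_e(u) \geq k$. If additionally $C_e \cle C_{e'}$ then $C_{e'}(u) \geq k$, so by the invariant there is an event $f$ of thread $u$ with $r_u(f) \geq k$ and $f \shb{\tr} e'$. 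If $r_u(f) > k$, the local clock was incremented strictly between $e$ and $f$, so $e \tho{\tr} f \shb{\tr} e'$, contradicting $e \not\shb{\tr} e'$. Otherwise $r_u(f) = k$ and $e,f$ lie in the same ``block'' between two consecutive local-clock increments; the subcase $e \trord{\tr} f$ contradicts $e \not\shb{\tr} e'$ directly via $e \tho{\tr} f \shb{\tr} e'$.

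The main obstacle is the remaining subcase $r_u(f) = k$ with $f \trord{\tr} e$, where thread order runs the wrong way ($f \tho{\tr} e$) and $f \shb{\tr} e'$ does not immediately yield $e \shb{\tr} e'$. The key observation is that every $\shb{}$ edge \emph{leaving} thread $u$ emanates from a release, fork, or write event --- precisely the events that trigger a local-clock increment --- so any chain witnessing $f \shb{\tr} e'$ must first travel along thread $u$ to some release/fork/write event $g$ with $f \tho{\tr} g$ and $g \shb{\tr} e'$. Because $r_u(f) = r_u(e) = k$ and $g$ is a release, fork, or write, either $g$ is the last event of block $k$ (and thus thread-ordered after every event in the block, including $e$) or $g$ lies in a strictly later block (so $r_u(g) > k$, again placing $g$ thread-ordered after $e$). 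Either way $e \tho{\tr} g \shb{\tr} e'$, closing the contrapositive.
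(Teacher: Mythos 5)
Your overall strategy is essentially the paper's. The componentwise invariant $C_e(u) = \max\bigl(\{0\}\cup\{r_u(f) : f \text{ of thread } u,\ f \shb{\tr} e\}\bigr)$ is a repackaging of the paper's Lemma~\ref{lem:vc-shb}, which instead states that the \emph{last} event $f$ of thread $u$ with $C_f(u)=C_e(u)$ satisfies $f \shb{\tr} e$; both are proved by the same induction over the trace using the state invariants on $\Ll_\lk$ and $\LW_x$, and the easy direction is the same chain-chasing argument. The substantive difference is that your invariant does not record \emph{which} event of a block attains the maximum, so you need the extra ``leaving-edge'' analysis in the contrapositive; the paper sidesteps this by building ``last event with that clock value'' into the lemma, which immediately yields $e \tho{\tr} f$ in the troublesome subcase.

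There is one concrete gap in that leaving-edge analysis. You claim that every $\shb{}$ edge leaving thread $u$ emanates from a release, fork, or write of $u$ --- i.e., from an event that increments $\Cc_u(u)$. This misses the join case: by the paper's convention, $\ev{w,\join(u)}$ is an event of thread $u$, and the edge from it into the remainder of thread $w$ exits thread $u$ without any increment of $\Cc_u(u)$ (the join handler only updates $\Cc_w$). So a chain witnessing $f \shb{\tr} e'$ can leave thread $u$ at $g=\ev{w,\join(u)}$ with $r_u(g)=k$ even though $g$ is not the last event of block $k$ in your increment-based sense, and your dichotomy does not apply. The case is repairable, but by a different fact than the one you invoke: a $\join(u)$ event occurs after every event performed by $u$ itself, hence after $e$ in trace order, and since both are events of thread $u$ this gives $e \tho{\tr} g \shb{\tr} e'$ and the desired contradiction. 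You should add this case explicitly, and also record the (true but unstated) observation that the prefix of the chain from $f$ up to the exit point $g$ consists entirely of events of thread $u$ and hence collapses to thread order, so that $f \tho{\tr} g$ and $r_u(g)\geq k$ as your argument requires.
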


Next, we state the correctness of the algorithm.
We say that Algorithm~\ref{algo:vc} reports a race
at an event $e$, if it executes lines 20, 25 or 27
while processing the handler corresponding to $e$.

\begin{theorem}
\label{thm:correct-races}
Let $e$ be a read/write event $e \in \events{\tr}$.
Algorithm~\ref{algo:vc} reports a race at $e$
iff there is an event $e' \in \events{\tr}$ such that $(e', e)$
is an $\hb{\tr}$-schedulable race.
\end{theorem}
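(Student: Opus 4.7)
The plan is to derive Theorem~\ref{thm:correct-races} by threading together Theorem~\ref{thm:SHBSoundness} (which characterizes $\hb{\tr}$-schedulable races via $\shb{\tr}$) and Theorem~\ref{thm:isomorphicVC} (which says vector clock timestamps faithfully encode $\shb{\tr}$), with the algorithmic content being that the compact summaries stored in $\Ww_x$ and $\Rr_x$ suffice to check the SHB condition simultaneously against all prior conflicting accesses.

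First, I would identify the value of $\Cc_t$ at the moment the race checks in the read/write handlers execute; call this $\Cc_t^{\mathsf{pre}(e)}$. By induction on the trace, following the update pattern of each handler, $\Cc_t^{\mathsf{pre}(e)}$ equals $C_{\ltho{\tr}(e)}$ when $\ltho{\tr}(e)$ is a read, acquire, or join; equals $C_{\ltho{\tr}(e)}[(c+1)/t]$ with $c = C_{\ltho{\tr}(e)}(t)$ when $\ltho{\tr}(e)$ is a write, release, or fork; and equals $\bot[1/t]$ when $\ltho{\tr}(e)$ is undefined. For every $t' \neq t$, these three cases agree on $\Cc_t^{\mathsf{pre}(e)}(t') = C_{\ltho{\tr}(e)}(t')$, interpreting this as $0$ when $\ltho{\tr}(e)$ is undefined. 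I would then verify the access-history invariants from Section~\ref{subsec:vc}: $\Ww_x(t')$ equals $C_g(t')$ for the last $\wt(x)$-event $g$ of thread $t'$, and analogously for $\Rr_x$.

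The key technical lemma is that for events $e_1$ of thread $t'$ and $e_2$ of a different thread, $C_{e_1} \cle C_{e_2}$ iff $C_{e_1}(t') \leq C_{e_2}(t')$; this is extractable from the proof of Theorem~\ref{thm:isomorphicVC} and reflects the fact that the $t'$-coordinate of $e_2$'s timestamp records the maximal local index of thread $t'$ whose event is $\shb{\tr}$-before $e_2$. Combining this with the access-history invariant, the check $\Ww_x \cle \Cc_t^{\mathsf{pre}(e)}$ fails iff there exists a thread $t'$ whose last $\wt(x)$-event $g$ satisfies $g \not\shb{\tr} \ltho{\tr}(e)$. Because any earlier $\wt(x)$-event by $t'$ is $\tho{\tr}$-before $g$ and hence $\shb{\tr}$-before $g$, failure is equivalent to the existence of \emph{some} prior $\wt(x)$-event $e'$ by a thread different from $t$ with $e' \not\shb{\tr} \ltho{\tr}(e)$; by Theorem~\ref{thm:SHBSoundness}, this is exactly the existence of a conflicting $e'$ such that $(e', e)$ is an $\hb{\tr}$-schedulable race. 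A symmetric argument handles $\Rr_x$ when $e$ is a write event.

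The main obstacle will be proving the ``distinct threads implies single-component suffices'' lemma cleanly, and in particular handling the degenerate case where $\ltho{\tr}(e)$ is undefined uniformly --- since $\Cc_t^{\mathsf{pre}(e)}(t') = 0$ for $t' \neq t$ in that case, any positive entry in $\Ww_x$ or $\Rr_x$ at a foreign coordinate triggers the race, matching the clause of Theorem~\ref{thm:SHBSoundness} stating that an undefined $\ltho{\tr}(e)$ always yields a race with any prior conflict. The self-coordinate $t' = t$ never spuriously fires because any prior access to $x$ by thread $t$ is $\tho{\tr}$-before (and hence $\shb{\tr}$-before) $e$, giving $\Ww_x(t), \Rr_x(t) \leq \Cc_t^{\mathsf{pre}(e)}(t)$ via Theorem~\ref{thm:isomorphicVC}.
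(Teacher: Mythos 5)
Your proposal is correct and follows essentially the same route as the paper's proof: the same case split on whether $\ltho{\tr}(e)$ is defined, the same access-history invariant (the paper's Proposition~\ref{prop:vc-invariantrw}), and the same key cross-thread reduction to a single-component comparison (the paper packages this as Lemma~\ref{lem:prev_clock}, built on Lemma~\ref{lem:vc-shb} and Theorem~\ref{thm:isomorphicVC}), finishing with Theorem~\ref{thm:SHBSoundness}. Your explicit remark that earlier per-thread accesses are thread-ordered before the last one recorded in $\Ww_x$/$\Rr_x$ is a detail the paper leaves implicit, but it is the same argument.
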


The following theorem states that the asymptotic 
time and space requirements for Algorithm~\ref{algo:vc}
are the same as that of the standard HB algorithm.

\begin{theorem}
\label{thm:complexityVC}
For a trace $\tr$ with $n$ events, $T$ threads, $V$ variables, and $L$
locks, Algorithm~\ref{algo:vc} runs in time $O(nT\log n)$ and uses
$O((V+L+T)T\log n)$ space.
\end{theorem}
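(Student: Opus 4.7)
The plan is to bound space by enumerating the vector clocks kept in the state, and to bound time by counting vector-clock operations per event handler and charging each operation its size-dependent cost.

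For the space bound, I would first inventory the clocks maintained by \texttt{Initialization} and updated throughout Algorithm~\ref{algo:vc}: exactly one clock $\Cc_t$ per thread ($T$ clocks), one clock $\Ll_\lk$ per lock ($L$ clocks), and three clocks $\LW_x$, $\Rr_x$, $\Ww_x$ per memory location ($3V$ clocks). Each vector clock is indexed by $\threads{\tr}$ and therefore has exactly $T$ components. Next I would argue that every component can be encoded in $O(\log n)$ bits. This follows from a simple invariant: for any clock $V$ in the state and any thread $u$, $V(u)$ is bounded by one plus the number of events of $u$ processed so far, hence by $n$. The only way a component is increased is either (i) the increment of $\Cc_t(t)$ inside \texttt{release}, \texttt{fork}, or \texttt{write} (occurring at most once per event of $t$), or (ii) propagation via $\sqcup$, assignment, or the component-level writes on Lines~22 and~29, all of which copy values that already satisfy the invariant. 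Multiplying the number of clocks by $T$ components of $O(\log n)$ bits yields $O((V+L+T)T\log n)$ space.

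For the time bound, I would observe that the algorithm processes the $n$ events in a single pass and that every handler executes only a constant number of primitive vector-clock operations: pointwise comparison $\cle$, join $\sqcup$, assignment, component read, component write, or component increment. Each of these operations on vectors of length $T$ with $O(\log n)$-bit entries runs in $O(T\log n)$ time (the $\log n$ factor accounting for integer arithmetic and comparison per component). The race-detection checks at read/write events (Lines~19, 24, 26) are just two additional pointwise comparisons against $\Rr_x$ or $\Ww_x$, so the per-handler cost remains $O(T\log n)$. Summing over the $n$ events gives total time $O(nT\log n)$.

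The only mildly delicate step is justifying the $O(\log n)$ bit-width uniformly across all clocks, because although $\Cc_t(t)$ is incremented by the handlers for $t$, values of $\Cc_t$ are propagated into $\Ll_\lk$, $\LW_x$, $\Cc_u$, and $\Rr_x$, $\Ww_x$; one has to confirm by induction on the event index that every component of every clock is the value of some $\Cc_{u}(u)$ at some earlier point, and thus never exceeds the number of events of $u$. Given that bookkeeping, the rest of the argument is the standard vector-clock accounting used for the classical HB algorithm, which is why Algorithm~\ref{algo:vc} inherits identical asymptotic bounds.
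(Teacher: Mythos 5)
Your proposal is correct and follows essentially the same argument as the paper: bound each vector-clock component by $n$ (hence $O(\log n)$ bits), count one clock per thread and lock and a constant number per variable for the space bound, and charge each event a constant number of $O(T\log n)$-time clock operations for the time bound. The extra inductive justification you give for the $O(\log n)$ bit-width is a more careful version of the paper's one-line observation, but the decomposition is identical.
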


The proofs of Theorem~\ref{thm:isomorphicVC}, Theorem~\ref{thm:correct-races} and
Theorem~\ref{thm:complexityVC} are presented in Appendix~\ref{app:algo}.

\subsubsection{Differences from the HB algorithm}
While the spirit of Algorithm~\ref{algo:vc} is similar to standard HB
vector clock algorithms (such as \textsc{Djit}$^+$
\cite{Pozniansky:2003:EOD:966049.781529}), it differs from them in the
following ways.  First, we maintain an additional vector clock $\LW_x$
to track the timestamp of the last event that writes to memory
location $x$ (line 28), and use this clock to correctly update $\Cc_t$
(line 21).  This difference is a direct consequence of the additional
ordering edges in the $\shb{}$ partial order---every read event $e$ is
ordered after the event $\lw{}(e)$, unlike $\hb{}$.  Second, the
`local' component of the clock $\Cc_t$ is also incremented after every
write event (line 19), in addition to after a release or a fork event
(in contrast with \textsc{Djit}$^+$).  This is to ensure correctness
in the following scenario.  Let $e, e'$ and $e''$ be events such that
$e = \ev{t, \rd(x)}\in \reads{}$, $e' = \ev{t', \wt(x)} = \lw{}(e)$
($t'$ may be different from $t$), and $e''$ is the next event after
$e'$ in the thread $t'$.  Incrementing the local component of the
clock $\Cc_{t'}$ ensures that the vector timestamps of $e$ and $e''$
are ordered only when $e'' \shb{} e$.  Third, our algorithm remains
sound even beyond the first race, in contrast to \textsc{Djit}$^+$,
which can lead to false positives beyond the first race.

\subsection{Epoch optimization}
\label{sec:algo_epoch}

The epoch optimization, popularized
by \textsc{FastTrack}~\cite{fasttrack} exploits the insight that
`\textit{the full generality of vector clocks is unnecessary in most
cases}', and can result in significant performance enhancement,
especially when the traces are predominated by read and write events.

An epoch is a pair of an integer $c$ and a thread $t$, denoted by $c@t$.
Intuitively, epoch $c@t$ can be treated as the vector time $\bot[c/t]$.
Thus, in order to compare an epoch $c@t$ with vector time $V$, 
it suffices to compare the $t$-th component of $V$ with $c$.
That is,
\begin{align*}
\begin{array}{rcl}
c@t \cle V & \text{iff} & c \leq V(t).
\end{array}
\end{align*}
Therefore, comparison between epochs is less expensive than that
between vector times --- $O(1)$ as opposed to $O((|\threads{\tr}|)$
for full vector times. To exploit this speedup, some vector clocks in
the new algorithm will adaptively store either epochs or vector times.

\SetKwProg{myalg}{procedure}{}{}

\begin{algorithm*}
\begin{multicols}{2}
\SetKwFunction{facq}{acquire}
\SetKwFunction{frel}{release}
\SetKwFunction{fork}{fork}
\SetKwFunction{join}{join}
\SetKwFunction{read}{read}
\SetKwFunction{wr}{write}
\SetKwFunction{init}{Initialization}


\myalg{\read{$t$, $x$}}{
    \If{$\neg (\Ww_x \cle \Cc_t)$}{
      declare `\textbf{race with write}';
    }
    $\Cc_t$ := $\Cc_t \mx \LW_x$; \\
   \eIf{$\Rr_x$ is an epoch $c@u$}{
		\eIf{$c \leq \Cc_t(u)$}{
			$\Rr_x$ := $\Cc_t(t)@t$;
		}{
			$\Rr_x$ := $\bot[\Cc_t(t)/t][c/u]$;
		}
	}{
		$\Rr_x(t)$ := $\Cc_t(t)$;
	}
}

\myalg{\wr{$t$, $x$}}{
    \If{$\neg (\Rr_x \cle \Cc_t)$}{
      declare `\textbf{race with read}'; \\
    }
    \eIf{$(\Ww_x \cle \Cc_t)$}{
		$\Ww_x$ := $\Cc_t(t)@t$
   	}{
   		declare `\textbf{race with write}'; \\	
   		\eIf{$\Ww_x$ is an epoch $c@u$}{
			$\Ww_x$ := $\bot[\Cc_t(t)/t][c/u]$;
   		}{
   			$\Ww_x(t)$ := $\Cc_t(t)$;
   		}

   	}
    $\LW_x$ := $\Cc_t$ ; \\
    $\Ww_x(t)$ := $\Cc_t(t)$; \\
    $\Cc_t(t)$ := $\Cc_t(t) + 1$ ; {\scriptsize \texttt{(* next event *)}}
}
\end{multicols}
\vspace*{0.25cm}
\caption{\textit{Epoch Optimization for Algorithm~\ref{algo:vc}}}
\label{algo:epoch}
\end{algorithm*}

Algorithm~\ref{algo:epoch} applies the epoch optimization to
Algorithm~\ref{algo:vc}.  Here, similar to the \fasttrack~algorithm,
we allow clocks $\Rr_x$ and $\Ww_x$ to be adaptive, while other clocks
($\Cc_t$, $\Ll_\lk$ and $\LW_x$) always store vector times.  The
optimization only applies to the $\read$ and $\texttt{write}$ handlers
and thus we omit the other handlers from the description as they are
same as those described in Algorithm~\ref{algo:vc}.  We also omit
the \texttt{Initialization} procedure which only differs in that the
$\Rr_x$ and $\Ww_x$ are initialized to the epoch $0@0$.

Depending upon how these clocks compare with the thread's clock $\Cc_t$,
the clocks switch back and forth between epoch and vector time values:
\begin{itemize}
\item Initially, both $\Rr_x$ and $\Ww_x$ are assigned the epoch $0@0$.
The element $0@0$ can be thought of as the analogue of $\bot$.

\item
The clock $\Ww_x$ is \emph{fully adaptive} --- it can switch back
and forth between vector and epoch times depending upon how it compares with $\Cc_t$. 
Notice that, in the \fasttrack~algorithm proposed in~\cite{fasttrack},
the clock $\Ww_x$ is always an epoch.
The underlying assumption for such a simplification is that all 
the events that write to a given memory location are 
totally ordered with respect to $\hb{}$.
This assumption, however, need not hold beyond the first HB race.
After the first race is encountered, two $\wt(x)$ events 
$e$ and $e'$ may be unordered by both $\hb{}$ and $\shb{}$.
In Algorithm~\ref{algo:epoch}, $\Ww_x$ has an epoch representation
if and only if the last write event $e$ on $x$
is such that $e' \shb{} e$ for every event $e'$
of the form $e' = \ev{\cdot, \wt(x)}$ in the trace seen so far.
When performing a write event $e = \ev{t, \wt(x)}$,
if $\Ww_x$ satisfies $\Ww_x \cle \Cc_t$ (Line 15), 
then the event $e$ is ordered after
all previous $\wt(x)$ events, and thus,
in this case, $\Ww_x$ is converted to an epoch representation 
independent of its original representation (see Line 16).
Otherwise, there are at least two $\wt(x)$ events that are
not ordered by $\shb{}$ and thus $\Ww_x$ becomes a full-fledged vector clock
(Lines 20 and 22).

\item
The clock $\Rr_x$ is only \emph{semi-adaptive} --- we do not switch back to
epoch representation once the clock $\Rr_x$ takes up a vector-time value.
The clock $\Rr_x$ is initialized to be an epoch.
When processing a read event $e = \ev{t, \rd(x)}$, if the algorithm
determines that there is a read event $e' = \ev{t', \rd(x)}$ observed earlier
such that $e' \not\leq_\textsf{SHB} e$, then the clock $\Rr_x$
takes a vector-time representation.
After this point, $\Rr_x$ stays in the vector clock representation forever.
The $\Rr_x$ clock is an epoch only if 
all the reads of $x$ observed are ordered totally by $\shb{}$.
Thus, in order to determine if $\Rr_x$ can be converted back to an epoch representation,
one needs to check if $(\Rr_x \cle \Cc_t)$ every time a read event is processed.
Since this is an expensive additional comparison and because most traces
from real-world examples are dominated by read events, we avoid such a check
and force $\Rr_x$ to be only semi-adaptive. 
This is similar to the \fasttrack~algorithm.
\end{itemize}

As with \textsc{FastTrack}, the epoch optimization for $\shb{}$
is sound and does not lead to any loss of precision --- the optimized
algorithm (Algorithm~\ref{algo:epoch}) declares a race at an event $e$ 
iff the corresponding unoptimized algorithm (Algorithm~\ref{algo:vc}) declares a race at $e$.

One must however note that the new clock $\LW_x$
does not have an adaptive representation, and is always
required to be a vector clock. 
One can think of $\LW_x$ to be similar to, say, the clocks $\Ll_\lk$.
These clocks are used to maintain the partial order,
unlike the clocks $\Rr_x$ or $\Ww_x$ which are only
used to check for races.
Thus, one needs the full generality of vector times for $\LW_x$.

\section{Experiments}
\label{sec:exp}

We first describe our implementation to detect
$\hb{}$-schedulable races.
We then present a brief description of the chosen benchmarks
and finally the results of evaluating our implementation
on these benchmarks.

\subsection{Implementation}
\label{subsec:impl}

We have implemented our SHB-based race detection algorithms
(Algorithm~\ref{algo:vc} and Algorithm~\ref{algo:epoch}) 
in our tool \tool, which is publicly available at~\cite{rapid}.
{\tool} is written in Java and supports
analysis on traces generated by the instrumentation and logging 
functionality provided by \rvpredict~\cite{rvpredict} to generate traces
from Java programs.
The traces generated by \rvpredict~contain read, write,
fork, join, acquire and release events.
We assume that the traces are sequentially consistent,
similar to the assumption made by~\cite{rv2014}. 
We compare the performance of five dynamic race detection algorithms
to demonstrate the effectiveness of SHB-based sound reasoning:
\begin{description}
\item[HB] We implemented the 
  \textsc{Djit}$^+$ algorithm for computing the $\hb{}$-partial order
  and detecting HB-races, in our tool~\tool.  As with popular
  implementations of \textsc{Djit}$^+$, our implementation
  of \textsc{Djit}$^+$ discovers all $\hb{}$-unordered pairs of
  conflicting events. This serves as a base line to demonstrate how
  many false positives would result, if one considered all HB-races
  (instead of $\hb{}$-schedulable races).  This algorithm is same as
  Algorithm~\ref{algo:vc} except that the lines involving the clock
  $\LW_x$ (Lines 5, 21 and 28) are absent.
\item[SHB] This is the implementation of Algorithm~\ref{algo:vc}
  in our tool~\tool. The soundness guarantee of Algorithm~\ref{algo:vc}
  (Theorem~\ref{thm:correct-races}) ensures that our implementation
  reports only (and all) $\hb{}$-schedulable races and thus reports no false alarms.
\item[FHB] This is the algorithm that mimics a software developer's strategy 
  when using HB-race detectors. 
  This algorithm is a slight variant of the \textsc{Djit}$^+$
  algorithm and is implemented in our tool~\tool.
  Every time an HB-race is discovered,
  the algorithm force orders the events in race, before analyzing subsequent
  events in the trace.
  When processing a read event $e = \ev{t, \rd(x)}$, if the
  algorithm discovers a race (that is, if the check $\neg(\Ww_x \cle \Cc_t)$ passes),
  the algorithm reports a race and also updates the clock
  $\Cc_t$ as $\Cc_t := \Cc_t \sqcup \Ww_x$.
  Similarly, at a write event, the algorithm updates the clock
  $\Cc_t$ as $\Cc_t := \Cc_t \sqcup \Rr_x$ (resp. $\Cc_t := \Cc_t \sqcup \Ww_x$)
  if the check $\neg(\Rr_x \cle \Cc_t)$ (resp. $\neg(\Ww_x \cle \Cc_t)$) passes.
  This algorithm is sound --- all races reported by this algorithm are
  schedulable, but it may fail to identify some races that are
  schedulable. The complete description of FHB is presented in Algorithm~\ref{algo:fhb}.
  \SetKwProg{myalg}{procedure}{}{}

\begin{algorithm*}
\begin{multicols}{2}
\SetKwFunction{facq}{acquire}
\SetKwFunction{frel}{release}
\SetKwFunction{fork}{fork}
\SetKwFunction{join}{join}
\SetKwFunction{read}{read}
\SetKwFunction{wr}{write}
\SetKwFunction{init}{Initialization}

\myalg{\init}{
    \lForEach{$t$}{
    $\Cc_t$ := $\bot[1/t]$
    } 
    \lForEach{$\ell$}{
    $\Ll_\ell$ := $\bot$ 
    } 
    \For{$x \in \vars{}$}{ 
    $\Rr_x$ := $\bot$; \\
    $\Ww_x$ := $\bot$;  
    }
}

\myalg{\facq{$t$, $\ell$}}{ %
    $\Cc_t$ := $\Cc_t \mx \Ll_\ell$ ;
}

\myalg{\frel{$t$, $\ell$}}{
    $\Ll_\ell$ := $\Cc_t$ ; \\
    $\Cc_t(t)$ := $\Cc_t(t) + 1$ ; {\scriptsize \texttt{(* next event *)}}
}

\myalg{\fork{$t$, $u$}}{
    $\Cc_u$ := $\Cc_t[1/u]$ ; \\
    $\Cc_t(t)$ := $\Cc_t(t) + 1$ ; {\scriptsize \texttt{(* next event *)}}
}

\myalg{\join{$t$, $u$}}{
    $\Cc_t$ := $\Cc_t \mx \Cc_u$ ;
}

\myalg{\read{$t$, $x$}}{
    \If{$\neg (\Ww_x \cle \Cc_t)$}{
      declare `\textbf{race with write}';\\
      $\Cc_t$ := $\Cc_t \sqcup \Ww_x$; {\scriptsize \texttt{(* force ordering *)}}\\
    }
    $\Cc_t$ := $\Cc_t \mx \LW_x$; \\
    $\Rr_x(t)$ := $\Cc_t(t)$; \\
}

\myalg{\wr{$t$, $x$}}{
    \If{$\neg (\Rr_x \cle \Cc_t)$}{
      declare `\textbf{race with read}'; \\
      $\Cc_t$ := $\Cc_t \sqcup \Rr_x$; {\scriptsize \texttt{(* force ordering *)}} \\
    }
    \If{$\neg (\Ww_x \cle \Cc_t)$}{
      declare `\textbf{race with write}'; \\
      $\Cc_t$ := $\Cc_t \sqcup \Ww_x$; {\scriptsize \texttt{(* force ordering *)}} \\
    } 
    $\Ww_x(t)$ := $\Cc_t(t)$; \\
}
\end{multicols}
\vspace*{0.25cm}
\caption{\textit{Vector Clock for FHB race detection}}
\label{algo:fhb}
\end{algorithm*}
\item[WCP] WCP or Weak Causal Precedence~\cite{wcp2017} is another sound
  partial order that can be employed for predictive data race
  detection.  WCP is weaker than both, its precursor CP~\cite{cp2012},
  and HB.  That is, whenever HB or CP detect the presence of a race in
  a trace, WCP will also do so, and in addition, there are traces when
  WCP can correctly detect the presence of a race when neither HB or
  CP can.  Nevertheless, WCP (and CP) also suffer from the same
  drawback as HB --- the soundness guarantee applies only to the first
  race. As a result, races beyond the first one, detected by WCP (or
  CP) may not be real races.  WCP admits a linear time vector clock
  algorithm and is also implemented in~\tool.  
\item[RVPredict] \rvpredict's race detection technology
relies on maximal causal models~\cite{rv2014}. 
\rvpredict~is sound and does not report any false alarms.
Besides, \rvpredict, at least in theory,
guarantees to detect more races than any other sound race prediction tool,
and thus more races than Algorithm~\ref{algo:vc} theoretically.
\rvpredict~encodes the problem of race detection as a logical formula
and uses an SMT solver to check for races.
\rvpredict~can analyze the traces generated using its logging functionality,
and thus is a natural choice for comparison.
\end{description}

Besides the vector clock algorithms (HB, SHB, FHB, WCP) described
above, we also implemented the epoch optimizations for HB and SHB
in~\tool.

\subsection{Benchmarks}

We measure the performance of our algorithms against traces drawn from
a wide variety of benchmark programs (Column 1 in
Table~\ref{tab:metadata}) that have previously been used to measure
the performance of other race detection
tools~\cite{cp2012,rv2014,wcp2017}. 
The set of benchmarks have been derived from different suites.  
The examples \textsf{airlinetickets} to \textsf{pingpong} are
small-sized, and belong to the IBM Contest benchmark
suite~\cite{Farchi2003}, with lines of code roughly varying from 40 to
0.5M.  The benchmarks \textsf{moldyn}
and \textsf{raytracer} are drawn from the Java
Grande Forum benchmark suite~\cite{JGF2001} and are medium-sized with about 3K lines of code.  
The third set of benchmarks correspond to real-world software 
applications and include Apache FTPServer, W3C Jigsaw web server, 
Apache Derby, and others (\textsf{xalan} to \textsf{eclipse}) 
derived from the DaCaPo benchmark suite (version 9.12)~\cite{DaCapo2006}.

In Table~\ref{tab:metadata}, we also describe the characteristics of
the generated traces that we use for analyzing our algorithms.  The
number of threads range from 3-12, the number of lock objects can be
as high as 8K.  The distinct memory locations accessed (Column 5) in
the traces can go as high as 10M.  The traces generated are dominated
by access events, with the majority of events being read events
(compare Columns 6, 7 and 8).

\subsection{Setup}
 
Our experiments were conducted on an 8-core 2.6GHz 46-bit Intel
Xeon(R) Linux machine, with HotSpot 1.8.0 64-Bit Server as the JVM and
50 GB heap space.  Using \rvpredict's logging functionality, we
generated one trace per benchmark and analyzed it with the various race
detection engines: HB, SHB, FHB, WCP and \rvpredict.

Our evaluation is broadly designed to evaluate our approach
based on the following aspects:

\begin{enumerate}
  \item \textbf{Reducing false positives:} Dynamic race detection
  tools based on Eraser style lockset based
  analysis~\cite{savage1997eraser} are known to scale better than
  those based on happens-before despite careful optimizations like the
  use of epochs~\cite{fasttrack}.  One of the main reasons for the
  popularity of HB-based race detection tools such
  as \fasttrack~\cite{fasttrack} and
  ThreadSanitizer~\cite{threadsanitizer} is the ability to produce
  reliable results (no false positives).  However, as pointed out in
  Section~\ref{sec:intro}, HB based analysis can report false races
  beyond the first race discovered.  The purpose of detecting
  $\hb{}$-schedulable races, instead of all HB-races, is to ensure
  that only correct races are reported.  However, since our algorithm
  for detecting $\hb{}$-schedulable races tracks additional vector
  clocks (namely $\LW_x$ for every memory location $x$), we would like
  to demonstrate the importance of such an additional book-keeping for
  ensuring soundness of happens-before based reasoning.

  \item \textbf{Prediction power:} As described in
  Section~\ref{sec:intro}, a na\"ive fix to the standard HB race
  detection algorithm is to employ the FHB algorithm --- after a race
  is discovered at an event, order the event with all conflicting
  events observed before it.  We would like to examine if the use of
  $\shb{}$-based reasoning enhances prediction power by detecting more
  races than this na\"ive strategy.  Further, we would like to
  evaluate if more powerful approaches like the use of SMT solvers
  in \rvpredict~give significantly more benefit as compared to our
  linear time streaming algorithm.

  \item \textbf{Scalability:} While Algorithm~\ref{algo:vc} runs in
  linear time, it tracks additional clocks ($\LW_x$ for every memory
  location $x$ accessed in the trace) over the standard HB vector
  clock algorithm.  Since this can potentially slow down analysis, we
  would like to evaluate the performance overhead due to this
  additional book-keeping.

  \item \textbf{Epoch optimization:} The standard epoch optimization
  popularized by \fasttrack~\cite{fasttrack} is designed to work for
  the case when all the writes to a memory location are totally
  ordered.  While this is true until the first race is discovered,
  this condition may not be guaranteed after the first race.  We will
  evaluate the effectiveness of our adaptation of this optimization to
  work beyond the first race.
\end{enumerate}


\subsection{Evaluation}


\begin{table*}[t!]
\captionsetup{font=small}

\scalebox{0.85}{
\centering

\begin{tabular}{!{\VRule[1pt]}c|c!{\VRule[1pt]}c|c|c!{\VRule[1pt]}c|c|c|c|c|c!{\VRule[1pt]}}

\specialrule{1pt}{0pt}{0pt}
1 & 2 & 3 & 4 & 5 & 6 & 7 & 8 & 9 & 10 & 11\\ 
\specialrule{1pt}{0pt}{0pt}

\cellcolor[HTML]{DDDDDD} Program
& \cellcolor[HTML]{DDDDDD} \, LOC \,
& \cellcolor[HTML]{DDDDDD} Thrds
& \cellcolor[HTML]{DDDDDD} Locks
& \cellcolor[HTML]{DDDDDD} Vars
& \multicolumn{6}{c!{\VRule[1pt]}}{\cellcolor[HTML]{DDDDDD}{Events}} \\

\cmidrule[0.5pt]{6-11}

\rowcolor[HTML]{DDDDDD} 
 \cellcolor[HTML]{DDDDDD} 
& \cellcolor[HTML]{DDDDDD}
& \cellcolor[HTML]{DDDDDD}
& \cellcolor[HTML]{DDDDDD} 
& \cellcolor[HTML]{DDDDDD} 
& \cellcolor[HTML]{DDDDDD} Total
& \cellcolor[HTML]{DDDDDD} Read
& \cellcolor[HTML]{DDDDDD} Write
& \cellcolor[HTML]{DDDDDD} Synch.
& \cellcolor[HTML]{DDDDDD} \, Fork \,
& \cellcolor[HTML]{DDDDDD} \, Join \, \\

\specialrule{1pt}{0pt}{0pt}

\textsf{airlinetickets} & 83 & 4 & 0 & 44 & 137 & 77 & 48 & 0 & 12 & 0 \\ 

\textsf{array} & 36 & 3 & 2 & 30 & 47 & 8 & 30 & 3 & 3 & 0 \\ 

\textsf{bufwriter} & 199 & 6 & 1 & 471 & 22.2K & 15.8K & 3.6K & 1.4K & 6 & 4 \\ 

\textsf{bubblesort} & 274 & 12 & 2 & 167 & 4.6K & 4.0K & 404 & 121 & 27 & 0 \\ 

\textsf{critical} & 63 & 4 & 0 & 30 & 55 & 18 & 31 & 0 & 4 & 2 \\ 

\textsf{mergesort} & 298 & 5 & 3 & 621 & 3.0K & 2.0K & 914 & 55 & 5 & 3 \\ 

\textsf{pingpong} & 124 & 6 & 0 & 51 & 147 & 57 & 71 & 0 & 19 & 0 \\ 

\textsf{moldyn} & 2.9K & 3 & 2 & 1.2K & 200.0K & 182.8K & 17.2K & 31 & 3 & 1 \\ 

\textsf{raytracer} & 2.9K & 3 & 8 & 3.9K & 15.8K & 10.4K & 5.3K & 60 & 3 & 1 \\ 

\textsf{derby} & 302K & 4 & 1112 & 185.6K & 1.3M & 879.5K & 404.5K & 31.2K & 4 & 2 \\ 

\textsf{ftpserver} & 32K & 11 & 301 & 5.5K & 49.0K & 30.0K & 7.8K & 5.6K & 11 & 4 \\ 

\textsf{jigsaw} & 101K & 11 & 275 & 103.5K & 3.1M & 2.6M & 413.5K & 5.9K & 13 & 4 \\ 

\textsf{xalan} & 180K & 6 & 2491 & 4.4M & 122.0M & 101.7M & 18.3M & 1M & 7 & 5 \\ 

\textsf{lusearch} & 410K & 7 & 118 & 5.2M & 216.4M & 162.1M & 53.9M & 206.6K & 7 & 0 \\ 

\textsf{eclipse} & 560K & 14 & 8263 & 10.6M & 87.1M & 72.6M & 12.9M & 765.4K & 16 & 3 \\ 

\specialrule{1pt}{0pt}{0pt}

\multicolumn{5}{!{\VRule[1pt]}c!{\VRule[1pt]}}{Total}  & 430.3M & 340.2M & 85.9M & 2.1M & 140 & 29 \\ 
\specialrule{1pt}{0pt}{0pt}
\end{tabular}
}
\caption{
Benchmarks and metadata of the traces generated.
Columns 1 and 2 describe the name and the lines of code in the source code of the chosen benchmarks.
Column 3, 4 and 5 describe respectively the number of \emph{active}\protect\footnotemark
threads, locks, and memory locations in the traces generated by the corresponding program in Column 1.
Column 6 reports the total number of events in the trace.
Columns 7, 8, 9, 10 and 11 respectively denote
the number of read, write, acquire (or release), fork and join events.
}
\label{tab:metadata}
\end{table*}

\footnotetext{a thread is active if there is an event $e = \ev{t, op}$ performed by the thread $t$ in the trace generated}

\begin{table*}[t]
\captionsetup{font=small}

\scalebox{0.83}{
\centering

\begin{tabular}{!{\VRule[1pt]}c|c!{\VRule[1pt]}!{\VRule[1pt]}c|c|c|c|c|c!{\VRule[1pt]}!{\VRule[1pt]}c|c|c|c!{\VRule[1pt]}}

\specialrule{1pt}{0pt}{0pt}
1 & 2 & 3 & 4 & 5 & 6 & 7 & 8 & 9 & 10 & 11 & 12\\ 
\specialrule{1pt}{0pt}{0pt}
 
\cellcolor[HTML]{DDDDDD} 
& \cellcolor[HTML]{DDDDDD} 
& \multicolumn{6}{c!{\VRule[1pt]}!{\VRule[1pt]}}{\cellcolor[HTML]{DDDDDD}{Races}} 
& \multicolumn{4}{c!{\VRule[1pt]}}{\cellcolor[HTML]{DDDDDD}{Warnings}}   \\

\cmidrule[0.5pt]{3-12}

 \cellcolor[HTML]{DDDDDD} Program 
& \cellcolor[HTML]{DDDDDD} \#Events
& \cellcolor[HTML]{DDDDDD}  \;\! HB \;\!
& \cellcolor[HTML]{DDDDDD}  SHB
& \cellcolor[HTML]{DDDDDD}  FHB
& \cellcolor[HTML]{DDDDDD}  WCP
& \multicolumn{2}{c!{\VRule[1pt]}!{\VRule[1pt]}}{\cellcolor[HTML]{DDDDDD}{\rvpredict}} 
& \cellcolor[HTML]{DDDDDD} HB
& \cellcolor[HTML]{DDDDDD} SHB
& \cellcolor[HTML]{DDDDDD} FHB
& \cellcolor[HTML]{DDDDDD} WCP \\

\cmidrule[0.5pt]{7-8}

\cellcolor[HTML]{DDDDDD} 
& \cellcolor[HTML]{DDDDDD} 
& \cellcolor[HTML]{DDDDDD} 
& \cellcolor[HTML]{DDDDDD} 
& \cellcolor[HTML]{DDDDDD} 
& \cellcolor[HTML]{DDDDDD} 
& \cellcolor[HTML]{DDDDDD} 1K/60s 
& \cellcolor[HTML]{DDDDDD} 10K/240s 
& \cellcolor[HTML]{DDDDDD} 
& \cellcolor[HTML]{DDDDDD} 
& \cellcolor[HTML]{DDDDDD} 
& \cellcolor[HTML]{DDDDDD} \\

\specialrule{1pt}{0pt}{0pt}

\textsf{airlinetickets} & 137 & 6 & 6 & 3 & 6 & 6 & 6 & 8 & 8 & 5 & 8 \\ 

\textsf{array} & 47 & 0 & 0 & 0 & 0 & 0 & 0 & 0 & 0 & 0 & 0\\ 

\textsf{bufwriter} & 22.2K & 2 & 2 & 2 & 2 & 2 & 0 & 8 & 8 & 8 & 8 \\ 

\textsf{bubblesort} & 4.6K & 6 & 6 & 6 & 6 & 6 & 0 & 602 & 269 & 100 & 612\\ 

\textsf{critical} & 55 & 8 & 8 & 1 & 8 & 8 & 8 & 3 & 3 & 1 & 3\\ 

\textsf{mergesort} & 3.0K & 3 & 1 & 1 & 3 & 1 & 2 & 52 & 1 & 1 & 52 \\ 

\textsf{pingpong} & 147 & 3 & 3 & 3 & 3 & 3 & 3 & 11 & 8 & 8 & 11\\ 

\textsf{moldyn} & 200.0K & 44 & 2 & 2 & 44 & 2 & 2 & 24657 & 103 & 103 & 24657 \\ 

\textsf{raytracer} & 15.8K & 3 & 3 & 3 & 3 & 2 & 3 & 118 & 8 & 8 & 118\\ 

\textsf{derby} & 1.3M & 26 & 13 & 11 & 26 & 12 & - & 89 & 29 & 28 & 89\\ 

\textsf{ftpserver} & 49.0K & 35 & 23 & 22 & 35 & 10 & 12 & 143 & 69 & 69 & 144\\ 

\textsf{jigsaw} & 3.1M & 8 & 4 & 4 & 10 & 4 & 2 & 14 & 4 & 4 & 17\\ 

\textsf{xalan} & 122.0M & 16 & 12 & 10 & 18 & 8 & 8 & 86 & 31 & 21 & 98\\ 

\textsf{lusearch} & 216.4M & 160 & 52 & 28 & 160 & 0 & 0 & 751002 & 232 & 119 & 751002\\ 

\textsf{eclipse} & 87.1M & 64 & 61 & 31 & 66 & 5 & 0 & 173 & 164 & 103 & 201\\ 

\specialrule{1pt}{0pt}{0pt}

Total & 430.3M & 384 & 196 & 127 & 390 & 69 & 46 & 776966 & 937 & 578 & 777020\\ 
\specialrule{1pt}{0pt}{0pt}

\end{tabular}
}
\caption{
Number of races detected and warnings raised.
Column 1 and 2 denote the benchmarks and the size of the traces generated.
Columns 3, 4, 5 and 6 respectively report the number of distinct program 
location pairs for which there are pair of events in
a race, as identified by HB, SHB, FHB and WCP.
Columns 7 and 8 denote the races reported by \rvpredict~when run with 
the parameters (\texttt{window-size=1K}, \texttt{solver-timeout=60s})
and (\texttt{window-size=10K}, \texttt{solver-timeout=240s}).
Columns 9, 10, 11 and 12 respectively denote the number of warnings 
generated when running the vector clock algorithms for
detecting races using HB (unsound), SHB (sound and complete for $\hb{}$-schedulable races), 
FHB (naive algorithm that forces an order after every race discovered) and WCP analyses.
}
\label{tab:races}
\end{table*}

Our experimental results are summarized in
Table~\ref{tab:metadata}, Table~\ref{tab:races} and Table~\ref{tab:times}. 
Table~\ref{tab:metadata} describes information about
generated execution logs.
Table~\ref{tab:races} depicts the number of races and warnings
raised by the different race detection algorithms.
Columns 3-8 in Table~\ref{tab:races} report the number of distinct pairs
$(pc_1, pc_2)$ of program locations corresponding
to an identified data race.
That is, for every \emph{event} race pair $(e_1, e_2)$ identified by the different
race detection algorithms, we identify the pair of \emph{program locations}
that give rise to this event pair and report the total
number of such program location pairs (counting the
pairs $(pc_1, pc_2)$ and $(pc_2, pc_1)$ only once).
Since each of the vector clock algorithms (HB, SHB, FHB and WCP)
only report whether the event being processed is in race
with some earlier event, we need to perform a separate analysis step
using the vector timestamps, to determine the actual pair of events
(and thus the corresponding pair of program locations) in race.
In Columns 9, 10, 11 and 12 in Table~\ref{tab:races} we report the number of \emph{warnings}
raised by the four vector clock algorithms---HB, SHB, FHB and WCP respectively.
A warning is raised when at a read/write event $e$, we determine if
the event $e$ is in race with an earlier event, 
counting multiple warnings for a single event only once.
In Table~\ref{tab:times}, Columns 2, 5, 8, 9, 10 and 11 respectively
report the time taken by
different analyses engines --- HB, SHB, FHB, WCP and \rvpredict--- on the trace generated.
We also measure the time taken by the epoch optimizations
for both HB and SHB vector clock algorithms (Columns 4 and 7 respectively)
and report the speedup thus obtained over the na\"ive vector clock algorithms 
(Columns 4 and 7 respectively). 
When analyzing the generated traces using WCP, we filter out events
that are thread local; this does not affect any races.
The memory requirement of a na\"ive vector clock algorithm for WCP, 
as described in~\cite{wcp2017} can be a bottleneck and
removing thread local events allowed us to analyze the larger traces 
(\textsf{xalan}, \textsf{lusearch} and \textsf{eclipse}) without any memory blowup.
We next discuss our results in detail.

\subsubsection{Reducing false positives}
First, observe that both the number of races reported (Columns 3, 4
and 5 in Table~\ref{tab:races}) and the number of warnings raised
(Columns 8, 9 and 10) by HB, SHB and FHB are monotonically decreasing,
as expected --- HB detects all $\hb{}$-schedulable races but
additional false races, SHB detects exactly the set of
$\hb{}$-schedulable races and FHB detects a subset of
$\hb{}$-schedulable races.  Next, the number of races reported by HB
can be way higher than the actual number of $\hb{}$-schedulable races
(see \textsf{moldyn} and \textsf{lusearch}).  Similarly, the number of
warnings raised can be an order of magnitude larger than those raised
by either SHB or FHB.  Clearly, many of these warnings are potentially
spurious.  Thus, an incorrect use of the popular HB algorithm can
severely hamper developer productivity, and completely defies the
point of using a sound race detection analysis technique.
Further, in each of the benchmarks, both the set of races as well 
as the set of warnings
reported by WCP were a superset of those reported by HB. 
This follows from the fact that WCP is a strictly weaker relation than HB.


\begin{figure}[b]
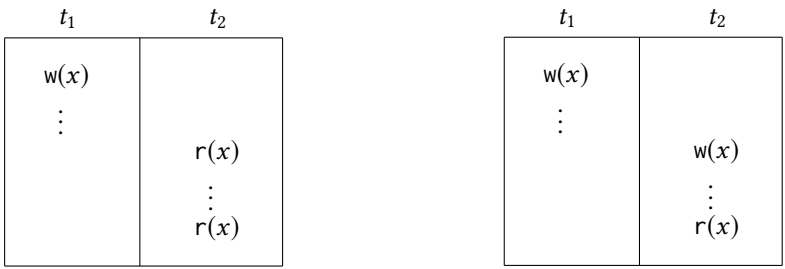

\centering
\begin{subfigure}{.4\textwidth}
  \centering
  \vspace{0.4cm}
    \executionnonumber{2}{
      \figevnonumber{1}{\wt(x)}
      \figevnonumber{1}{\,\,\,\vdots}
      \figevnonumber{2}{\rd(x)}
      \figevnonumber{2}{\,\,\,\vdots}
      \figevnonumber{2}{\rd(x)}
  }
  \caption{Incorrect race reported by HB but not by SHB}
  \label{fig:pattern1}
\end{subfigure}%
\hspace{1cm}
\begin{subfigure}{.4\textwidth}
  \centering
  \vspace{0.15in}
  \executionnonumber{2}{
      \figevnonumber{1}{\wt(x)}
      \figevnonumber{1}{\,\,\,\vdots}
      \figevnonumber{2}{\wt(x)}
      \figevnonumber{2}{\,\,\,\vdots}
      \figevnonumber{2}{\rd(x)}
  }
  \caption{Correct race missed by FHB but detected by SHB}
  \label{fig:pattern2}
\end{subfigure}
\caption{Common race patterns found in the benchmarks}
\label{fig:patterns}
\end{figure}

While Theorem~\ref{thm:SHBSoundness} guarantees that the each of the additional
race pairs reported by HB (over those reported by SHB) 
cannot be scheduled in any correct reordering of the observed trace 
that respects the induced $\hb{}$ partial order, it does not guarantee
that these extra races cannot be scheduled in \emph{any} correct reordering.
In order to see if the extra races reported by HB (Column 3 in Table~\ref{tab:races})
can be scheduled in a correct reordering that does not respect $\hb{}$ order,
we manually inspected the traces (annotated with their
vector timestamps) of \textsf{mergesort}, \textsf{moldyn}, \textsf{derby}, 
\textsf{ftpserver}, and \textsf{jigsaw}. In each of these benchmarks,
we found that all the extra race pairs reported by HB can indeed
\emph{not} be scheduled in \emph{any} correct reordering
(whether or not the correct reordering respects the induced $\hb{}$ partial order).
A common pattern that helped us conclude this observation has been depicted in
Fig.~\ref{fig:pattern1}.  Here, the trace writes to a memory location
$x$ in a thread $t_1$ (event $e_1$).  Then, sometime later, another
event $e_2$ performed by a different thread $t_2$ reads the value
written by $e_1$. This is then followed by other events of thread
$t_2$, not pertaining to memory location $x$.  Finally, thread $t_2$
reads the memory location $x$ again in event $e_3$.  This pattern is
commonly observed when a thread reads a shared variable (here, this
corresponds to the event $e_2$), takes a branch depending upon the
value observed and then accesses the shared memory again within the
branch.  HB misses this dependency relation thus induced, and
incorrectly reports that the pair $(e_1, e_3)$ is in race.  SHB, on
the other hand, correctly orders $e_1 \shb{} \ltho{}(e_3)$, and does not
report a race.

The two extra races reported by WCP but not by HB in \textsf{jigsaw}
could not be confirmed to be false positives.
Further, we did not inspect the extra races reported by HB or WCP (over SHB) 
in \textsf{xalan}, \textsf{lusearch}
and \textsf{eclipse} owing to time constraints.


\begin{table*}[t]
\captionsetup{font=small}

\scalebox{0.83}{
\centering

\begin{tabular}{!{\VRule[1pt]}c!{\VRule[1pt]}!{\VRule[1pt]}c|c|c!{\VRule[1pt]}c|c|c!{\VRule[1pt]}c!{\VRule[1pt]}c!{\VRule[1pt]}c|c!{\VRule[1pt]}}

\specialrule{1pt}{0pt}{0pt}
1 & 2 & 3 & 4 & 5 & 6 & 7 & 8 & 9 & 10 & 11\\ 
\specialrule{1pt}{0pt}{0pt}

 \cellcolor[HTML]{DDDDDD} 
& \multicolumn{3}{c!{\VRule[1pt]}}{\cellcolor[HTML]{DDDDDD}{HB}} 
& \multicolumn{3}{c!{\VRule[1pt]}}{\cellcolor[HTML]{DDDDDD}{SHB}} 
& \cellcolor[HTML]{DDDDDD} \, FHB \,
& \cellcolor[HTML]{DDDDDD} \, WCP \,
& \multicolumn{2}{c!{\VRule[1pt]}}{\cellcolor[HTML]{DDDDDD}{\rvpredict}} \\

\cmidrule[0.5pt]{2-7}

\cmidrule[0.5pt]{10-11}

\cellcolor[HTML]{DDDDDD} Program 
& \cellcolor[HTML]{DDDDDD} VC
& \cellcolor[HTML]{DDDDDD} Epoch
& \cellcolor[HTML]{DDDDDD} {\small Speed-up}
& \cellcolor[HTML]{DDDDDD} VC
& \cellcolor[HTML]{DDDDDD} Epoch
& \cellcolor[HTML]{DDDDDD} {\small Speed-up}
& \cellcolor[HTML]{DDDDDD} 
& \cellcolor[HTML]{DDDDDD} 
& \cellcolor[HTML]{DDDDDD} \;{\small 1K/60s}\;
& \cellcolor[HTML]{DDDDDD} {\small 10K/240s}\\

\cellcolor[HTML]{DDDDDD} 
& \cellcolor[HTML]{DDDDDD} (s)
& \cellcolor[HTML]{DDDDDD} (s)
& \cellcolor[HTML]{DDDDDD} 
& \cellcolor[HTML]{DDDDDD} (s)
& \cellcolor[HTML]{DDDDDD} (s)
& \cellcolor[HTML]{DDDDDD} 
& \cellcolor[HTML]{DDDDDD} (s)
& \cellcolor[HTML]{DDDDDD} (s)
& \cellcolor[HTML]{DDDDDD} (s)
& \cellcolor[HTML]{DDDDDD} (s)\\

\specialrule{1pt}{0pt}{0pt}

\textsf{airlinetickets} & 1.11 & 1.4 & - & 2.08 & 0.25 & 8.32x & 2.33 & 0.35 & 1.34 & 1.33 \\ 

\textsf{array} & 1.13 & 1.66 & - & 0.55 & 0.25 & 2.2x & 1.98 & 0.77 & 1.28 & 1.26 \\ 

\textsf{bufwriter} & 0.54 & 1.63 & - & 0.56 & 0.46 & 1.22x & 0.96 & 1.3 & 3.61 & 1879 \\ 

\textsf{bubblesort} & 0.47 & 0.29 & 1.62x & 0.78 & 0.34 & 2.3x & 0.51 & 0.63 & 2.46 & 652 \\ 

\textsf{critical} & 0.68 & 0.24 & 2.8x & 0.28 & 0.56 & - & 0.54 & 0.94 & 2.14 & 0.52 \\ 

\textsf{mergesort} & 0.4 & 1.58 & - & 0.43 & 0.35 & 1.23x & 0.7 & 0.8 & 0.67 & 0.81 \\ 

\textsf{pingpong} & 0.31 & 0.31 & 1x & 2.15 & 0.33 & 6.5x & 0.41 & 0.99 & 2.3 & 0.57 \\ 

\textsf{moldyn} & 1.45 & 2.78 & - & 1.04 & 1.49 & - & 1.77 & 1.81 & 2.27 & 2.94 \\ 

\textsf{raytracer} & 0.95 & 2.25 & - & 1.82 & 0.41 & 4.44x & 2.33 & 0.77 & 0.61 & 8.61 \\ 

\textsf{derby} & 4.27 & 2.91 & 1.46x & 4.39 & 3.05 & 1.44x & 5.56 & 9.24 & 72 & - \\ 

\textsf{ftpserver} & 0.84 & 0.84 & 1x & 0.87 & 1.33 & - & 2.94 & 1.64 & 1.23 & 164 \\ 

\textsf{jigsaw} & 23 & 8.1 & 2.84x & 10.04 & 7.35 & 1.37x & 9.31 & 11.11 & 1.66 & 245 \\ 

\textsf{xalan} & 217 & 152 & 1.42x & 222 & 184 & 1.21x & 291 & 290 & 44 & 420 \\ 

\textsf{lusearch} & 362 & 383 & - & 444 & 337 & 1.32x & 325 & 341 & 48 & 47 \\ 

\textsf{eclipse} & 525 & 200 & 2.63x & 238 & 188 & 1.27x & 168 & 512 & 25 & 951 \\ 



\specialrule{1pt}{0pt}{0pt}
			
\end{tabular}
}
\caption{
Time taken by different race detection algorithms on traces 
generated by the corresponding programs in Column 1.
Column 2 denotes the taken for analyzing the entire trace
with the \textsc{Djit}$^+$ vector-clock algorithm.
Column 3 denotes the time taken by \fasttrack-style
optimization over the basic \textsc{Djit}$^+$
and Column 4 denotes the speedup thus obtained.
Column 5 denotes the times for the vector clock implementation of
Algorithm~\ref{algo:vc}. Column 6 and 7 denote the time
and speedup due to the epoch optimization for SHB (Algorithm~\ref{algo:epoch}).
A `-' in Columns 5 and 8 denote a downgraded performance due
to epoch optimization.
Column 8 denotes the time to analyze the traces using FHB
(forcing an order in HB) analysis.
Column 9 reports the time to analyze the traces using WCP partial order.
The analysis in Column 9 is performed by filtering out thread local events
and includes the time for this filtering.
Column 10 and 11 respectively denote the time taken by
\rvpredict~using the parameters (\texttt{window-size=1K}, \texttt{solver-timeout=60s})
and (\texttt{window-size=10K}, \texttt{solver-timeout=240s}).
A `-' in Column 11 denotes that \rvpredict~did not finish
within the set time limit of $4$ hours.
}
\label{tab:times}
\end{table*}



\subsubsection{Prediction Power}
The na\"{i}ve algorithm FHB, while sound, can miss a lot of real races
(Column 5 in Table~\ref{tab:races}) and has a poor prediction power as
compared to the sound SHB algorithm.  See for
example, \textsf{lusearch} and \textsf{eclipse} where FHB misses
almost half the races reported by SHB.  Next, observe that
while \rvpredict, in theory, is \emph{maximally sound}, it can miss a
lot of races, sometimes even more than the naive FHB strategy (Columns
6 and 7 in Table~\ref{tab:races}).  This is because \rvpredict~relies
on SAT solving to determine data races.  As a result, in order to
scale to large traces obtained from real world software,
\rvpredict~resorts to \emph{windowing} --- dividing the trace into
smaller chunks and restricting its analysis to these smaller chunks.
This strategy, while useful for scalability, can miss data races that
are spread far across in the trace, yet can be captured using
happens-before like analysis.  Besides, since the underlying
DPLL-based SAT solvers may not terminate within reasonable
time, \rvpredict~sets a timeout for the solver --- this means that
even within a given window, \rvpredict~can miss races if the SAT
solver does not return an answer within the set timeout.  All these
observations clearly indicate the power of $\shb{}$-based reasoning.

Again, based on our manual inspection of program traces, we depict a common
pattern found in Fig.~\ref{fig:pattern2}.
Here, first, a thread $t_1$ writes to a shared variable $x$ (event $e_1$).
This is followed by another write to $x$ in a different thread $t_2$
(event $e_2$). Finally, the next access to $x$ is a read event $e_3$
performed by thread $t_2$.
While FHB correctly reports the first write-write race $(e_1, e_2)$,
it fails to detect the write-read race $(e_1, e_3)$ because of the
artificial order imposed between $e_1$ and $e_2$.
SHB, on the other hand, reports both $(e_1, e_2)$
and $(e_1, e_3)$ as $\hb{}$-schedulable races.



\subsubsection{Scalability}
First, the size of the traces, that SHB and the other three
linear time vector clock algorithms can handle, can be really large,
of the order of hundreds of millions (\textsf{xalan}, \textsf{lusearch}, etc.,).
In contrast, \rvpredict~fails to scale for large traces, even after
employing a windowing strategy.
This is especially pronounced for the larger traces 
(\textsf{bufwriter}, \textsf{derby}-\textsf{xalan}).
This suggests the power of using a linear time vector clock algorithm
for dynamic race detection for real-world applications.
The small and medium sized examples almost always
finish with a few seconds for each of HB, SHB, FHB and WCP.
The larger examples \textsf{xalan}, \textsf{lusearch}
and \textsf{eclipse} can take as much as 4-10 minutes
for the vector clock algorithms and 6-15 minutes for
\rvpredict's analysis with a window size of 10K and a solver timeout
of 4 minutes.

\subsubsection{Epoch optimization}
The epoch optimization is indeed effective in improving the
performance of vector-clock algorithms even when all the write events
to a memory location may not be totally ordered.  The speedups vary
from 1.2x to 8.3x on small and medium sized benchmarks and from 1.2x
to 2.9x on larger traces.  The speedup obtained for HB race detection
is, in general, less than in the case of SHB algorithm.  This is
expected since SHB is strictly stronger than HB --- every pair of
events ordered by $\hb{}$ is also ordered by $\shb{}$, and not every
pair of events ordered by $\shb{}$ may be ordered by $\hb{}$.  As a
result, a write event can get ordered after all previous write events
more frequently when using $\shb{}$ than when using $\hb{}$.  This
means that in the epoch optimization for SHB, the $\Ww_x$ clocks take
up epoch representation more frequently than in the HB algorithm, and
this difference is reflected in Columns 5 and 8 of
Table~\ref{tab:times}.


\section{Related Work}
\label{sec:related}

The notion of correct reorderings to characterize causality in
executions has been derived from~\cite{cp2012,wcp2017}.
In~\cite{rv2014} a similar notion, called \emph{feasible traces}
encompasses a more general causality model based on control flow information.
Weak happens-before~\cite{sen2005detecting},  
Mazurkiewicz equivalence~\cite{mazurkiewicz1987,Abdulla14} 
and observation equivalence~\cite{Chalupa2017} are other models
that attempt to characterize causality.
Many of these models also incorporate the notion of last-write causality, similar to SHB.
However, these algorithms use expensive search algorithms like SAT solving
to explore the space of correct reorderings, unlike
a linear time vector clock algorithm like that for SHB.
Our experimental evaluation concurs with this observation.
Similar dependency relation called \emph{reads-from} is 
also used to characterize weak memory consistency 
semantics~\cite{Alglave:2014:HCM:2633904.2627752,HuangTSO16}. 

Race detection techniques can be broadly
classified as either being
static or dynamic.
Static race detection~\cite{Naik:2006:ESR:1133255.1134018,pratikakis11locksmith,Radoi:2013:PSR:2483760.2483765,racerx,voung2007relay,heisenbugs,Yahav:2001:VSP:373243.360206,echo}
is the problem of detecting if a program has an execution that exhibits
a data race, by analyzing its source code.
This problem, in its full generality, is undecidable
and practical tools employing static analysis techniques often face
a trade-off between scalability and precision.
Further, the use of such techniques often require the
programmer to add annotations to help guide static race detectors.

Dynamic race detection techniques, on the other hand,
examine a single execution of the
program to discover a data race in the program.
A large number of tools employing dynamic analysis are 
based on lockset-like analysis proposed by Eraser~\cite{savage1997eraser}.
Here, one tracks, for each memory location accessed, the set of
locks that protect the memory location on each access.
If this lockset becomes empty during the program execution,
a warning is issued.
Lockset-based analysis suffers from false positives.
Other dynamic race detectors  employ happens-before~\cite{lamport1978time} based
analysis. 
These include the use of 
vector clock~\cite{Mattern1988,Fidge:1991:LTD:112827.112860} 
algorithms such as \textsc{Djit}$^+$~\cite{Pozniansky:2003:EOD:966049.781529} and \fasttrack~\cite{fasttrack}
and the use of sets of threads and locks, as in, GoldiLocks~\cite{elmas2007goldilocks}.
As demonstrated in this paper, happens-before based analysis is 
sound only if limited to detecting the first race.
Other techniques can be categorized as \emph{predictive}
and can detect races missed by HB by exploring more correct reorderings of an observed trace.
These include use of SMT solvers~\cite{Said2011,rv2014,ipa2016,Huang2016}
and other techniques based on weakening the HB partial order
including CP~\cite{cp2012} and WCP~\cite{wcp2017}.
Amongst these, WCP is the only technique that has a linear running
time and is known to scale to large traces.
The soundness guarantee of partial order based techniques, like WCP and CP,
is again, limited to the first race.
Nevertheless, they do detect subtle races that HB can miss.
Our approach complements this line of research.
Other dynamic techniques such as random 
testing~\cite{Sen:2008:RDR:1375581.1375584},
sampling~\cite{marino2009literace,Erickson:2010:EDD:1924943.1924954},
and hybrid race detection~\cite{choi:2003:HDD:781498.781528}
are based on both locksets and happens-before relation.

\section{Conclusion}
\label{sec:conclusion}

Happens-before is a powerful technique that can be used to effectively
detect for races.  However, the detection power of HB is limited only
until the first race is identified.  We characterize when an HB-race,
beyond just the first race, can be scheduled in an alternate
reordering, by introducing a new partial order called SHB which
identifies all HB-schedulable races. SHB can be implemented in a
vector clock algorithm, which is only slightly different from HB
vector clock algorithm, and thus, existing race detection tools can
easily incorporate it to enhance their race detection capability.
Also, standard epoch like optimizations can be employed to improve the
performance of the basic algorithm.  We show, through extensive
experimental evaluation, the value our approach adds to sound race
detection tools.

In the future, we would like to extend the work for weaker partial
orders like CP~\cite{cp2012} and WCP~\cite{wcp2017}. Incorporating
control flow and data flow information in the traces is another
promising direction.


\begin{acks}                            
  We gratefully acknowledge National Science Foundation for supporting
 Umang Mathur (grant NSF CSR 1422798) and Mahesh Viswanathan (NSF CPS 1329991).
\end{acks}

\clearpage

\bibliography{references}


\clearpage

\appendix
\section{Proof of Theorem~\ref{thm:SHBSoundness}}
\label{app:shb-proof}

In this section, we prove Theorem~\ref{thm:SHBSoundness}. We begin
with a couple of technical lemmas.

\begin{lemma}\label{lem:cr_hb}
Let $\tr$ be a trace and $\tr'$ be a correct reordering of $\tr'$ that
respects $\hb{\tr}$.  For any $e,e'$ such that $e \shb{\tr} e'$, if
$e' \in \events{\tr'}$ and $e'$ is not the last read event of its
thread in $\tr'$, then $e \in \events{\tr'}$ and $e \trord{\tr'} e'$.
\end{lemma}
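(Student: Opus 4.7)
I would prove the lemma by induction on the length $n$ of a shortest $\shb{\tr}$-chain from $e$ to $e'$, namely a sequence $e = f_0, f_1, \ldots, f_n = e'$ in which every consecutive pair is either HB-related or a last-write pair $f_i = \lw{\tr}(f_{i+1})$ with $f_{i+1} \in \reads{\tr}$. To make the induction propagate cleanly, I would actually prove a slightly stronger statement, replacing the hypothesis by the weaker condition ``$e'$ is not a read, or $e'$ is not the last event of its thread in $\tr'$''. This weaker hypothesis is implied by the one in the lemma statement: if $e'$ is a read that is not the last read of its thread in $\tr'$, then it has a later read in its thread, hence in particular a later event. The base case $n = 0$ is immediate.

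For the inductive step, I split on the terminal edge $f_{n-1} \to e'$. If it is an HB edge, then $\tr'$ respecting $\hb{\tr}$ together with $e' \in \events{\tr'}$ directly gives $f_{n-1} \in \events{\tr'}$ and $f_{n-1} \trord{\tr'} e'$. If it is a last-write edge, then $e'$ is a read and the weakened hypothesis guarantees $e'$ is not the last event of its thread in $\tr'$; Definition~\ref{def:correct-reorder}(b) then forces $\lw{\tr'}(e') = \lw{\tr}(e') = f_{n-1}$, so again $f_{n-1} \in \events{\tr'}$ and $f_{n-1} \trord{\tr'} e'$.

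To apply the induction hypothesis to the shorter chain $e \shb{\tr} f_{n-1}$, I must verify $f_{n-1}$ itself meets the weakened hypothesis. In the last-write case $f_{n-1}$ is a write, so the condition is vacuous. In the HB-edge case, if $f_{n-1}$ happens to be a read, then since a read cannot be the source of a release--acquire edge, the first step of any expansion of $f_{n-1} \hb{\tr} e'$ into primitive (thread-order and release--acquire) edges must be a thread-order step; consequently the immediate thread-successor $f_{n-1}^+$ of $f_{n-1}$ in $\tr$ satisfies $f_{n-1}^+ \hb{\tr} e'$, and by $\hb{\tr}$-respecting lies in $\tr'$ strictly after $f_{n-1}$, showing $f_{n-1}$ is not the last event of its thread in $\tr'$. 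The IH then yields $e \in \events{\tr'}$ and $e \trord{\tr'} f_{n-1}$, which combined with $f_{n-1} \trord{\tr'} e'$ closes the induction.

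The subtle point I expect to be the main obstacle is the asymmetry between the lemma's stated hypothesis, which constrains the last \emph{read} of $e'$'s thread in $\tr'$, and the property actually preserved backward through the chain, which constrains the last \emph{event}. Carrying the ``last event'' version through the induction avoids the need to track \emph{read} successors across arbitrarily long HB segments, while still subsuming the stated form via the implication noted above.
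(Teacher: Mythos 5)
Your proof is correct and follows essentially the same route as the paper's: a backward induction along a chain of generating edges for $e \shb{\tr} e'$, using the $\hb{\tr}$-respecting property to pull HB-predecessors into $\tr'$ and Definition~\ref{def:correct-reorder}(b) to pull in last-write predecessors. The invariant you carry --- ``not a read, or not the last event of its thread in $\tr'$'' --- is in fact exactly the condition the paper's induction establishes and uses (its stated invariant is ``not the last read event,'' but in the thread-order case it only derives ``not the last event,'' which is also all that Definition~\ref{def:correct-reorder}(b) requires), so your reformulation is a faithful and slightly more careful rendering of the same argument.
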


\begin{proof}
Consider any $e,e'$ such that $e \shb{\tr} e'$, $e' \in \events{\tr'}$
and $e' = \ev{t, op}$ is not the last read event of the thread $t$ in
the trace $\tr'$.  Then it follows from Definition~\ref{def:shb} that
there is a sequence $e = e_0, e_1, \ldots e_n = e'$ such that for
every $i \leq n-1$, $e_i \trord{\tr} e_{i+1}$ and either (a)
$e_i \tho{\tr} e_{i+1}$ or (b) $e_i = \ev{t_i,\rel(\lk)}$, $e_{i+1}
= \ev{t_{i+1},\acq(\lk)}$, or (c) $e_{i+1} \in \reads{\tr}$ and $e_i
= \lw{\tr}(e_{i+1})$.

We will prove by induction on $i$, starting from $i = n$, that
$e_i \in \events{\tr'}$ and $e_i$ is not the last read event of its
thread in $\tr'$. Observe that these properties hold for $e' = e_n$
--- $e_n \in \events{\tr'}$ and $e_n$ is not the last read event of
its thread in $\tr'$. Assume we have established the claim for
$e_{i+1}$. Now there are three cases to consider for $e_i$. If
$e_i \tho{\tr'} e_{i+1}$ then clearly $e_i \in \events{\tr'}$ because
$e_{i+1} \in \events{\tr'}$. Further, if $e_i$ is a read event, then
it is not the last event of its thread because $e_{i+1}$ is after
it. If $e_i = \ev{t_i,\rel(\lk)}$ and $e_{i+1}
= \ev{t_{i+1},\acq(\lk)}$ then $e_i \in \events{\tr'}$ because $\tr'$
respects $\hb{\tr}$. Further $e_i$ is not the last read event because
it is not a read event! The last case to consider is where $e_i
= \lw{\tr}(e_{i+1})$. In this case, by induction hypothesis, we know
that $e_{i+1}$ is not the last read event of its thread, and therefore
by properties of a correct reordering, we have
$e_i \in \events{\tr'}$. Notice that in this case $e_i$ is not a read
event, and so the claim holds. Thus, we have established that $e =
e_0 \in \events{\tr'}$.

Next, we show that for every $i \leq n-1$, $e_i \trord{\tr'}
e_{i+1}$. If $e_i \tho{\tr} e_{i+1}$ or $e_i = \ev{t_i,\rel(\lk)}$ and
$e_{i+1} = \ev{t_{i+1},\acq(\lk)}$ with $e_i \trord{\tr} e_{i+1}$ then
$e_i \trord{\tr'} e_{i+1}$ because $\tr'$ respects $\hb{\tr}$. On the
other hand, if $e_i = \lw{\tr}(e_{i+1})$ then because $\tr'$ is a
correct reordering of $\tr$ and $e_{i+1}$ is not the last read event
of its thread (established in the previous paragraph), we have $e_i
= \lw{\tr}(e_{i+1}) = \lw{\tr'}(e_{i+1})$.  This establishes the fact
that $e = e_0 \trord{\tr'} e_n = e'$, which completes the proof of the
lemma.
\end{proof}

A slightly weaker form of the converse of Lemma~\ref{lem:cr_hb} also
holds.
\begin{lemma}\label{lem:shb_cr_hb}
For a trace $\tr$, let $\tr'$ be a trace with
$\events{\tr'} \subseteq \events{\tr}$ such that (a) $\tr'$ is
$\shb{\tr}$ downward closed, i.e., for any $e,e' \in \events{\tr}$ if
$e \shb{\tr} e'$ and $e' \in \events{\tr'}$ then
$e \in \events{\tr'}$, and (b) $\trord{\tr'} = \trord{\tr} \cap
(\events{\tr'} \times \events{\tr'})$. Then $\tr'$ is a correct
reordering of $\tr$ that respects $\hb{\tr}$.  Further,
for \textbf{every} read event $e \in \reads{\tr'}$, we have
$\lw{\tr'}(e) \simeq \lw{\tr}(e)$, i.e., either both $\lw{\tr'}(e)$
and $\lw{\tr}(e)$ are undefined, or they are both defined and equal.
\end{lemma}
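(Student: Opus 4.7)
The plan is to verify each of the three ingredients needed to qualify $\tr'$ as a correct reordering of $\tr$ that respects $\hb{\tr}$: (1) $\tr'$ respects $\hb{\tr}$, (2) for every thread $t$, $\proj{\tr'}{t}$ is a prefix of $\proj{\tr}{t}$, and (3) last-writes agree, even in the stronger form stated (no exception for the last event of a thread). Since the hypothesis already states that $\tr'$ is a trace, well-formedness with respect to lock nesting is not something I have to reprove.

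For (1), I would just note that $\hb{\tr} \subseteq \shb{\tr}$ by Definition~\ref{def:shb}(a), so the $\shb{\tr}$-downward closure in hypothesis (a) immediately implies $\hb{\tr}$-downward closure; combined with hypothesis (b), which says $\trord{\tr'}$ is exactly the restriction of $\trord{\tr}$ to $\events{\tr'}$, this is precisely Definition~\ref{def:hb-respecting}. For (2), I would use that $\tho{\tr} \subseteq \hb{\tr} \subseteq \shb{\tr}$: if $e' \in \events{\tr'}$ is an event of thread $t$, every earlier event of thread $t$ in $\tr$ is $\tho{\tr}$-below, hence $\shb{\tr}$-below, hence in $\events{\tr'}$ by (a). Combined with hypothesis (b), $\proj{\tr'}{t}$ is downward closed in $\proj{\tr}{t}$ under the total order $\trord{\tr}$, which is the definition of being a prefix.

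For (3), let $e = \ev{t, \rd(x)} \in \reads{\tr'}$. First suppose $\lw{\tr}(e) = e'$ exists. By Definition~\ref{def:shb}(b), $e' \shb{\tr} e$, so hypothesis (a) gives $e' \in \events{\tr'}$ and hypothesis (b) gives $e' \trord{\tr'} e$. If some $e'' \in \writes{\tr'}(x)$ satisfied $e' \trord{\tr'} e'' \trord{\tr'} e$, then by (b) the same chain holds in $\trord{\tr}$, contradicting $e' = \lw{\tr}(e)$; therefore $\lw{\tr'}(e) = e'$. Conversely, if $\lw{\tr}(e)$ is undefined but $\lw{\tr'}(e) = e''$ existed, then by (b) we would have $e'' \in \writes{\tr}(x)$ with $e'' \trord{\tr} e$, a contradiction. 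Hence $\lw{\tr'}(e) \simeq \lw{\tr}(e)$ for every read $e$, which is stronger than what Definition~\ref{def:correct-reorder}(b) demands.

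The argument is largely a routine unpacking of definitions once one notices the key design feature: the last-write edges built into $\shb{\tr}$ (clause (b) of Definition~\ref{def:shb}) are exactly what is needed to force $\lw{\tr}(e)$ into $\tr'$ before $e$, and hypothesis (b) is what rules out any spurious write sneaking in between. The only place I anticipate having to pause is in the converse direction of (3), to make sure that the argument does not silently assume $\lw{\tr}(e)$ exists; this is why I split the case analysis explicitly into the ``defined'' and ``undefined'' subcases above, each handled by a short use of hypothesis (b).
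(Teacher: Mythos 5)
Your argument is correct and, for the happens-before, thread-prefix, and last-write components, it is essentially the paper's own proof: the same appeal to $\hb{\tr}\subseteq\shb{\tr}\subseteq\trord{\tr}$ for respecting $\hb{\tr}$ and for prefix-closure of each $\proj{\tr'}{t}$, and the same two-directional case analysis (via the $\lw{}$-edges of Definition~\ref{def:shb}(b) and hypothesis (b)) for the strengthened last-write claim. The one place you genuinely diverge is your decision to skip lock semantics on the grounds that the hypothesis already calls $\tr'$ a trace. The paper does not skip this: it explicitly shows that for any two acquires $e_1=\ev{t_1,\acq(\lk)}$ and $e_2=\ev{t_2,\acq(\lk)}$ in $\events{\tr'}$ with $e_1\trord{\tr}e_2$, the matching release $e_1'$ of $e_1$ satisfies $e_1\hb{\tr}e_1'\hb{\tr}e_2$, hence lies in $\events{\tr'}$ between them. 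The reason this matters is how the lemma is deployed in the proof of Theorem~\ref{thm:SHBSoundness}: there, $\tr''$ is manufactured as the set $\setreq$ ordered by $\trord{\tr}$, i.e., as a bare event sequence whose well-formedness is not known in advance, and the lemma is the place where one certifies that this sequence is in fact a legal trace. Your literal reading of the hypothesis makes your proof of the stated lemma technically complete, but it quietly relocates the well-formedness obligation to every caller; if you keep your reading, you should at least note that downward closure under $\shb{\tr}$ (which contains the $\rel$--$\acq$ edges of $\hb{\tr}$) together with hypothesis (b) is what guarantees $\proj{\tr'}{\lk}$ remains a prefix of a word in $(\cup_{t}\ev{t,\acq(\lk)}\cdot\ev{t,\rel(\lk)})^*$, since an arbitrary $\trord{\tr}$-ordered subset of $\events{\tr}$ would not satisfy this.
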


\begin{proof}
The trace $\tr'$ in the lemma is such that the events in $\tr'$ are
downward closed with respect to $\shb{\tr}$ and in $\tr'$ they are
ordered in exactly the same way as in $\tr$. The fact that $\tr'$
respects $\hb{\tr}$ simply follows from the fact that
$\hb{\tr} \subseteq \shb{\tr}$ and
$\hb{\tr} \subseteq \trord{\tr}$. So the main goal is to establish
that $\tr'$ is a correct reordering of $\tr$ that preserves the last
writes of \emph{all} read events.

First we show that $\tr'$ respects lock semantics. Suppose $e_1
= \ev{t_1, \acq(\lk)}$ and $e_2 = \ev{t_2, \acq(\lk)}$ are two lock
acquire events for some lock $\lk$ such that $e_1 \trord{\tr} e_2$ and
$\{e_1,e_2\} \subseteq \events{\tr'}$. Let $e_1'$ be the matching
release event for $e_1$ in $\tr$; such an $e_1'$ exists because $\tr$
is a valid trace. Then we have $e_1 \hb{\tr} e_1' \hb{\tr} e_2$, and
so $e_1' \in \events{\tr'}$ and $e_1' \trord{\tr'} e_2$ because $\tr'$
respects $\hb{\tr}$.

Next observe that since $\tho{\tr} \subseteq \hb{\tr}$ and $\tr'$
respects $\hb{\tr}$, we can conclude that $\proj{\tr'}{t}$ is a prefix
of $\proj{\tr}{t}$ for any thread $t$.

Finally, consider any $e' \in \reads{\tr'}$. Suppose $\lw{\tr}(e')$ is
defined. Let $e = \lw{\tr}(e')$. Since $e \shb{\tr} e'$ and $\tr'$ is
downward closed with respect to $\shb{\tr}$, we have
$e \in \events{\tr'}$. Let $e_1 = \lw{\tr'}(e')$. We need to argue
that $e_1 = e$. Suppose (for contradiction) it is not, i.e., $e \neq
e_1$. Then either $e_1 \trord{\tr} e$ or $e' \trord{\tr} e_1$, because
$e = \lw{\tr}(e')$. However, the fact that $e_1 = \lw{\tr'}(e')$
contradicts the fact that $\trord{\tr'} = \trord{\tr} \cap
(\events{\tr'} \times \events{\tr'})$. Conversely, if $\lw{\tr'}(e')$
is defined then let $e = \lw{\tr'}(e')$. Since $\trord{\tr'}
= \trord{\tr} \cap (\events{\tr'} \times \events{\tr'})$, we have
$e \trord{\tr} e'$. Thus, $\lw{\tr}(e')$ is defined. Let $e_1
= \lw{\tr}(e')$. Once again, since $e_1 \shb{\tr} e'$, and $\tr'$ is
downward closed with respect to $\shb{\tr}$, we have
$e_1 \in \tr'$. Just like in the previous direction, we can conclude
that $e = e_1$ because otherwise we violate the fact that
$\trord{\tr'}$ is identical to $\trord{\tr}$ over $\events{\tr'}$.
\end{proof}

We now prove Theorem~\ref{thm:SHBSoundness} below
\begin{reptheorem}{thm:SHBSoundness}
Let $\tr$ be a trace and $e_1 \trord{\tr} e_2$ be conflicting events in $\tr$.
$(e_1, e_2)$ is an $\hb{\tr}$-schedulable race iff 
either $\ltho{\tr}(e_2)$ is undefined, or $e_1 \not\leq^\tr_{\mathsf{SHB}} \ltho{\tr}(e_2)$.
\end{reptheorem}

\begin{proof}
Let us first prove the forward direction.
That is, let $(e_1,e_2)$ be an HB-race such that the
event $e = \ltho{\tr}(e_2)$ is defined and $e_1 \shb{\tr} e$. 
Consider any correct reordering $\tr'$ that contains both
$e_1$ and $e_2$ and respects $\hb{\tr}$. 
First, since $\tr'$ is a correct reordering of $\tr$, we must have
$e \in \events{\tr'}$ and $e \trord{\tr'} e_2$.
Further, since $e_1 \shb{\tr} e$, from Lemma~\ref{lem:cr_hb}, 
$e_1 \trord{\tr'} e$.
Thus, we have that $e_1 \trord{\tr'} e \trord{\tr'} e_1$
for any correct reordering $\tr'$ of $\tr$ that respects $\hb{\tr}$.
This means, $(e_1, e_2)$ cannot be a $\hb{\tr}$-schedulable race.

We now prove the backward direction.
Consider an HB-race $(e_1,e_2)$ such that 
either $\ltho{\tr}(e_2)$ is undefined, or if it exists, then it satisfies 
$e_1 \not\leq^\tr_{\mathsf{SHB}} \ltho{\tr}(e_2)$.
Consider the set $\setreq$ defined as
\[
\setreq = \{e \in \events{\tr} \setminus\{e_1,e_2\}\: |\: 
        e \shb{\tr} e_1 \mbox{ or } 
        e \shb{\tr} \ltho{\tr}(e_2)\}
\]
where we assume that if $\ltho{\tr}(e_2)$ is undefined then no event
$e$ satisfies the condition $e \shb{\tr} \ltho{\tr}(e_2)$.

First we will show that $\setreq$ is downward closed with respect to
$\shb{\tr}$. Consider $e,e'$ such that $e \shb{\tr} e'$ and
$e' \in \setreq$. By definition of $\setreq$, we have
$e' \not\in \{e_1,e_2\}$ and either $e' \shb{\tr} e_1$ or
$e' \shb{\tr} \ltho{\tr}(e_2)$. Observe that if
$e \not\in \{e_1,e_2\}$, then it is clear that $e \in \setreq$ by
definition since $\shb{\tr}$ is transitive. It is easy to see that
$e \neq e_2$ --- this is because since $e' \neq e_2$, and
$\shb{\tr} \subseteq \trord{\tr}$, $e' \stricttrord{\tr} e_2$ and so
$e \stricttrord{\tr} e_2$. So, all we have left to establish is that
$e \neq e_1$. Suppose for contradiction $e = e_1$. Then it must be the
case that $e' \shb{\tr} \ltho{\tr}(e_2)$. Since $e_1 = e \shb{\tr}
e' \shb{\tr} \ltho{\tr}(e_2)$, we have
$e_1 \shb{\tr} \ltho{\tr}(e_2)$, which contradicts our assumption
about $(e_1,e_2)$.

Let us now consider a trace $\tr''$ which consists of the events in
$\setreq$ ordered according to $\trord{\tr}$. That is, $\trord{\tr''}
= \trord{\tr} \cap (\setreq \times \setreq)$. Since $\tr''$ satisfies the
conditions of Lemma~\ref{lem:shb_cr_hb}, we can conclude that $\tr''$
is a correct reordering of $\tr$ that respects $\hb{\tr}$ and preserves the
last-writes of \emph{every} read event present.

Consider the trace $\tr' = \tr''e_1e_2$. First we prove that $\tr'$
respects $\hb{\tr}$. To do that, we first show that for any event
$e\in \events{\tr}$ such that $e \hb{\tr} e_1$ and $e \neq e_1$, or
$e \hb{\tr} e_2$ and $e \neq e_2$, then $e \in \setreq$.  If
$e \hb{\tr} e_1$ then $e \shb{\tr} e_1$ and so $e \in \setreq$. On the
other hand, if $e \hb{\tr} e_2$ (and $e \neq e_2$), since $(e_1,e_2)$
is an HB-race, we must have $e \neq e_1$ and
$e \hb{\tr} \ltho{\tr}(e_2)$. So $e \in \setreq$. Now the fact $\tr'$
respects $\hb{\tr}$ follows from the fact that $\tr''$ respects
$\hb{\tr}$ and the claim just proved.

We now prove that $\tr'$ is a correct reordering. Observe that since
$\tr'$ respects $\hb{\tr}$, $\tr'$ is well formed (lock semantics is
not violated) and preserves thread-wise prefixes ($\forall
t, \proj{\tr'}{t}$ is a prefix of $\proj{\tr}{t}$).  Further, $\tr''$
is such that every read event in $\tr''$ reads the same last write as
in $\tr$.  Also, since $e_1$ and $e_2$ are the last events in their
threads in $\tr'$, we conclude that $\tr'$ is a correct reordering of
$\tr$ that respects $\hb{\tr}$.
\end{proof}

\section{Proofs for Algorithm~\ref{algo:vc}}
\label{app:algo}


We now prove Theorem~\ref{thm:isomorphicVC}, which states the
correctness of Algorithm~\ref{algo:vc}. Before establishing this claim
we would like to introduce some notation and prove some auxiliary
claims. 

Let us fix a trace $\tr$. Recall that for any event $e$, $C_e$ is the
(vector) timestamp assigned by Algorithm~\ref{algo:vc}. Let us denote
by $L_\lk^e$ the value of clock $\Ll_\lk$ just before the event $e$ is
processed. Similarly, let $LW_x^e$ denote the value of clock $\LW_x$
just before $e$ is processed. It is easy to see that the following
invariant is maintained by Algorithm~\ref{algo:vc}.
\begin{proposition}
\label{prop:vc-invariant1}
Let $e$ be an arbitrary event of trace $\tr$. Let $e_\lk$ be the last
$\rel(\lk)$-event in $\tr$ before $e$, and let $e_x$ be the last
$\wt(x)$-event in $\tr$ before $e$ (with respect to
$\trord{\tr}$). Note that $e_\lk$ and $e_x$ maybe undefined. Then,
$L_\lk^e = C_{e_\lk}$ and $LW_x^e = C_{e_x}$, where if an event $f$ is
undefined, we take $C_f = \bot$.
\end{proposition}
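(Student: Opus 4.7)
The plan is to proceed by induction on the position of $e$ in the trace $\tr$. For the base case, when $e$ is the first event of $\tr$, the \texttt{Initialization} procedure has set every $\Ll_\lk$ and every $\LW_x$ to $\bot$, and no prior $\rel(\lk)$ or $\wt(x)$ event exists, so $e_\lk$ and $e_x$ are undefined and by the stated convention $C_{e_\lk} = C_{e_x} = \bot$, matching the clock values.

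For the inductive step, I would assume the invariant holds just before processing some event $e$ and establish it for the immediately succeeding event $e'$ in $\tr$. The proof then reduces to a case analysis on the handler invoked for $e$. The key structural observation is that $\Ll_\lk$ is written only by the \texttt{release}$(t,\lk)$ handler (line 11) and $\LW_x$ is written only by the \texttt{write}$(t,x)$ handler (line 28); every other handler, as well as \texttt{release}$(t,\lk')$ for $\lk'\neq\lk$ and \texttt{write}$(t,x')$ for $x'\neq x$, leaves the relevant clock untouched. In all those cases, the last $\rel(\lk)$ (or $\wt(x)$) event preceding $e'$ coincides with the last such event preceding $e$, so the invariant for $e'$ follows directly from the inductive hypothesis for $e$.

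The only substantive cases are $e = \ev{t,\rel(\lk)}$ (for the $\Ll_\lk$ component) and $e = \ev{t,\wt(x)}$ (for the $\LW_x$ component). I expect the main obstacle to be reconciling the stored clock value with the definition of $C_e$, because of the interplay between assigning the clock and the subsequent local increment. Recall from Section~\ref{subsec:vc} that for release (resp.~write) events $C_e = \Cc_t[(c-1)/t]$, where $c$ is the value of $\Cc_t(t)$ after the handler completes. Since line 11 assigns $\Ll_\lk := \Cc_t$ \emph{before} line 12 increments $\Cc_t(t)$, the value of $\Ll_\lk$ just before $e'$ is processed equals the pre-increment $\Cc_t$, which is exactly $C_e$. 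Because $e$ itself is now the last $\rel(\lk)$-event preceding $e'$, this yields $L_\lk^{e'} = C_e = C_{e'_\lk}$, as required. The \texttt{write} handler is entirely analogous, with lines 28 and 30 playing the roles of lines 11 and 12, closing the induction.
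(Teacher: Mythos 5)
Your proof is correct and is essentially the argument the paper has in mind: the paper's own proof is a one-line remark that the claim ``follows from the way vector clocks $\Ll_\lk$ and $\LW_x$ are updated,'' and your induction with the case analysis on handlers simply makes that explicit. You also correctly handle the one subtle point the paper glosses over, namely that the assignment at line 11 (resp.\ 28) happens before the local increment at line 12 (resp.\ 30), so the stored value coincides with $C_e$ as defined in Section~\ref{subsec:vc}.
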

\begin{proof}
The observation follows from the way vector clocks $\Ll_\lk$ and
$\LW_x$ are updated.
\end{proof}

Another invariant that follows from the update rules of
Algorithm~\ref{algo:vc} is the following.
\begin{proposition}
\label{prop:vc-invariant2}
Let $e_1$ and $e_2$ be events of thread $t$ such that $e_1 \trord{\tr}
e_2$, i.e., $e_1 \tho{\tr} e_2$. Let $t'$ be any thread such that
$t \neq t'$. Then the following observations hold.
\begin{enumerate}
\item\label{lbl:monotonic} $C_{e_1} \cle C_{e_2}$.
\item\label{lbl:remote-clk-upd} $C_{e_1}(t') = C_{e_2}(t')$ unless
  there is an event $e$ of thread $t$ that is either an $\acq$-event, or
  a $\rd$-event, or a $\join$-event such that $e \neq e_1$ and
  $e_1 \trord{\tr} e \trord{\tr} e_2$.
\item\label{lbl:local-clk-upd} $C_{e_1}(t) = C_{e_2}(t)$ unless there
  is an event $e$ of thread $t$ that is either a $\rel$-event, or a
  $\wt$-event, or a $\fork$-event such that $e \neq e_2$ and
  $e_1 \trord{\tr} e \trord{\tr} e_2$; in this case $C_{e_1}(t) <
  C_{e_2}(t)$.
\end{enumerate}
\end{proposition}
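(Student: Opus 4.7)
The plan is to prove all three parts together by analyzing precisely how each handler in Algorithm~\ref{algo:vc} mutates the clock $\Cc_t$ between two thread-ordered events $e_1, e_2$ of thread $t$, and then translating those mutations into statements about the timestamps $C_e$. The first step is to observe that $\Cc_t$ is modified only during handlers whose performing thread is $t$; inspecting the algorithm handler by handler, the $\rel$, $\wt$, and $\fork$ handlers leave every component $\Cc_t(t')$ with $t' \neq t$ untouched and increment $\Cc_t(t)$ by one at the end of the handler, whereas the $\acq$, $\rd$, and $\join$ handlers update $\Cc_t$ by a pointwise maximum with $\Ll_\lk$, $\LW_x$, or $\Cc_u$ respectively, possibly raising any component but not performing any explicit increment of the local one.

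Next I would establish the auxiliary invariant that at every point in the execution, and for every $u \neq t$, $\Cc_u(t) < \Cc_t(t)$, and likewise $\Ll_\lk(t) \le \Cc_t(t)-1$ and $\LW_x(t) \le \Cc_t(t)-1$ whenever $\Cc_t(t)$ has been advanced past the corresponding release or write. The proof is by a routine induction on the trace prefix processed, with a case split on the handler: when thread $t$ itself performs a $\rel$ or $\wt$, the write into $\Ll_\lk$ or $\LW_x$ records the pre-increment value $\Cc_t(t)$, which becomes strictly smaller than $\Cc_t(t)$ after the immediate increment that follows; when a different thread $u$ performs those handlers, the inductive hypothesis applied to $\Cc_u(t) < \Cc_t(t)$ carries through. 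The upshot is that the $\mx$ operations performed by the $\acq$, $\rd$, and $\join$ handlers for thread $t$ cannot actually change $\Cc_t(t)$.

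With this invariant in hand, Part~\ref{lbl:monotonic} follows because every mutation of $\Cc_t$ is either an increment or a pointwise join, both of which are monotone in every component, and because the formula $C_e = \Cc_t[(c-1)/t]$ (for $\rel$/$\wt$/$\fork$) differs from $C_e = \Cc_t$ (for $\rd$/$\acq$/$\join$) only in the $t$-component and only by one, so monotonicity of $\Cc_t$ across the interval lifts to $C_{e_1} \cle C_{e_2}$. Part~\ref{lbl:remote-clk-upd} follows because for $t' \neq t$ the $t'$-component of $\Cc_t$ is modified only by the $\acq$, $\rd$, and $\join$ handlers of thread $t$, and because $C_e(t')$ equals the value of $\Cc_t(t')$ immediately after processing $e$ regardless of whether the pre-increment adjustment is applied (the adjustment only alters the $t$-component); hence the absence of any $\acq$/$\rd$/$\join$ event of thread $t$ in the interval $(e_1, e_2]$ forces $C_{e_1}(t') = C_{e_2}(t')$. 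Part~\ref{lbl:local-clk-upd} follows by counting: by the invariant, $\Cc_t(t)$ changes only through the explicit $+1$ updates in $\rel$/$\wt$/$\fork$ handlers, and a direct bookkeeping using the two cases of the $C_e(t)$ formula shows that $C_{e_2}(t) - C_{e_1}(t)$ equals exactly the number of $\rel$/$\wt$/$\fork$ events of thread $t$ lying in $[e_1, e_2)$.

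The main obstacle is the auxiliary invariant $\Cc_u(t) < \Cc_t(t)$: its proof has to handle the $\fork$ handler carefully, since $\Cc_u$ is initialized by copying $\Cc_t$ before $\Cc_t(t)$ is incremented, and it has to accommodate the fact that $\fork$ and $\join$ events are regarded as events of both the performing and the secondary thread, so a single event can appear in the thread-order chain of two different threads and must be consistently classified as either `$t$-updating' or not. Once this invariant is verified, the remainder of the argument is mechanical bookkeeping on the explicit formula relating $C_e$ to $\Cc_t$.
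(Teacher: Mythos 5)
Your proof is correct and is essentially a careful elaboration of the paper's one-line proof, which simply asserts that the proposition ``follows from the way $\Cc_t$ is updated by Algorithm~\ref{algo:vc}.'' The auxiliary invariant you isolate ($\Cc_u(t) < \Cc_t(t)$ for $u \neq t$, and $\Ll_\lk(t), \LW_x(t) \le \Cc_t(t) - 1$, holding at handler boundaries) is exactly the implicit fact needed to see that the $\acq$/$\rd$/$\join$ handlers never advance the local component, so your argument supplies the detail the paper omits; just note that this invariant should be asserted only between handlers (it is transiently violated inside the $\fork$ handler) and that the $\fork$ handler of another thread does mutate $\Cc_t$ --- both subtleties you already flag and handle.
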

\begin{proof}
Follows from the way $\Cc_t$ is updated by Algorithm~\ref{algo:vc}.
\end{proof}

We now prove the main lemma crucial to the correctness of
Algorithm~\ref{algo:vc}, that relates $\shb{}$ to the ordering on
vector clocks.
\begin{lemma}
\label{lem:vc-shb}
Let $e = \ev{t,op}$ be an event such that $C_e(t') = k$ for some $t'
\neq t$. Let $e' = \ev{t',op'}$ be the last event such that
$C_{e'}(t') = k$. Then $e' \shb{\tr} e$. 
\end{lemma}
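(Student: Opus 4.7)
The plan is to prove this by strong induction on the position of $e$ in $\tr$. Before starting, I would first observe that $e'$ is well-defined: by Proposition~\ref{prop:vc-invariant2}, the sequence of $t'$-components $C_f(t')$ over events $f$ of thread $t'$ is non-decreasing and strictly increases exactly at every release/write/fork event of $t'$, so each positive value attained has a last witness.

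The inductive step will split on whether the handler for $e$ itself updates $\Cc_t(t')$. If it does not---which is automatic when $op$ is a release, write, or fork, since those handlers touch only the local slot $\Cc_t(t)$ or the child clock $\Cc_u$---then the value $k$ was present already, so $f := \ltho{\tr}(e)$ must be defined (otherwise the initial state $\bot[1/t]$ would force $\Cc_t(t') = 0$, contradicting $k \geq 1$), and Proposition~\ref{prop:vc-invariant2} part 2 gives $C_f(t') = k$. I would then apply the inductive hypothesis to $f$ to obtain $e' \shb{\tr} f$, and conclude via $f \tho{\tr} e$. A subtle sub-case to check is when $f$ is a fork event $\ev{\hat{t}, \fork(t)}$ that spawned $t$: there $f$ is simultaneously an event of $\hat{t}$, and since the $t'$-component of $\Cc_t$ was copied from $\Cc_{\hat{t}}$ at that fork, the IH applies to $f$ regarded as an event of $\hat{t}$.

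If the handler of $e$ does update $\Cc_t(t')$, then $op$ is an acquire, read, or join, and Proposition~\ref{prop:vc-invariant1} lets me identify the vector clock joined into $\Cc_t$ with the timestamp $C_{f^\star}$ of an earlier event $f^\star$ that is already $\shb{\tr}$-before $e$: for acquire, $f^\star$ is the latest $\rel(\lk)$ event, which is $\hb{\tr}$-before $e$; for read, $f^\star = \lw{\tr}(e)$, which is $\shb{\tr}$-before $e$ by clause (b) of Definition~\ref{def:shb}; for join, $f^\star$ is the last event of the joined thread $u$ (completed by well-formedness), and $f^\star \tho{\tr} e$. Writing $k^\star = C_{f^\star}(t') \geq k$, I would then argue as follows: if $f^\star$ is itself an event of $t'$, monotonicity of $(C_{(\cdot)}(t'))$ along $t'$ places $e'$ at or before $f^\star$ in thread order, yielding $e' \shb{\tr} f^\star \shb{\tr} e$; otherwise the IH applied to $f^\star$ produces an event $e^\star$ of $t'$ with $C_{e^\star}(t') = k^\star$ and $e^\star \shb{\tr} f^\star$, and the same monotonicity gives $e' \tho{\tr} e^\star$, closing the chain $e' \shb{\tr} e^\star \shb{\tr} f^\star \shb{\tr} e$.

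The main obstacle will be the careful bookkeeping at events shared between two threads (fork and join) and in the degenerate cases where $t'$ itself is the other participating thread. I plan to address this by appealing directly to the convention in Section~\ref{subsec:vc} for how $C_f$ is read off from $\Cc_{(\cdot)}$ for each event type---which makes the relevant $t'$-components coincide whenever $t' \notin \{t,u\}$---and by handling the cases $\hat{t} = t'$ in the fork sub-case of Case A, and $u = t'$ in the join sub-case of Case B, separately without invoking the IH: in both degenerate cases the entire chain from $e'$ to $e$ consists of thread-ordered steps.
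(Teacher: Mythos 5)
Your proposal is correct and follows essentially the same route as the paper's proof: induction on the position of $e$ in $\tr$, a case split on whether the handler of $e$ pulls in a remote clock ($\acq$/$\rd$/$\join$) or not, reduction to the inductive hypothesis on $\ltho{\tr}(e)$ or on the source event $f^\star$ identified via Proposition~\ref{prop:vc-invariant1}, and the key observation that when $f^\star$ is itself an event of $t'$ the local-clock increment after release/write events forces $f^\star$ to be the last event of $t'$ with that component, i.e.\ $f^\star = e'$. The only nit is the inequality $k^\star = C_{f^\star}(t') \geq k$, which is backwards --- the handlers take a pointwise maximum, so $k \geq k^\star$ always, with $k^\star = k$ exactly when the component is genuinely updated --- but your argument only uses the case $k^\star = k$, so nothing breaks, and your explicit treatment of the fork sub-case is if anything more careful than the paper's own.
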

\begin{proof}
The result will be proved by induction on the position of $e$ in the
trace $\tr$. Observe that if $e$ is the first event of $\tr$, then
$C_e(t') = 0$ for all $t' \neq t$, no matter what event $e$ is. And
there is no event $e' = \ev{t',op'}$ such that $C_{e'}(t') = 0$. Thus,
the lemma holds vaccuously in the base case.

Let us now consider the inductive step. Define $e_1 = \ev{t,op_1}$ be
the last event in $\tr$ before $e$ (possibly same as $e$) such that
$op_1$ is either $\acq$, $\rd$, or $\join$; if no such $e_1$ exists
then take $e_1$ to be the first event performed by $t$.  Notice, by
our choice of $e_1$ and
Proposition~\ref{prop:vc-invariant2}(\ref{lbl:remote-clk-upd}), for
every $t'' \neq t$, $C_{e_1}(t'') = C_e(t'')$. If $e_1 \neq e$, the
result follows by induction hypothesis on $e_1$. 

Let us assume $e_1 = e$. We need to consider different cases based on
what $e_1$ is.
\begin{itemize}
\item {\bf Case $e = e_1 = \ev{t,\acq(\lk)}$:} Let $f_1$ be the event 
  immediately before $e$ in $\proj{\tr}{t}$ and $f_2$ be the event
  such that $C_{f_2} = L_\lk^e$ (given by
  Proposition~\ref{prop:vc-invariant1}). Note that both $f_1$ and
  $f_2$ may be undefined. Also notice that, for any $t'$, either
  $C_e(t') = 0$, or $C_e(t') = C_{f_1}(t') \neq 0$ (and $f_1$ is
  defined), or $C_e(t') = C_{f_2}(t') \neq 0$ (and $f_2$ is
  defined). If $C_e(t') = 0$ then the lemma follows vaccuously as in
  the base case because there is no event $e' = \ev{t',op'}$ with
  $C_{e'}(t') = 0$. Let us now consider the remaining cases. Let $t_2$
  denote the thread performing $f_2$, if $f_2$ is defined. Consider
  the case when either $C_e(t') = C_{f_1}(t') \neq 0$ or $C_e(t') =
  C_{f_2}(t')$ with $t' \neq t_2$. In this situation, the lemma
  follows using the induction hypothesis on either $f_1$ or $f_2$
  since both $f_1$ and $f_2$ (when defined) are $\shb{\tr} e$. The
  last case to consider is when $t' = t_2$ and $C_e(t') =
  C_{f_2}(t')$. By
  Proposition~\ref{prop:vc-invariant2}(\ref{lbl:local-clk-upd}), $f_2$
  is the last event of $t' = t_2$ whose $t'$th component is
  $k$. Further, by definition $f_2 \shb{\tr} e$, and so the lemma
  holds.
\item {\bf Case $e = e_1 = \ev{t,\rd(x)}$:} Let $f_1$ be the event
  immediately before $e$ in $\proj{\tr}{t}$ and $f_2$ be the event
  such that $C_{f_2} = LW_x^e$ (given by
  Proposition~\ref{prop:vc-invariant1}). Again, both $f_1$ and $f_2$
  may be undefined. Also notice that, for any $t'$, either $C_e(t') =
  0$, or $C_e(t') = C_{f_1}(t') \neq 0$ (and $f_1$ is defined), or
  $C_e(t') = C_{f_2}(t') \neq 0$ (and $f_2$ is defined). If $C_e(t') =
  0$ then the lemma follows vaccuously as in the base case because
  there is no event $e' = \ev{t',op'}$ with $C_{e'}(t') = 0$. Let us
  now consider the remaining cases. Let $t_2$ denote the thread
  performing $f_2$, if $f_2$ is defined. Consider the case when either
  $C_e(t') = C_{f_1}(t') \neq 0$ or $C_e(t') = C_{f_2}(t')$ with
  $t' \neq t_2$. In this situation, the lemma follows using the
  induction hypothesis on either $f_1$ or $f_2$ since both $f_1$ and
  $f_2$ (when defined) are $\shb{\tr} e$. The last case to consider is
  when $t' = t_2$ and $C_e(t') = C_{f_2}(t')$. By
  Proposition~\ref{prop:vc-invariant2}(\ref{lbl:local-clk-upd}), $f_2$
  is the last event of $t' = t_2$ whose $t'$th component is
  $k$. Further, by definition $f_2 \shb{\tr} e$, and so the lemma
  holds.
\item {\bf Case $e = e_1 = \ev{t,\join(t_1)}$:} Let $f_1$ be the event
  immediately before $e$ in $\proj{\tr}{t}$ and $f_2$ be the last
  event of the form $\ev{t_1,op}$. Again, both $f_1$ and $f_2$ may be
  undefined. Also notice that, for any $t'$, either (a) $C_e(t') = 0$,
  or (b) $t_1 = t'$, $C_e(t') = 1$, and $f_2$ is undefined, or (c)
  $C_e(t') = C_{f_1}(t') \neq 0$ and $f_1$ is defined, or (d) $C_e(t')
  = C_{f_2}(t') \neq 0$ and $f_2$ is defined. In cases (a) or (b)
  above, the lemma follows vaccuously as in the base case because
  there is no event $e' = \ev{t',op'}$ with $C_{e'}(t') = k$ (where
  $k$ is either $0$ or $1$ depending on which we case we
  consider). Let us now consider the remaining cases. Let $t_2$ denote
  the thread performing $f_2$, if $f_2$ is defined. Consider the case
  when either $C_e(t') = C_{f_1}(t') \neq 0$ or $C_e(t') =
  C_{f_2}(t')$ with $t' \neq t_2$ (and $f_2$ defined). In this
  situation, the lemma follows using the induction hypothesis on
  either $f_1$ or $f_2$ since both $f_1$ and $f_2$ (when defined) are
  $\shb{\tr} e$. The last case to consider is when $t' = t_2$ and
  $C_e(t') = C_{f_2}(t')$. By definition, $f_2$ is the last event of
  $t' = t_2$ whose $t'$th component is $k$. Further, by definition
  $f_2 \shb{\tr} e$, and so the lemma holds.
\item {\bf Case $e = e_1$ is the first event:} This is the case when 
  the above 3 cases don't hold. So $e = e_1$ is not an $\acq$-event,
  nor a $\rd$-event, nor a $\join$-event. Moreover, since $e$ is the
  first event of thread $t$ and is of the form $\ev{t,op}$, it must be
  the the thread $t$ has not been forked by any thread in $\tr$. Thus,
  for any $t' \neq t$, $C_e(t') = 0$. The lemma, therefore, follows
  vaccuously as in the base case. \qedhere
\end{itemize}
\end{proof}

We are ready to present the proof of Theorem~\ref{thm:isomorphicVC}.

\begin{reptheorem}{thm:isomorphicVC}
For events $e, e' \in \events{\tr}$ such that $e \trord{\tr} e'$,
$C_e \cle C_{e'}$ iff $e \shb{\tr} e'$
\end{reptheorem}

\begin{proof}
Let us first prove the implication from left to right. Consider $e,e'$
such that $e \trord{\tr} e'$. If $e \tho{\tr} e'$ then $e \shb{\tr}
e'$ since $\tho{\tr} \subseteq \shb{\tr}$. On the other hand, if $e$
and $e'$ are not events of the same thread, then this direction of the
theorem follows from Lemma~\ref{lem:vc-shb}.

Let us now prove the implication from right to left. Consider events
such that $e \shb{\tr} e'$. Then, by definition, we have a sequence of
events $e = f_1,f_2,\ldots f_k = e'$ such that for every $i$,
$f_i \trord{\tr} f_{i+1}$ and either (i) $f_i$ and $f_{i+1}$ are both
events of the form $\ev{t,op}$, or (ii) $f_i$ is a $\rel(\lk)$-event
and $f_{i+1}$ is a $\acq(\lk)$-event, or (iii) $f_i$ is a
$\fork(t)$-event and $f_{i+1}$ is an event of the form $\ev{t,op}$, or
(iv) $f_i$ is an event of the form $\ev{t,op}$ and $f_{i+1}$ is a
$\join(t)$-event, or (v) $f_i = \lw{\tr}(f_{i+1})$. In each of these
cases, Algorithm~\ref{algo:vc} ensures that $C_{f_i} \cle
C_{f_{i+1}}$. Thus, we have $C_e \cle C_{e'}$.
\end{proof}

We now prove Theorem~\ref{thm:correct-races}.  We first note some
auxiliary propositions.  Let us denote by $R_x^e$ the value of clock
$\Rr_x$ just before the event $e$ is processed.  Similarly, let
$W_x^e$ denote the value of clock $\Ww_x$ just before $e$ is
processed. It is easy to see that the following invariant is
maintained by Algorithm~\ref{algo:vc}.

\begin{proposition}
\label{prop:vc-invariantrw}
Let $e$ be an arbitrary event of trace $\tr$. Let $e_t^{\rd(x)}$ 
be the last $\ev{t, \rd(x)}$-event in $\tr$ before $e$, 
and let $e_t^{\wt(x)}$ be the last
$\ev{t, \wt(x)}$-event in $\tr$ before $e$ (with respect to
$\trord{\tr}$). 
Note that $e_t^{\rd(x)}$  and $e_t^{\wt(x)}$  maybe undefined. 
Then, $\forall t, R_x(t) = C_{e_t^{\rd(x)}}(t)$ and 
$\forall t, W_x(t) = C_{e_t^{\wt(x)}}(t)$ 
where if an event $f$ is undefined, we take $C_f = \bot$.
\end{proposition}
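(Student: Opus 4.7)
My plan is to prove Proposition~\ref{prop:vc-invariantrw} by induction on the position of the event $e$ within the trace $\tr$, carrying out a case analysis on the type of the event that was just processed before $e$. The claim is essentially a book-keeping invariant about the update rules of Algorithm~\ref{algo:vc}, so I expect the proof to be straightforward modulo careful reading of the timestamp semantics for write events.

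For the base case, I would take $e$ to be the first event of $\tr$. The \texttt{Initialization} procedure sets $\Rr_x := \bot$ and $\Ww_x := \bot$ for every memory location $x$; on the other hand, since no event precedes $e$, the events $e_t^{\rd(x)}$ and $e_t^{\wt(x)}$ are undefined for every thread $t$, so by the stated convention $C_{e_t^{\rd(x)}} = C_{e_t^{\wt(x)}} = \bot$. Hence $R_x(t) = 0 = C_{e_t^{\rd(x)}}(t)$ and $W_x(t) = 0 = C_{e_t^{\wt(x)}}(t)$ for all $t$, which establishes the invariant.

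For the inductive step, let $f$ be the event processed immediately before $e$, and assume the invariant holds just before $f$ is handled. The handlers in Algorithm~\ref{algo:vc} modify $\Rr_x$ only in Line~22 (inside \texttt{read}) and modify $\Ww_x$ only in Line~29 (inside \texttt{write}). I would split into three cases. \textbf{(i)} If $f$ is not an access to $x$, then neither $\Rr_x$ nor $\Ww_x$ is touched, and for every thread $t$ the events $e_t^{\rd(x)}$ and $e_t^{\wt(x)}$ are the same relative to $e$ as they were relative to $f$, so the invariant is preserved directly from the induction hypothesis. \textbf{(ii)} If $f = \ev{t', \rd(x)}$ for some thread $t'$, then for $t \neq t'$ nothing about $\Rr_x(t)$ or $e_t^{\rd(x)}$ changes, and $\Ww_x$ is untouched; I need only verify that Line~22 sets $\Rr_x(t') = C_f(t')$. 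Since a $\rd$-event does not increment the local clock, the semantics of timestamps in Section~\ref{subsec:vc} give $C_f = \Cc_{t'}$ evaluated right after Line~21, and Line~22 assigns exactly that value to $\Rr_x(t')$. \textbf{(iii)} If $f = \ev{t', \wt(x)}$, the argument is analogous but requires a slight care: Line~29 sets $\Ww_x(t') := \Cc_{t'}(t')$ before the local clock is incremented on Line~30, and for a write event the timestamp satisfies $C_f(t') = \Cc_{t'}^{\text{after}}(t') - 1 = \Cc_{t'}^{\text{before line 30}}(t')$, which is precisely what was stored into $\Ww_x(t')$.

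The main subtlety — and the only place where something could go wrong — is this last point about write events: the timestamp $C_f$ of a $\wt$-event is defined with respect to the value of $\Cc_{t'}$ \emph{after} Line~30's increment, yet $\Ww_x(t')$ is written \emph{before} that increment. I would therefore make the bookkeeping around Lines~29--30 explicit, observing that the off-by-one in the definition $C_f = \Cc_{t'}[(c-1)/t']$ exactly compensates for the fact that $\Ww_x(t')$ is assigned prior to the increment. Once this is verified for reads and writes, the three cases cover every possibility, completing the induction.
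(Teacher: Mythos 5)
Your proof is correct and is essentially the elaboration the paper has in mind: the paper dismisses this proposition in one line ("follows from the way $\Rr_x$ and $\Ww_x$ are updated"), and your induction over the trace with a case split on the handler, including the careful check that the pre-increment assignment $\Ww_x(t') := \Cc_{t'}(t')$ on Line~29 matches the $(c-1)$ convention in the timestamp definition for write events, is exactly the bookkeeping that one-liner elides.
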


\begin{proof}
The observation follows from the way vector clocks $\Rr_x$ and
$\Ww_x$ are updated.
\end{proof}

\begin{lemma}
\label{lem:prev_clock}
Let $e_1, e_2 \in \events{\tr}$ performed by threads $t_1$ and $t_2$,
respectively, such that $t_1 \neq t_2$.  Then, $e_1 \shb{\tr} e_2$ iff
$C_{e_1} \cle C_{e_2}[( C_{e_2}(t_2) + 1)/t_2]$.
\end{lemma}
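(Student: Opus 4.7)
The plan is to derive both directions by reducing to results already established in this section. The forward direction is essentially free: since every generator of $\shb{\tr}$ (thread order, release-acquire pairs in trace order, and $\lw{\tr}$ edges) is consistent with $\trord{\tr}$, we have $\shb{\tr}\subseteq\trord{\tr}$, so $e_1 \shb{\tr} e_2$ implies $e_1 \trord{\tr} e_2$. Theorem~\ref{thm:isomorphicVC} then gives $C_{e_1} \cle C_{e_2}$, and relaxing only the $t_2$-component on the right-hand side cannot violate the inequality.

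For the converse, my plan is to invoke Lemma~\ref{lem:vc-shb}. First observe that because $t_1 \neq t_2$, the hypothesis $C_{e_1} \cle C_{e_2}[(C_{e_2}(t_2)+1)/t_2]$ forces the unrelaxed comparison on the $t_1$-component: $C_{e_1}(t_1) \leq C_{e_2}(t_1)$. Since $e_1$ is an event of $t_1$ and the local component $\Cc_{t_1}(t_1)$ is initialized to $1$ and only increases, $C_{e_1}(t_1) \geq 1$, and hence $k := C_{e_2}(t_1) \geq 1$. Lemma~\ref{lem:vc-shb} then supplies an event $e' = \ev{t_1,op'}$---the last event of thread $t_1$ with $C_{e'}(t_1) = k$---satisfying $e' \shb{\tr} e_2$. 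To finish, I would argue $e_1 \tho{\tr} e'$ (allowing $e_1 = e'$) using Proposition~\ref{prop:vc-invariant2}(\ref{lbl:local-clk-upd}): within a single thread, local timestamps are non-decreasing in trace order and strictly increase across $\rel$/$\fork$/$\wt$ events. If $C_{e_1}(t_1) < k$ this places $e_1$ strictly before $e'$ in $t_1$; if $C_{e_1}(t_1) = k$ and $e_1 \neq e'$, then by our choice of $e'$ as the \emph{last} such witness, $e_1$ again must precede $e'$ in $t_1$. Either way $e_1 \shb{\tr} e' \shb{\tr} e_2$.

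The main subtlety to articulate is why a slack of one in the $t_2$-component is harmless. The statement is tailored to match the clock check performed by Algorithm~\ref{algo:vc} when processing $e_2$: the value of $\Cc_{t_2}$ just before $e_2$ is handled equals either $C_{\ltho{\tr}(e_2)}$ (when $\ltho{\tr}(e_2)$ is a $\rd$/$\acq$/$\join$) or $C_{\ltho{\tr}(e_2)}[(c+1)/t_2]$ with $c = C_{\ltho{\tr}(e_2)}(t_2)$ (when it is a $\rel$/$\fork$/$\wt$), so this lemma is precisely what guarantees that the algorithm's tests against $\Cc_{t_2}$ faithfully decide $e_1 \shb{\tr} \ltho{\tr}(e_2)$ in both cases. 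The technical crux I need to emphasize is that an SHB relationship between events of different threads is already fully witnessed by the ``remote'' clock component via Lemma~\ref{lem:vc-shb}, so the $t_2$-component of $C_{e_1}$ carries no additional information and can safely be slackened.
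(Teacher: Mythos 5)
Your proposal follows essentially the same route as the paper's proof: the forward direction is Theorem~\ref{thm:isomorphicVC} plus the observation that relaxing the $t_2$-component preserves $\cle$, and the backward direction restricts the hypothesis to the $t_1$-component and invokes Lemma~\ref{lem:vc-shb}. The extra work you do in the converse---extracting the witness $e'$ of thread $t_1$, and using the monotonicity of local clock components (Proposition~\ref{prop:vc-invariant2}) to place $e_1$ thread-before $e'$---is a correct and welcome elaboration of a step the paper compresses into the single sentence ``then by Lemma~\ref{lem:vc-shb}, we have $e_1 \shb{\tr} e_2$.''
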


\begin{proof}
Let $c_2 = C_{e_2}(t_2)$.
First suppose that $e_1 \shb{\tr} e_2$.
Then, from Theorem~\ref{thm:isomorphicVC}, we have $C_{e_1} \cle C_{e_2}$
and thus $C_{e_1} \cle C_{e_2}[(c_2+1)/t_2]$.
Next, assume that $C_{e_1} \cle C_{e_2}[(c_2 + 1)/t_2]$.
In particular, $C_{e_1}(t_1) \leq C_{e_2}(t_1)$.
Then by Lemma~\ref{lem:vc-shb}, we have $e_1 \shb{\tr} e_2$
\end{proof}

\begin{reptheorem}{thm:correct-races}
Let $e$ be a read/write event $e \in \events{\tr}$.
Algorithm~\ref{algo:vc} reports a race at $e$
iff there is an event $e' \in \events{\tr}$ such that $(e', e)$
is an $\hb{\tr}$-schedulable race.
\end{reptheorem}

\begin{proof}
%
Let us first consider the case when $\ltho{\tr}(e)$ is not defined.
Then, the value of the clock $\Cc_t = \bot[1/t]$ at line 19, 24 or 26 (depending
upon whether $e$ is a read or a write event).
If the check $\neg(\Ww_x \cle \Cc_t)$ passes, then
there is a $t'$ such that $\Ww_x(t') > \Cc_t$ and thus
the there is an event $e'$ (namely the last write event of $x$ in thread $t'$)
that conflicts with $e$. Thus, $(e', e)$ is a $\hb{\tr}$-schedulable race
by Theorem~\ref{thm:SHBSoundness}.
On the other hand, if the check fails, then
$\Ww_x = \bot[1/t]$ or $\Ww_x = \bot$ and in either case
there is no event that conflicts with $e$.
One can similarly argue that the checks on Lines 24 and 26
are both necessary and sufficient for the case when $e$ is a write event
and $\ltho{\tr}(e)$ is undefined.

Next we consider the case when $f = \ltho{\tr}(e)$ is defined.  Now
let $e$ be a read event.  If the check $\neg(\Ww_x \cle \Cc_t)$
passes, then there is a $t'$ such that $\Ww_x(t') > \Cc_t(t')$ and
thus there is an event $e' = \ev{t', \wt(x)}$ such that $C_{e'}(t')
> \Cc_t(t')$ and thus $C_{e'} \not\cle \Cc_t$; note that it must be
the case that $t' \neq t$.  Depending upon whether $f$ is a
read/join/acquire event or a write/fork/release event, the value of
the clock $\Cc_t$ at Line 19 is $\Cc_t = C_f$ or $\Cc_t =
C_f[(C_f(t)+1)/t]$.  In either case, by Lemma~\ref{lem:prev_clock}, we
have that $e' \not\leq_\textsf{SHB}^\tr f$.  On the other hand if,
$\Ww_x \cle \Cc_t$, then $\forall t' \neq t,
C_{e_{t'}^{\wt(x)}}(t') \leq \Cc_t(t')$, where $e_{u}^{\wt(x)}$ is the
last write event of $x$ performed by thread $u$.  This means that for
every event $e'$ such that $e'$ conflicts with, by
Lemma~\ref{lem:prev_clock}, we have $e' \shb{\tr} f$ and thus $(e',
e)$ is not an $\hb{\tr}$-schedulable race.  The argument for the case
when $e$ is a write event is similar.
\end{proof}

We now establish the asymptotic space and time bounds for
Algorithm~\ref{algo:vc}.

\begin{reptheorem}{thm:complexityVC}
For a trace $\tr$ with $n$ events, $T$ threads, $V$ variables, and $L$
locks, Algorithm~\ref{algo:vc} runs in time $O(nT\log n)$ and uses
$O((V+L+T)T\log n)$ space.
\end{reptheorem}

\begin{proof}
Observe that for a trace of length $n$, every component of each of the
vector clocks is bounded by $n$. Thus, each vector clock takes space
$O(T\log n)$, where $T$ is the number of threads. We have a vector
clock for each thread, lock, and variable, which gives us a space
bound of $O((V+T+L)T\log n)$. Notice that to process any event we need
to update constantly many vector clocks. The time to update any single
vector clock can be bounded by its size $O(T\log n)$. Thus, the total
running time is $O(nT\log n)$.
\end{proof}

\section{False Positives Reported by Existing Practical Dynamic Race Detection Tools}
\label{app:false_races}

We evaluate existing tools that
use happens-before based race detection to
check if they report false (unschedulable) races.

We use the following program in Figure~\ref{fig:example1} to assess if these
tools guarantee soundness.
The Java source code for this program is in Figure~\ref{simple_java}
and the C source code is in Figure~\ref{simple_c}


\begin{figure}[t]
\begin{subfigure}{.4\textwidth}
\begin{Verbatim}[commandchars=\\\{\}, fontsize=\footnotesize, baselinestretch=0.5]
\PYG{k+kd}{public} \PYG{k+kd}{class} \PYG{n+nc}{Test} \PYG{k+kd}{extends} \PYG{n}{Thread}\PYG{o}{\PYGZob{}}

    \PYG{k+kd}{static} \PYG{k+kt}{int} \PYG{n}{x}\PYG{o}{,} \PYG{n}{y}\PYG{o}{;}
    \PYG{k+kd}{public} \PYG{k+kt}{int} \PYG{n}{id}\PYG{o}{;}

    \PYG{n+nd}{@Override}
    \PYG{k+kd}{public} \PYG{k+kt}{void} \PYG{n+nf}{run}\PYG{o}{()} \PYG{o}{\PYGZob{}}
        \PYG{k}{if}\PYG{o}{(}\PYG{n}{id} \PYG{o}{==} \PYG{l+m+mi}{1}\PYG{o}{)\PYGZob{}}
            \PYG{n}{y} \PYG{o}{=} \PYG{n}{x} \PYG{o}{+} \PYG{l+m+mi}{5}\PYG{o}{;}
        \PYG{o}{\PYGZcb{}}
        \PYG{k}{if}\PYG{o}{(}\PYG{n}{id} \PYG{o}{==} \PYG{l+m+mi}{2}\PYG{o}{)\PYGZob{}}
            \PYG{k}{if} \PYG{o}{(} \PYG{n}{y}\PYG{o}{==}\PYG{l+m+mi}{5} \PYG{o}{)\PYGZob{}}
                \PYG{n}{x} \PYG{o}{=} \PYG{l+m+mi}{10}\PYG{o}{;}
            \PYG{o}{\PYGZcb{}}
            \PYG{k}{else}\PYG{o}{\PYGZob{}}
                \PYG{k}{while}\PYG{o}{(}\PYG{k+kc}{true}\PYG{o}{);}
            \PYG{o}{\PYGZcb{}}
        \PYG{o}{\PYGZcb{}}
    \PYG{o}{\PYGZcb{}}

    \PYG{k+kd}{public} \PYG{k+kd}{static} \PYG{k+kt}{void} \PYG{n+nf}{main}\PYG{o}{(}\PYG{n}{String} \PYG{n}{args}\PYG{o}{[])}
            \PYG{k+kd}{throws} \PYG{n}{Exception} \PYG{o}{\PYGZob{}}
        \PYG{k+kd}{final} \PYG{n}{Test} \PYG{n}{t1} \PYG{o}{=} \PYG{k}{new} \PYG{n}{Test}\PYG{o}{();}
        \PYG{k+kd}{final} \PYG{n}{Test} \PYG{n}{t2} \PYG{o}{=} \PYG{k}{new} \PYG{n}{Test}\PYG{o}{();}
        \PYG{n}{t1}\PYG{o}{.}\PYG{n+na}{id} \PYG{o}{=} \PYG{l+m+mi}{1}\PYG{o}{;}
        \PYG{n}{t2}\PYG{o}{.}\PYG{n+na}{id} \PYG{o}{=} \PYG{l+m+mi}{2}\PYG{o}{;}
        \PYG{n}{t1}\PYG{o}{.}\PYG{n+na}{start}\PYG{o}{();}
        \PYG{n}{t2}\PYG{o}{.}\PYG{n+na}{start}\PYG{o}{();}
        \PYG{n}{t1}\PYG{o}{.}\PYG{n+na}{join}\PYG{o}{();}
        \PYG{n}{t2}\PYG{o}{.}\PYG{n+na}{join}\PYG{o}{();}
    \PYG{o}{\PYGZcb{}}
\PYG{o}{\PYGZcb{}}
\end{Verbatim}
\vspace{0.5cm}
\caption{Multi-threaded Java program}
\label{simple_java}
\end{subfigure}
\hfill
\begin{subfigure}{.4\textwidth}
\begin{Verbatim}[commandchars=\\\{\}, fontsize=\footnotesize, baselinestretch=0.5]
\PYG{c+cp}{\PYGZsh{}include} \PYG{c+cpf}{\PYGZlt{}pthread.h\PYGZgt{}}
\PYG{c+cp}{\PYGZsh{}include} \PYG{c+cpf}{\PYGZlt{}stdio.h\PYGZgt{}}

\PYG{k+kt}{int} \PYG{n}{x}\PYG{p}{;}
\PYG{k+kt}{int} \PYG{n}{y}\PYG{p}{;}

\PYG{k+kt}{void} \PYG{o}{*}\PYG{n+nf}{Thread1}\PYG{p}{(}\PYG{k+kt}{void} \PYG{o}{*}\PYG{n}{a}\PYG{p}{)} \PYG{p}{\PYGZob{}}
  \PYG{n}{y} \PYG{o}{=} \PYG{n}{x} \PYG{o}{+} \PYG{l+m+mi}{5}\PYG{p}{;}
  \PYG{k}{return} \PYG{n+nb}{NULL}\PYG{p}{;}
\PYG{p}{\PYGZcb{}}

\PYG{k+kt}{void} \PYG{o}{*}\PYG{n+nf}{Thread2}\PYG{p}{(}\PYG{k+kt}{void} \PYG{o}{*}\PYG{n}{a}\PYG{p}{)} \PYG{p}{\PYGZob{}}
  \PYG{k}{if} \PYG{p}{(}\PYG{n}{y} \PYG{o}{==} \PYG{l+m+mi}{5}\PYG{p}{)\PYGZob{}}
    \PYG{n}{x} \PYG{o}{=} \PYG{l+m+mi}{10}\PYG{p}{;}
  \PYG{p}{\PYGZcb{}}
  \PYG{k}{else}\PYG{p}{\PYGZob{}}
    \PYG{k}{while}\PYG{p}{(}\PYG{n+nb}{true}\PYG{p}{)\PYGZob{}\PYGZcb{}}
  \PYG{p}{\PYGZcb{}}
  \PYG{k}{return} \PYG{n+nb}{NULL}\PYG{p}{;}
\PYG{p}{\PYGZcb{}}

\PYG{k+kt}{int} \PYG{n+nf}{main}\PYG{p}{()} \PYG{p}{\PYGZob{}}
  \PYG{n}{x} \PYG{o}{=} \PYG{l+m+mi}{0}\PYG{p}{;}
  \PYG{n}{y} \PYG{o}{=} \PYG{l+m+mi}{0}\PYG{p}{;}
  \PYG{n}{pthread\PYGZus{}t} \PYG{n}{t}\PYG{p}{[}\PYG{l+m+mi}{2}\PYG{p}{];}
  \PYG{n}{pthread\PYGZus{}create}\PYG{p}{(}\PYG{o}{\PYGZam{}}\PYG{n}{t}\PYG{p}{[}\PYG{l+m+mi}{0}\PYG{p}{],} \PYG{n+nb}{NULL}\PYG{p}{,} \PYG{n}{Thread1}\PYG{p}{,} \PYG{n+nb}{NULL}\PYG{p}{);}
  \PYG{n}{pthread\PYGZus{}create}\PYG{p}{(}\PYG{o}{\PYGZam{}}\PYG{n}{t}\PYG{p}{[}\PYG{l+m+mi}{1}\PYG{p}{],} \PYG{n+nb}{NULL}\PYG{p}{,} \PYG{n}{Thread2}\PYG{p}{,} \PYG{n+nb}{NULL}\PYG{p}{);}
  \PYG{n}{pthread\PYGZus{}join}\PYG{p}{(}\PYG{n}{t}\PYG{p}{[}\PYG{l+m+mi}{0}\PYG{p}{],} \PYG{n+nb}{NULL}\PYG{p}{);}
  \PYG{n}{pthread\PYGZus{}join}\PYG{p}{(}\PYG{n}{t}\PYG{p}{[}\PYG{l+m+mi}{1}\PYG{p}{],} \PYG{n+nb}{NULL}\PYG{p}{);}
\PYG{p}{\PYGZcb{}}
\end{Verbatim}
\vspace{0.5cm}
\caption{Multi-threaded C program}
\label{simple_c}
\end{subfigure}
\caption{Concurrent program from Figure~\ref{fig:program1}}
\label{fig:programs}
\end{figure}

\subsection{\fasttrack}

When run on \fasttrack, the Java program in Fig~\ref{simple_java}, \fasttrack~produces
the following output:

\vspace{0.5cm}
\hrule
{\small
\begin{verbatim}
## 
## =====================================================================
## HappensBefore Error
## 
##          Thread: 2    
##           Blame: Test.y_I
##           Count: 1    (max: 100)
##     Guard State: [(0:0) (1:1)]
##           Class: Test
##           Field: null.Test.y_I
##           Locks: []
##         Prev Op: write-by-thread-1
##      Prev Op CV: [(0:0) (1:1)]
##          Cur Op: read
##       Cur Op CV: [(0:1) (1:0) (2:1)]
##           Stack: Use -stacks to show stacks...
## =====================================================================
## 
## 
## =====================================================================
## HappensBefore Error
## 
##          Thread: 2    
##           Blame: Test.x_I
##           Count: 1    (max: 100)
##     Guard State: [(0:0) (1:1)]
##           Class: Test
##           Field: null.Test.x_I
##           Locks: []
##         Prev Op: read-by-thread-1
##      Prev Op CV: [(0:0) (1:1)]
##          Cur Op: write
##       Cur Op CV: [(0:1) (1:0) (2:1)]
##           Stack: Use -stacks to show stacks...
## =====================================================================
## 
\end{verbatim}
}
\hrule
\vspace{0.5cm}

That is, the flags both the fields \texttt{Test.y} and \texttt{Test/x}.
However, in any execution in \texttt{Test.x} both written and read by different threads,
the read always occurs before the write (with a read on \texttt{y} separating them).

\subsection{ThreadSanitizer}

We run the C program in Fig~\ref{simple_c} on ThreadSanitizer (shipped with LLVM).

\vspace{0.5cm}
\hrule
{\small
\begin{verbatim}
==================
WARNING: ThreadSanitizer: data race (pid=76224)
  Read of size 4 at 0x0001030ba074 by thread T2:
    #0 Thread2(void*) simple_race.cc:13 (a.out:x86_64+0x100000d1e)

  Previous write of size 4 at 0x0001030ba074 by thread T1:
    #0 Thread1(void*) simple_race.cc:8 (a.out:x86_64+0x100000cc9)

  Location is global 'y' at 0x0001030ba074 (a.out+0x000100001074)

  Thread T2 (tid=747232, running) created by main thread at:
    #0 pthread_create <null>:1600736 (libclang_rt.tsan_osx_dynamic.dylib:x86_64h+0x283ed)
    #1 main simple_race.cc:27 (a.out:x86_64+0x100000df3)

  Thread T1 (tid=747231, finished) created by main thread at:
    #0 pthread_create <null>:1600736 (libclang_rt.tsan_osx_dynamic.dylib:x86_64h+0x283ed)
    #1 main simple_race.cc:26 (a.out:x86_64+0x100000dd4)

SUMMARY: ThreadSanitizer: data race simple_race.cc:13 in Thread2(void*)
==================
==================
WARNING: ThreadSanitizer: data race (pid=76224)
  Write of size 4 at 0x0001030ba070 by thread T2:
    #0 Thread2(void*) simple_race.cc:14 (a.out:x86_64+0x100000d3a)

  Previous read of size 4 at 0x0001030ba070 by thread T1:
    #0 Thread1(void*) simple_race.cc:8 (a.out:x86_64+0x100000cae)

  Location is global 'x' at 0x0001030ba070 (a.out+0x000100001070)

  Thread T2 (tid=747232, running) created by main thread at:
    #0 pthread_create <null>:1600736 (libclang_rt.tsan_osx_dynamic.dylib:x86_64h+0x283ed)
    #1 main simple_race.cc:27 (a.out:x86_64+0x100000df3)

  Thread T1 (tid=747231, finished) created by main thread at:
    #0 pthread_create <null>:1600736 (libclang_rt.tsan_osx_dynamic.dylib:x86_64h+0x283ed)
    #1 main simple_race.cc:26 (a.out:x86_64+0x100000dd4)

SUMMARY: ThreadSanitizer: data race simple_race.cc:14 in Thread2(void*)
==================
ThreadSanitizer: reported 2 warnings
\end{verbatim}
}
\hrule
\vspace{0.5cm}

ThreadSanitizer also reports a race on both global locations \texttt{x} and \texttt{y}.

\subsection{Helgrind}

Helgrind is a data race detector integrated with Valgrind.
We analyzed the C program in Fig~\ref{simple_c}.
\vspace{0.5cm}
\hrule
{\small
\begin{verbatim}
==2403== Helgrind, a thread error detector
==2403== Copyright (C) 2007-2015, and GNU GPL'd, by OpenWorks LLP et al.
==2403== Using Valgrind-3.12.0 and LibVEX; rerun with -h for copyright info
==2403== Command: ./a.out
==2403== 
==2403== ---Thread-Announcement------------------------------------------
==2403== 
==2403== Thread #3 was created
==2403==    at 0x596F30E: clone (in /usr/lib64/libc-2.17.so)
==2403==    by 0x4E41FD9: do_clone.constprop.4 (in /usr/lib64/libpthread-2.17.so)
==2403==    by 0x4E434C8: pthread_create@@GLIBC_2.2.5 (in /usr/lib64/libpthread-2.17.so)
==2403==    by 0x4C3064A: pthread_create_WRK (hg_intercepts.c:427)
==2403==    by 0x4C31728: pthread_create@* (hg_intercepts.c:460)
==2403==    by 0x400718: main (simple_race.cc:27)
==2403== 
==2403== ---Thread-Announcement------------------------------------------
==2403== 
==2403== Thread #2 was created
==2403==    at 0x596F30E: clone (in /usr/lib64/libc-2.17.so)
==2403==    by 0x4E41FD9: do_clone.constprop.4 (in /usr/lib64/libpthread-2.17.so)
==2403==    by 0x4E434C8: pthread_create@@GLIBC_2.2.5 (in /usr/lib64/libpthread-2.17.so)
==2403==    by 0x4C3064A: pthread_create_WRK (hg_intercepts.c:427)
==2403==    by 0x4C31728: pthread_create@* (hg_intercepts.c:460)
==2403==    by 0x4006F9: main (simple_race.cc:26)
==2403== 
==2403== ----------------------------------------------------------------
==2403== 
==2403== Possible data race during read of size 4 at 0x601044 by thread #3
==2403== Locks held: none
==2403==    at 0x4006A3: Thread2(void*) (simple_race.cc:13)
==2403==    by 0x4C3083E: mythread_wrapper (hg_intercepts.c:389)
==2403==    by 0x4E42E24: start_thread (in /usr/lib64/libpthread-2.17.so)
==2403==    by 0x596F34C: clone (in /usr/lib64/libc-2.17.so)
==2403== 
==2403== This conflicts with a previous write of size 4 by thread #2
==2403== Locks held: none
==2403==    at 0x40068E: Thread1(void*) (simple_race.cc:8)
==2403==    by 0x4C3083E: mythread_wrapper (hg_intercepts.c:389)
==2403==    by 0x4E42E24: start_thread (in /usr/lib64/libpthread-2.17.so)
==2403==    by 0x596F34C: clone (in /usr/lib64/libc-2.17.so)
==2403==  Address 0x601044 is 0 bytes inside data symbol "y"
==2403== 
==2403== ----------------------------------------------------------------
==2403== 
==2403== Possible data race during write of size 4 at 0x601040 by thread #3
==2403== Locks held: none
==2403==    at 0x4006AE: Thread2(void*) (simple_race.cc:14)
==2403==    by 0x4C3083E: mythread_wrapper (hg_intercepts.c:389)
==2403==    by 0x4E42E24: start_thread (in /usr/lib64/libpthread-2.17.so)
==2403==    by 0x596F34C: clone (in /usr/lib64/libc-2.17.so)
==2403== 
==2403== This conflicts with a previous read of size 4 by thread #2
==2403== Locks held: none
==2403==    at 0x400685: Thread1(void*) (simple_race.cc:8)
==2403==    by 0x4C3083E: mythread_wrapper (hg_intercepts.c:389)
==2403==    by 0x4E42E24: start_thread (in /usr/lib64/libpthread-2.17.so)
==2403==    by 0x596F34C: clone (in /usr/lib64/libc-2.17.so)
==2403==  Address 0x601040 is 0 bytes inside data symbol "x"
==2403== 
==2403== 
==2403== For counts of detected and suppressed errors, rerun with: -v
==2403== Use --history-level=approx or =none to gain increased speed, at
==2403== the cost of reduced accuracy of conflicting-access information
==2403== ERROR SUMMARY: 2 errors from 2 contexts (suppressed: 0 from 0)
\end{verbatim}
}

\hrule
\vspace{0.5cm}

Helgrind also incorrectly reports a race on both \texttt{x} and \texttt{y}.

\subsection{DRD}

Valgrind also provides another race detector DRD.
We analyze program in Fig~\ref{simple_c} in DRD:

\vspace{0.5cm}
\hrule
{\small
\begin{verbatim}
==2624== drd, a thread error detector
==2624== Copyright (C) 2006-2015, and GNU GPL'd, by Bart Van Assche.
==2624== Using Valgrind-3.12.0 and LibVEX; rerun with -h for copyright info
==2624== Command: ./a.out
==2624== 
==2624== Thread 3:
==2624== Conflicting load by thread 3 at 0x00601044 size 4
==2624==    at 0x4006A3: Thread2(void*) (simple_race.cc:13)
==2624==    by 0x4C30193: vgDrd_thread_wrapper (drd_pthread_intercepts.c:444)
==2624==    by 0x4E4FE24: start_thread (in /usr/lib64/libpthread-2.17.so)
==2624==    by 0x597C34C: clone (in /usr/lib64/libc-2.17.so)
==2624== Allocation context: BSS section of /home/umathur3/a.out
==2624== Other segment start (thread 2)
==2624==    (thread finished, call stack no longer available)
==2624== Other segment end (thread 2)
==2624==    (thread finished, call stack no longer available)
==2624== 
==2624== Conflicting store by thread 3 at 0x00601040 size 4
==2624==    at 0x4006AE: Thread2(void*) (simple_race.cc:14)
==2624==    by 0x4C30193: vgDrd_thread_wrapper (drd_pthread_intercepts.c:444)
==2624==    by 0x4E4FE24: start_thread (in /usr/lib64/libpthread-2.17.so)
==2624==    by 0x597C34C: clone (in /usr/lib64/libc-2.17.so)
==2624== Allocation context: BSS section of /home/umathur3/a.out
==2624== Other segment start (thread 2)
==2624==    (thread finished, call stack no longer available)
==2624== Other segment end (thread 2)
==2624==    (thread finished, call stack no longer available)
==2624== 
==2624== 
==2624== For counts of detected and suppressed errors, rerun with: -v
==2624== ERROR SUMMARY: 2 errors from 2 contexts (suppressed: 18 from 12)
\end{verbatim}
}

\hrule
\vspace{0.5cm}
Again, DRD reports two races, one of which is false.

\end{document}